\documentclass[aps,prx,reprint,superscriptaddress,nofootinbib,longbibliography]{revtex4-2}

\usepackage[T1]{fontenc}
\usepackage{lmodern}
\usepackage{amsmath,amssymb,amsthm,mathtools,bm}
\usepackage{booktabs,array}
\usepackage{capt-of}
\usepackage{graphicx}
\usepackage{microtype}
\usepackage{xcolor}
\usepackage{quantikz}
\usetikzlibrary{arrows.meta,positioning,calc}
\usepackage{enumitem}
\usepackage{placeins}
\usepackage[colorlinks=true,citecolor=blue,linkcolor=blue,urlcolor=blue]{hyperref}
\hypersetup{
pdftitle={A Single Fixed Shallow Circuit for Classical Shadows of Arbitrary n-Qubit States},
pdfauthor={Yu Wang and Xiuwu Zhu},
pdfsubject={Minimal global-observable shadows from one fixed shallow measurement circuit},
pdfkeywords={classical shadows, informationally complete POVM, SIC-POVM, Naimark dilation, quantum circuits}
}

\newtheorem{theorem}{Theorem}
\newtheorem{proposition}[theorem]{Proposition}
\newtheorem{lemma}[theorem]{Lemma}
\newtheorem{corollary}[theorem]{Corollary}
\theoremstyle{remark}

\newcommand{\C}{\mathbb C}
\newcommand{\F}{\mathbb F}

\newcommand{\Tr}{\operatorname{Tr}}
\newcommand{\id}{\mathbb I}
\newcommand{\abs}[1]{\left\lvert#1\right\rvert}
\newcommand{\norm}[1]{\left\lVert#1\right\rVert}
\newcommand{\discussionheading}[1]{\par\addvspace{0.6\baselineskip}\noindent{\normalfont\bfseries #1}\nobreak\enspace\ignorespaces
}
\providecommand{\ket}[1]{\lvert#1\rangle}
\providecommand{\bra}[1]{\langle#1\rvert}
\providecommand{\braket}[2]{\langle#1\vert#2\rangle}
\providecommand{\ketbra}[2]{\lvert#1\rangle\!\langle#2\rvert}

\begin{document}

\title{A Single Fixed Shallow Circuit for Classical Shadows of Arbitrary
\texorpdfstring{\(n\)}{n}-Qubit States}

\author{Yu Wang}
\email{ming-jing-happy@163.com}
\affiliation{Hetao Institute of Mathematics and Interdisciplinary Sciences, Shenzhen, Guangdong 518017, China}

\author{Xiuwu Zhu}
\affiliation{Beijing Key Laboratory of Topological Statistics and Applications for Complex Systems, Beijing Institute of Mathematical Sciences and Applications, Beijing 101408, China}

\date{September 6, 2026}

\begin{abstract}
Classical shadows extract many properties of quantum states from reusable
classical records, typically obtained using randomized measurement settings.
Although individual settings may be shallow, switching among them can
introduce control, calibration, and reconfiguration costs beyond conventional
circuit and sampling metrics.  Here we construct, for every $n$, a fixed
quantum analyzer whose Born outcomes replace externally sampled settings as
labels for reusable shadow snapshots.  The analyzer combines a freshly
prepared $n$-qubit fiducial register $A$ with parallel Bell readout of $A$ and
the unknown system $S$, producing one $2n$-bit record per copy.  The same
circuit realizes a rank-one minimal informationally complete measurement for
tomography: a single fixed setting replaces the $3^n$ local-Pauli measurement
settings commonly used for complete reconstruction, while retaining the
minimum $d^2$ outcomes with $d=2^n$.  The resulting Pauli-diagonal frame
admits an analytic inverse.  For fixed Hermitian observables, the Haar-averaged
conditional variance is bounded by a dimension-independent multiple of the
squared Hilbert--Schmidt norm of the centered observable, whereas the
state-uniform bound has a dimension-dependent coefficient.  For $n\geq3$,
worst-state Pauli variances remain bounded in the axial sector while growing
with dimension in the mixed sector.  Fiducial preparation is completed before
system contact: exact unitary preparation uses $n-1$ arbitrary two-qubit gates,
no work qubits beyond $A$, and logarithmic depth under all-to-all connectivity.
The unknown system undergoes only one parallel system--ancilla entangling
layer, followed by local Hadamards and readout.  Measurement-assisted
preparation achieves $O(1)$ adaptive quantum depth per attempt using $n+1$
extra qubits in the heralded route, or deterministic $O(1)$ adaptive quantum
depth using $O(n\log n)$ extra qubits, all counted beyond $A$.
These statistical and
hardware tradeoffs show that part of the measurement-setting randomness and
control complexity traditionally supplied shot by shot can instead be compiled
into the preparation and circuitry of a fixed, shallow, reusable quantum
analyzer.
\end{abstract}

\maketitle

\begin{figure*}[t!]
\centering
\includegraphics[width=0.80\textwidth]{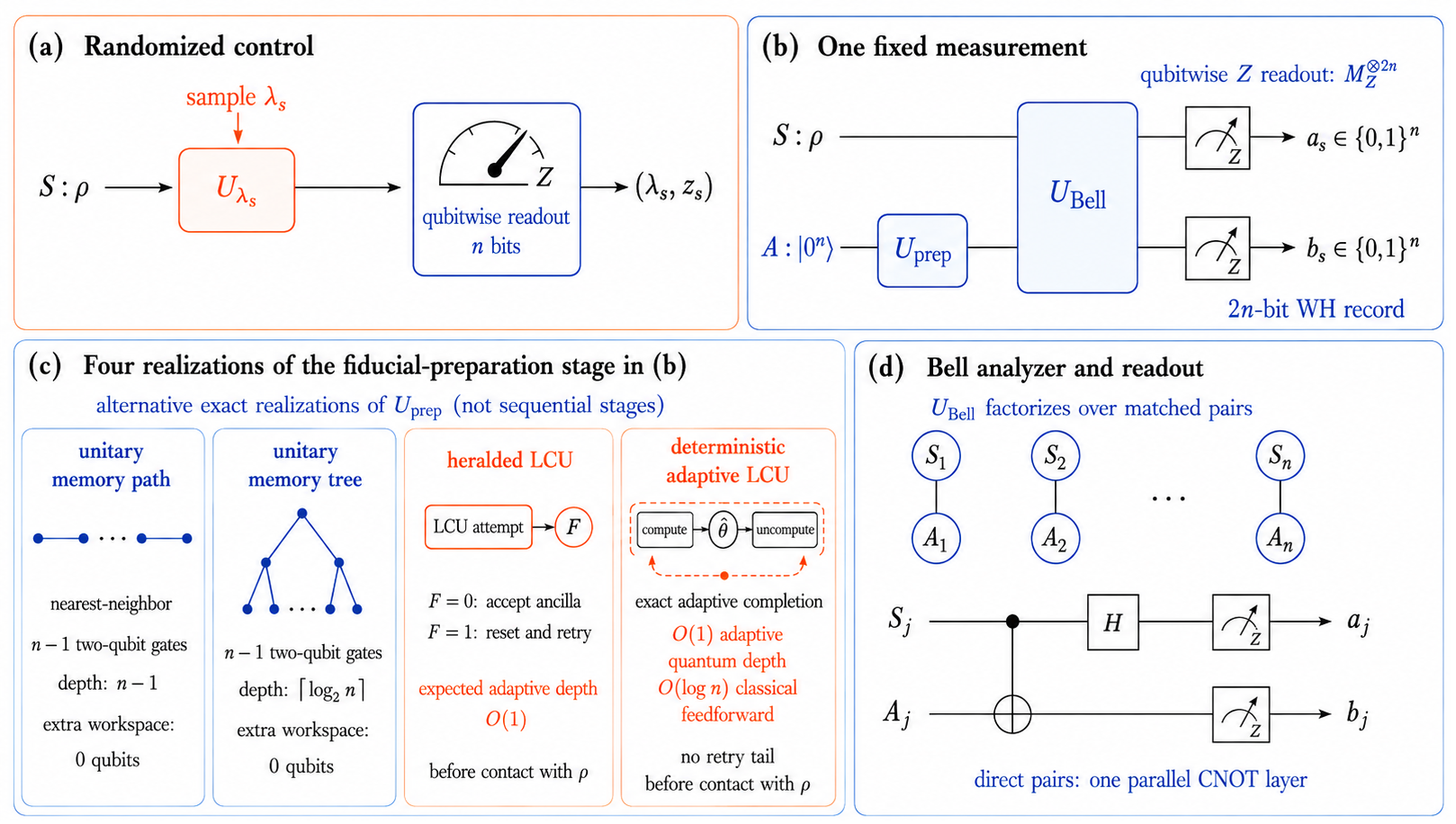}
\caption{Architecture and implementation hierarchy.
(a) Conventional randomized shadows sample a new external setting
$\lambda_s$ on every shot and store it together with the $n$-bit Born outcome
$\bm z_s$.  Local-Pauli shadows sample from $3^n$ bases, while global-Clifford
sampling uses an ensemble of size $2^{\Theta(n^2)}$.  More generally,
complete tomography of an arbitrary $n$-qubit state using only
ancilla-free orthonormal-basis measurements requires at least $2^n+1$ distinct
measurement bases.
(b) The fixed architecture instead repeats the same $U_{\rm prep}$ and
$U_{\rm Bell}$; qubitwise $Z$ readout returns $\bm a_s$ and $\bm b_s$, which
form one $2n$-bit WH record.
(c) The unitary memory path, unitary memory tree, heralded LCU, and
deterministic adaptive LCU are four alternative exact realizations of
$U_{\rm prep}$, not consecutive stages.  Heralding retries before contact
with the unknown state, whereas deterministic adaptive LCU removes the retry
tail by measurement and feedforward.
(d) A matched analyzer applies $\operatorname{CNOT}_{S_j\to A_j}$, then
$H_{S_j}$ and two $Z$-basis readouts.  Direct matched couplers place all $n$
CNOTs in one parallel layer.  Sections~\ref{sec:bell}
and~\ref{sec:fiducial-preparation} prove the induced POVM and resource claims.}
\label{fig:main}
\end{figure*}

\section{Introduction}

Extracting useful information from large quantum systems is a central
challenge in quantum information science.  Meeting it requires both
statistical efficiency and measurement architectures that can be
implemented and controlled at scale
\cite{EisertCertification2020,KlieschRoth2021}.  Classical shadows provide a powerful
framework: repeated measurements on independent copies produce reusable
classical records from which many observables can be estimated after data
acquisition \cite{Huang2020}.  In ensemble-based realizations, each record
combines an externally selected setting \(\lambda\) with a Born outcome
\cite{Elben2023}.  The setting specifies a basis, circuit instance, or
evolution parameter supplied through experimental control; the outcome is
generated by the quantum measurement.  Physically realizing those settings
can require basis switching, pulse reconfiguration, and calibration.
Measurement-setting control is therefore an experimental resource not fully
captured by qubit count, gate count, circuit depth, or sample complexity.
Can part of this randomness and control be compiled into a fixed, shallow,
reusable quantum analyzer?

On the statistical side, ensemble choice determines observable-dependent
sample costs, as quantified by shadow norms for global-Clifford and
local-Pauli measurements \cite{Huang2020}.  Ensemble symmetries make
reconstruction and bounds tractable \cite{Hu2023,Bu2024}, while target-aware
designs optimize sampling and schedules
\cite{Hadfield2022,HuangDerandom2021}, respect interaction-range constraints
\cite{XuPauli2026}, or exploit selected Pauli families
\cite{IppolitiEntangled2024,Wu2026} and fermionic structure
\cite{Wan2023}.  This design freedom extends beyond unitary ensembles to
generalized measurements \cite{Nguyen2022,GarciaPerez2021} and their
reconstruction maps \cite{Caprotti2024}.

Implementation costs involve quantum execution, classical reconstruction,
and experimental control.  Shallow circuits trade entangling depth against
variance \cite{Ippoliti2023,Bertoni2024}, and approximate unitary designs
give logarithmic-depth global-shadow guarantees with controlled bias and
variance corrections \cite{Schuster2025}.  Practical choices also depend
on scrambling and gate errors \cite{HuShallow2025}, platform-specific pulse and readout
imperfections \cite{Notarnicola2023}, and reconstruction costs addressed
by tensor networks for suitable finite-depth ensembles \cite{Akhtar2023}.
Hamiltonian dynamics can replace gate randomization
\cite{HuYou2022,ZhouZhang2024}, although a fixed Hamiltonian can retain a
sampled evolution-time label \cite{LiuHaoHu2024}.  Circuit reuse reduces
reconfiguration overhead by sharing settings across shots, with statistical
tradeoffs \cite{HelsenWalter2023}.  Reducing sampled-circuit depth does not
itself remove external setting selection and realization or their
platform-dependent calibration costs.

A complementary route encodes the measurement identity directly in the
outcome space.  In a fixed informationally complete (IC) positive
operator-valued measure (POVM), the Born outcome directly identifies the
effect and its dual-frame estimator \cite{Acharya2021,Innocenti2023};
fixed system--ancilla quenches already provide analog protocols for such
outcome-only acquisition \cite{Tran2023,McGinleyFava2023}.
An early circuit for single-qubit symmetric informationally complete (SIC)
tomography uses two auxiliary qubits \cite{Rehacek2004}.
Products of single-qubit SIC measurements offer a scalable fixed implementation with local circuits and
product reconstruction, demonstrated through eight qubits
\cite{Stricker2022,You2025}.  Their Pauli second moment is \(3^w\) at weight
\(w\), making high-weight Pauli targets costly to estimate.
A global SIC instead provides an isotropic minimal-IC benchmark with
\(d^2\) rank-one outcomes in dimension \(d\)
\cite{Renes2004,Scott2006}.  Exact SICs have been established
unconditionally in only finitely many dimensions; their existence for every
\(d=2^n\) remains open
\cite{ScottGrassl2010,Fuchs2017,Horodecki2022Open,ApplebyFlammiaKopp2025}.
Even a known fiducial need not admit shallow qubit preparation.
General realization tools include Naimark dilation
\cite{Neumark1940,Peres1990} and group-covariant constructions
\cite{DAriano2004,Decker2005}; probabilistic \cite{Singal2022} and adaptive
measurement schemes \cite{Cai2021,Ivashkov2024} offer alternative resource
tradeoffs.
A fixed analyzer therefore requires joint consideration of statistical
geometry and circuit realizability.
Table~\ref{tab:implementations} summarizes representative implementation
routes and their resource tradeoffs.

Here we construct a fixed analyzer for every \(n\), using a controlled
relaxation of exact SIC symmetry.  An \(n\)-qubit auxiliary register \(A\)
is prepared afresh for each copy in a known conjugate fiducial, before
contact with the unknown system \(S\).  Fixed, fiducial-independent Bell readout produces a
\(2n\)-bit Born outcome labeling the measurement effect and reusable shadow
snapshot.  With direct matched couplers, \(S\) undergoes one parallel
system--ancilla CNOT layer and local Hadamards before readout, with no
\(S\)--\(S\) multiqubit gates.  All fiducial-dependent preparation is confined
to \(A\).  Building on the general fiducial-plus-Bell interface
\cite{GuptaWeiss2025}, we supply explicit preparation and characterize its
statistical and circuit costs.  Figure~\ref{fig:main} contrasts this
interface with randomized-setting acquisition.

The analyzer uses the tensor-Pauli Weyl--Heisenberg (WH) family of
Ref.~\cite{ZhuWang2026}, which established its informational completeness
and projector-Gram spectrum.  It gives a rank-one minimal-IC measurement
with \(d^2=4^n\) outcomes for \(d=2^n\), correlated for \(n\geq2\).
We derive an analytic inverse of its Pauli-diagonal channel and exact
observable-estimation variances.  For any fixed Hermitian target \(O\),
writing \(O_0=O-\Tr(O)\id/d\), we prove
\begin{align*}
\mathbb E_{\psi\sim{\rm Haar}}
\operatorname{Var}_{\ketbra{\psi}{\psi}}[\widehat O]
&\leq 2(1+1/d)\Tr(O_0^2),\\
\sup_\rho\operatorname{Var}_\rho[\widehat O]
&\leq 2(d+1)\Tr(O_0^2).
\end{align*}
For \(n\geq3\), tight worst-state Pauli variances are \(O(1)\) for axial
strings containing only \(X/I\) or only \(Z/I\), and \(\Theta(d)\) for the
remaining nonidentity strings.  The bounded-cost axial sector spans every
weight, while local-Pauli and product-SIC shadows retain bounded costs for
fixed-weight mixed targets.  Global-Clifford shadows retain a stronger
dimension-independent state-uniform bound \cite{Huang2020}.
The same records also support minimal-IC tomography: one fixed setting
replaces the \(3^n\) standard local-Pauli settings, while the
\(d^2\)-outcome alphabet and the sample and storage costs of dense
reconstruction remain exponential.

Exact unitary fiducial preparation uses \(n-1\) arbitrary two-qubit gates
and no work qubits beyond \(A\).  A nearest-neighbor path has linear depth;
a balanced all-to-all schedule has asymptotically optimal two-qubit depth
\(\lceil\log_2n\rceil\).  Measurement-assisted alternatives, completed
before system contact, offer constant expected adaptive quantum depth
through heralding, or deterministic constant adaptive quantum depth with
additional auxiliaries (Sec.~\ref{sec:fiducial-preparation}).  Restricted
connectivity may add routing overhead.

\begin{table*}[t]
\caption{Fixed-analyzer implementation tradeoffs.  Here \(d\) is the
dimension of one encoded qudit, whereas \(n\) counts individually addressable
qubits.  A whole-POVM realization returns one of all outcomes from the same
per-copy device; an effect scan programs the effects separately.  The routes
are representative and need not be mutually exclusive.  The present
correlated family combines minimal IC outcomes and an analytic inverse with
an explicit all-\(n\) qubit circuit and, with direct matched couplers,
a single system-facing entangling layer.}
\label{tab:implementations}
\begin{ruledtabular}
\begingroup
\setlength{\tabcolsep}{4pt}
\renewcommand{\arraystretch}{1.13}
\footnotesize
\begin{tabular}{>{\raggedright\arraybackslash}p{3.05cm}
>{\raggedright\arraybackslash}p{2.85cm}
>{\raggedright\arraybackslash}p{4.75cm}
>{\raggedright\arraybackslash}p{6.10cm}}
implementation route & dimension / scope & map applied to each copy
& circuit implication\\\hline
effect-by-effect photonics \cite{Bent2015}
& SIC projectors and tomography at \(d=6,10\)
& \textbf{Effect scan:} program one SIC projector at a time and combine data
from \(d^2\) settings
& one encoded qudit; programmable projection measurements acquired
separately and combined for reconstruction\\
fixed single-qudit dilations
\cite{Tabia2012,Medendorp2011,Feng2025,Yan2026,Zhao2015,Bian2015,WangWalk2023}
& fixed multioutcome SICs at \(d=2,3,4\); quantum-walk SICs at \(d=2,3\)
& \textbf{Fixed dilation:} the same loop, multiport, or walk acts on every
copy; its terminal port or position is the outcome
& resources are expressed in native modes, ports, or walk steps;
translation to individually addressable qubits depends on the encoding\\
product qubit SIC \cite{Rehacek2004,Stricker2022,You2025,Fischer2022}
& demonstrated through \(n=8\); tensor-product minimal IC for every \(n\), with an explicit product inverse
& \textbf{Fixed local product measurement:} \(n\) one-qubit tetrahedral-SIC
circuits act independently and in parallel
& tensoring the optimized one-qubit ancilla circuit \cite{You2025} gives \(n\) disjoint
CNOTs in one layer; higher-level embeddings offer an alternative
\cite{Fischer2022}.  Product structure gives Pauli second moment
\(3^w\) at weight \(w\)\\
adaptive generalized measurements \cite{Cai2021,Ivashkov2024,AnderssonOi2008}
& general synthesis; SIC demonstrations on an encoded \(d=4\) qudit
\cite{Cai2021} and on up to two qubits \cite{Ivashkov2024}
& \textbf{Adaptive realization:} ancilla readout and reset, with
outcome-conditioned operations; hybrid schemes terminate in a Naimark block
& a complete logical POVM with a reused auxiliary qubit; sequential
control trades auxiliary width for measurement and feedback, with
target- and platform-dependent gate costs\\
\textbf{present tensor-Pauli family} \cite{ZhuWang2026}
& \textbf{every \(n\)}; rank-one minimal IC, correlated for \(n\geq2\), with
exact SICs at \(n=1,3\)
& \textbf{Fixed whole-POVM unitary analyzer:} prepare \(\ket{\phi_n^*}\) and
apply matched Bell readout; analytic Pauli-diagonal inverse
& \textbf{System interface:} one parallel CNOT layer, no \(S\)--\(S\) gates.
Balanced all-to-all total for \(n\geq2\): \(3n-3\) CNOTs, CNOT depth
\(2\lceil\log_2n\rceil\), before routing; at \(n=1\), one Bell CNOT\\
\end{tabular}
\endgroup
\end{ruledtabular}
\end{table*}

Together, these results show how auxiliary width and preparation can be
traded for reduced setting control while keeping statistical costs explicit.

\section{One fixed minimal measurement and its reusable record}
\label{sec:fixed-shadows}

\subsection{The fixed Weyl--Heisenberg minimal IC measurement}

For any $n$-qubit system, the Hilbert-space dimension is $d=2^n$.  We label
each Weyl--Heisenberg displacement by two bit strings
\(\bm a=(a_1,\ldots,a_n)\) and
\(\bm b=(b_1,\ldots,b_n)\) in \(\F_2^n\).  Up to an irrelevant Pauli phase,
define

\begin{equation}
D_{\bm a,\bm b}=Z(\bm a)X(\bm b)
=\bigotimes_{j=1}^n Z_j^{a_j}X_j^{b_j}.
\label{eq:field-to-pauli}
\end{equation}

The concrete fiducial used throughout is
\begin{align}
\ket{\phi_n}&=
\frac{\ket+^{\otimes n}+e^{i\theta_n}\ket{0^n}}{\sqrt{\mathcal N_n}},
&\mathcal N_n&=2+2^{1-n/2}\cos\theta_n, \label{eq:fid}\\
\theta_1&=5\pi/12,\qquad \theta_2=2\pi/3,
&\theta_n&=3\pi/4\quad(n\geq3).\nonumber
\end{align}

Its tensor-Pauli orbit defines

\begin{equation}
\Pi_{\bm a,\bm b}^{(n)}
=D_{\bm a,\bm b}\ketbra{\phi_n}{\phi_n}D_{\bm a,\bm b}^\dagger,
\qquad E_{\bm a,\bm b}^{(n)}=\frac1d\Pi_{\bm a,\bm b}^{(n)}.
\label{eq:fixed-povm}
\end{equation}

The Pauli twirl gives
\(\sum_{\bm a,\bm b}E_{\bm a,\bm b}^{(n)}=\id\), so the positive rank-one
operators in Eq.~\eqref{eq:fixed-povm} form a POVM.  Appendix~\ref{app:povm-normalization}
gives a direct Pauli-basis proof of this normalization.  We denote this fixed
POVM by
\begin{equation}
\mathsf E^{(n)}
:=\left\{E_{\bm a,\bm b}^{(n)}:
\bm a,\bm b\in\F_2^n\right\}.
\label{eq:fixed-measurement-set}
\end{equation}

Normalization alone does not imply informational completeness.  The orbit in
Eq.~\eqref{eq:fixed-povm} is the physical-qubit tensor-Pauli form of the
characteristic-two finite-field WH family studied in
Ref.~\cite{ZhuWang2026}, which establishes informational completeness and
determines its projector-Gram spectrum.  Since the POVM has exactly \(d^2\)
outcomes, it is minimal IC.  Section~\ref{sec:power-two} and
Appendices~\ref{app:pauli-channel-proof} and \ref{app:all-qubit-maps} provide
an independent tensor-Pauli derivation.

As a separate consequence, the equal-weight rank-one projectors form an
operator basis, so this is also a balanced informationally complete (BIC)
POVM.  In the associated Bell protocol, maximal violation can certify the
maximal $2\log_2d=2n$ device-independent private random bits
\cite{Farkas2026}.  The circuit developed here supplies the fixed measurement
component of that protocol.

Later we analyze how this one fixed POVM performs as an alternative to the
exponentially large ensembles of randomized measurement circuits used in
standard classical shadows and to the $3^n$ local-Pauli settings used in
conventional tomography.

\subsection{One outcome, one canonical snapshot}

Measuring one copy of an $n$-qubit state $\rho$ with $\mathsf E^{(n)}$ returns
a $2n$-bit outcome label $u=(\bm a,\bm b)$.  From now on, write
\(E_u:=E_{\bm a,\bm b}^{(n)}\) and
\(\Pi_u:=\Pi_{\bm a,\bm b}^{(n)}\).  To reuse this record for either shadow
estimation or tomography, we assign to it a matrix-valued estimator.  The
outcome occurs with Born probability
$p_\rho(u)=\Tr(\rho E_u)=d^{-1}\Tr(\rho\Pi_u)$.  If we associate the raw
projector $\Pi_u$ with this outcome, its average over repeated shots is
\begin{equation*}
\mathbb E_\rho[\Pi_u]
=\sum_u p_\rho(u)\Pi_u
=\mathcal M(\rho),
\end{equation*}
where the normalized frame channel of the POVM is

\begin{equation}
\mathcal M(A)=\frac1d\sum_u\Tr(\Pi_uA)\Pi_u.
\label{eq:fixed-frame}
\end{equation}

Thus $\mathcal M$ maps the input state $\rho$ to the average raw estimator
$\mathbb E_\rho[\Pi_u]$.  Minimal informational completeness makes
$\mathcal M$ invertible, so we correct each outcome by defining the canonical
snapshot
\begin{equation}
\widehat\rho_u=\mathcal M^{-1}(\Pi_u).
\label{eq:canonical-snapshot}
\end{equation}
Since $\mathcal M(\rho)=\sum_u p_\rho(u)\Pi_u$, the object inverted for each
outcome is $\Pi_u$.  Equivalently,
\begin{equation*}
\widehat\rho_u=d\,\mathcal M^{-1}(E_u).
\end{equation*}
When the measurement outcome $u$ is drawn according to the Born probabilities
$p_\rho(u)$, the corrected snapshot is unbiased:

\begin{equation}
\mathbb E_{u\sim p_\rho}[\widehat\rho_u]
=\sum_up_\rho(u)\mathcal M^{-1}(\Pi_u)
=\mathcal M^{-1}[\mathcal M(\rho)]=\rho.
\label{eq:fixed-shadow}
\end{equation}
Thus every Born outcome, after the fixed correction $\mathcal M^{-1}$, gives
an unbiased one-shot estimator of the full state, although it need not be
positive semidefinite.  Averaging these snapshots gives linear tomography,
whereas contracting them with selected observables gives classical-shadow
estimates.  Section~\ref{sec:power-two} evaluates the correction map
explicitly.

\subsection{Two estimators from the same records}

\paragraph{Observable prediction.}
For each Hermitian target $O$, one evaluates only the scalar
\begin{equation}
\widehat O(u)=\Tr(O\widehat\rho_u),
\qquad
\widehat O_N=\frac1N\sum_{s=1}^N\widehat O(u_s).
\label{eq:observable-estimator}
\end{equation}
Equation~\eqref{eq:fixed-shadow} gives
$\mathbb E_\rho\widehat O_N=\Tr(\rho O)$.  This is the classical-shadow route:
many observables can be evaluated from the same records without first
constructing a dense estimate of $\rho$.

\paragraph{Linear tomography.}
Let $N_u$ be the number of times outcome $u$ occurs and $f_u=N_u/N$ its
empirical frequency.  Averaging the same snapshots gives the closed-form
linear-inversion estimator
\begin{equation}
\widehat\rho_N
=\frac1N\sum_{s=1}^N\widehat\rho_{u_s}
=\sum_u f_u\,\mathcal M^{-1}(\Pi_u),
\qquad \mathbb E_\rho\widehat\rho_N=\rho.
\label{eq:linear-tomography}
\end{equation}
The two expressions are identical: one averages record-by-record snapshots,
or equivalently applies the dual frame to the empirical POVM probabilities.
Because the POVM is minimal IC, this unbiased linear inverse is unique.  For
an exact SIC, tightness reduces each snapshot to the familiar form
$\widehat\rho_u=(d+1)\Pi_u-\id$ \cite{Renes2004,Scott2006,ZhuEnglert2011}.
Our WH POVM is generally non-tight: different Pauli components are rescaled by
different inverse eigenvalues, rather than by a single universal SIC factor.
Its Pauli-basis inverse and the closed-form dual for our fiducial are derived
below.

\section{Measurement channel and canonical reconstruction}
\label{sec:power-two}

For the fiducial in Eq.~\eqref{eq:fid}, write
\(\mathcal M_{\phi_n}\equiv\mathcal M\) for the frame channel in
Eq.~\eqref{eq:fixed-frame}.  The Pauli-basis diagonalization below uses only
tensor-Pauli covariance; the chosen fiducial enters through its Pauli
expectation values.  Each outcome \(u\) is converted into the canonical
snapshot \(\widehat\rho_u=\mathcal M_{\phi_n}^{-1}(\Pi_u)\).  We evaluate this
map before discussing either its statistical performance or its physical
implementation.

\subsection{Measurement channel in the Pauli basis}

Let \(P_v\), with \(v=(\bm c,\bm e)\), denote the Hermitian tensor-Pauli
whose \(j\)th factor is \(I,X,Z,Y\) for
\((c_j,e_j)=(0,0),(0,1),(1,0),(1,1)\), respectively.  It differs from
\(D_{\bm c,\bm e}\) in Eq.~\eqref{eq:field-to-pauli} only by a phase.  These
operators satisfy
\(P_v^\dagger=P_v\) and
\(\Tr(P_vP_w)=d\delta_{v,w}\).  For each \(P_v\), define its expectation value
in the fiducial state \(\lvert\phi_n\rangle\) by
\begin{equation}
\chi_v:=\langle\phi_n|P_v|\phi_n\rangle .
\label{eq:fiducial-response}
\end{equation}
Because \(P_v\) is Hermitian, every \(\chi_v\) is real.

Pauli-diagonal reconstruction for Pauli-invariant unitary ensembles was
established by Bu et al.~\cite{Bu2024}.  We derive the corresponding form
for the fixed Pauli-covariant POVM below.

\begin{theorem}[Measurement channel and its inverse in the Pauli basis]
\label{thm:pauli-diagonal-reconstruction}
For every tensor-Pauli operator \(P_v\),
\begin{equation}
\mathcal M_{\phi_n}(P_v)=|\chi_v|^2P_v.
\label{eq:pauli-channel-eigenvalue}
\end{equation}
Consequently, for any operator \(A\), the channel has the first form below.
If every \(\chi_v\) is nonzero, then
\(\mathcal M_{\phi_n}\) is invertible and its inverse is the second:
\begin{align}
\mathcal M_{\phi_n}(A)
&=\frac1d\sum_v|\chi_v|^2\Tr(P_vA)P_v,\nonumber\\
\mathcal M_{\phi_n}^{-1}(A)
&=\frac1d\sum_v
\frac{\Tr(P_vA)}{|\chi_v|^2}P_v.
\label{eq:pauli-channel-closed-form}
\end{align}
\end{theorem}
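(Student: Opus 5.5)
The plan is to compute \(\mathcal M_{\phi_n}(P_v)\) directly from Eq.~\eqref{eq:fixed-frame}, using only the fact that the frame is the tensor-Pauli orbit of \(\ketbra{\phi_n}{\phi_n}\). Write \(\Phi=\ketbra{\phi_n}{\phi_n}\) and \(\Pi_u=D_u\Phi D_u^\dagger\). For any Pauli displacement we have \(D_u^\dagger P_v D_u=\omega(u,v)P_v\), where \(\omega(u,v)=\pm1\) is the symplectic commutation sign; the phases of \(D_u\) cancel in the conjugation. Hence
\[
\Tr(\Pi_uP_v)=\Tr(\Phi D_u^\dagger P_vD_u)=\omega(u,v)\chi_v .
\]
Substituting this into the frame channel gives
\[
\mathcal M_{\phi_n}(P_v)=\frac{\chi_v}{d}\sum_u\omega(u,v)\,D_u\Phi D_u^\dagger .
\]

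Next, expand \(\Phi\) in the Pauli basis as \(\Phi=\frac1d\sum_w\chi_wP_w\), which holds by Hilbert--Schmidt orthogonality \(\Tr(P_vP_w)=d\delta_{vw}\). Conjugation gives \(D_uP_wD_u^\dagger=\omega(u,w)P_w\). The sum therefore becomes
\[
\mathcal M_{\phi_n}(P_v)=\frac{\chi_v}{d^2}\sum_w\chi_w\Big(\sum_u\omega(u,v)\omega(u,w)\Big)P_w .
\]
The key step is the character orthogonality \(\sum_u\omega(u,v)\omega(u,w)=d^2\delta_{v,w}\). This follows because \(u\mapsto\omega(u,v)\omega(u,w)=\omega(u,v+w)\) is a character of \(\F_2^{2n}\), and it is trivial only when \(v=w\), since the symplectic form is nondegenerate. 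Concretely, \(\omega(u,v)=(-1)^{\bm a\cdot\bm e+\bm b\cdot\bm c}\) for \(u=(\bm a,\bm b)\), \(v=(\bm c,\bm e\)), so the sum factorizes qubit by qubit. This collapses the sum to \(\chi_v\chi_vP_v=|\chi_v|^2P_v\), using that \(\chi_v\) is real. That gives Eq.~\eqref{eq:pauli-channel-eigenvalue}.

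The closed forms follow by linearity. Any operator satisfies \(A=\frac1d\sum_v\Tr(P_vA)P_v\), and applying \(\mathcal M_{\phi_n}\) term by term gives the first line. If every \(\chi_v\neq0\), the map defined by the second line multiplies each \(P_v\) by \(|\chi_v|^{-2}\). It is therefore a two-sided inverse on the Pauli basis, and hence on all operators. This also shows \(\mathcal M_{\phi_n}\) is diagonal with eigenvalues \(|\chi_v|^2\), so invertibility is equivalent to all \(\chi_v\) being nonzero.

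The only real care needed is the phase and sign bookkeeping between \(D_{\bm a,\bm b}=Z(\bm a)X(\bm b)\) and the Hermitian \(P_v\). Specifically, I need to verify that \(D_u^\dagger P_vD_u=\omega(u,v)P_v\) holds with a real sign independent of the phase conventions, and that the character sum runs over all \(d^2\) labels. Both follow from single-qubit checks followed by tensoring, which I would carry out explicitly to avoid a mismatch between \(\omega(u,v)\) and \(\omega(v,u)\). The two agree here because the signs are \(\pm1\).
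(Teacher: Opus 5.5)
Your proposal is correct and follows essentially the same route as the paper's proof in Appendix~\ref{app:pauli-channel-proof}. Both arguments obtain $\Tr(\Pi_uP_v)=s_v(u)\chi_v$, Pauli-expand the orbit projectors, collapse the double sum with the character orthogonality $\sum_u s_v(u)s_w(u)=d^2\delta_{v,w}$, and then extend to arbitrary $A$ by linearity, with invertibility equivalent to all $\chi_v\neq0$; the only cosmetic difference is that you get the overlap by cyclicity of the trace rather than from the conjugated expansion of $\Pi_u$.
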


Pauli covariance reduces all \(d^2\) snapshots to one reference dual.  With
\(D_u:=D_{\bm a,\bm b}\) and \(\Pi_0:=\ketbra{\phi_n}{\phi_n}\), define
\begin{align}
F_n&:=\mathcal M_{\phi_n}^{-1}(\Pi_0)
=\frac1d\sum_v\frac1{\chi_v}P_v,
\label{eq:fixed-dual}\\
\widehat\rho_u
&=\mathcal M_{\phi_n}^{-1}(\Pi_u)=D_uF_nD_u^\dagger.
\label{eq:fixed-dual-covariance}
\end{align}
Thus one computes the inverse only for the fiducial projector \(\Pi_0\);
conjugation by the observed displacement \(D_u\) gives every snapshot.
Appendix~\ref{app:pauli-channel-proof} proves the Pauli diagonalization, the
channel and inverse formulas, and the covariance used above.  Equation~\eqref{eq:pauli-channel-closed-form}
therefore reduces the inverse to coefficientwise scalar division; no
\(d^2\times d^2\) matrix is inverted.

\subsection{Closed-form dual for the chosen fiducial}

Equation~\eqref{eq:fixed-dual-covariance} reduces all \(d^2\) reconstruction
rules to one reference dual \(F_n\).  For \(n\geq3\), set
\(r_d:=\sqrt{2/d}\), \(\kappa_d:=1-r_d\), and
\(z_n:=e^{3\pi i/4}\).  Equation~\eqref{eq:fid} then gives
\(\mathcal N_n=2-r_d\).

\begin{theorem}[Closed-form canonical dual]
\label{thm:structured-reconstruction}
For the fiducial family in Eq.~\eqref{eq:fid}, the two exceptional reference
duals are
\begin{align}
F_1&=3\Pi_0-\id,\nonumber\\
F_2&=6\Pi_0+3\Pi_0^{\mathsf T}-2\id,
\end{align}
where the transpose is taken in the computational basis.  For \(n\geq3\),
the reference dual is
\begin{align}
F_n={}&\frac{\mathcal N_n}{r_d\kappa_d}
(\ketbra{+^n}{+^n}+\ketbra{0^n}{0^n})\nonumber\\
&+\frac{\mathcal N_n}{r_d^2}\left(
z_n^*\ketbra{+^n}{0^n}+z_n\ketbra{0^n}{+^n}\right)
-\frac{r_d}{\kappa_d}\id.
\label{eq:explicit-fixed-dual}
\end{align}
For every \(n\) and every outcome \(u\), the canonical snapshot is
\(\widehat\rho_u=D_uF_nD_u^\dagger\).
\end{theorem}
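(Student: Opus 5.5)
Since Theorem~\ref{thm:pauli-diagonal-reconstruction} and Eqs.~\eqref{eq:fixed-dual}--\eqref{eq:fixed-dual-covariance} already give \(\widehat\rho_u=D_uF_nD_u^\dagger\) with \(F_n=\frac1d\sum_v\chi_v^{-1}P_v\), the last sentence of the statement is immediate. What remains is to identify \(F_n\) in closed form. Rather than summing \(\chi_v^{-1}P_v\) directly, I would verify each claimed \(F_n\) by checking its Pauli coefficients. By Eq.~\eqref{eq:fixed-dual}, the operator \(F_n\) is uniquely characterized by
\begin{equation*}
\Tr(P_vF_n)=1/\chi_v \quad\text{for all } v.
\end{equation*}
So for each candidate, built from \(\ketbra{+^n}{+^n}\), \(\ketbra{0^n}{0^n}\), the cross terms and \(\id\), I compute \(\Tr(P_v\,\cdot\,)\) and compare it with \(1/\chi_v\).

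\textbf{Computing the fiducial responses.} Expanding \(\ket{\phi_n}\), each string \(P_v\) splits by its \(X/Z/Y\) content. We have \(\bra{+^n}P_v\ket{+^n}=1\) iff \(P_v\) is an \(X/I\) string, and \(\bra{0^n}P_v\ket{0^n}=1\) iff \(P_v\) is a \(Z/I\) string; both vanish otherwise. The cross term \(\bra{+^n}P_v\ket{0^n}\) is a product of single-qubit factors \(\bra+\sigma\ket0\), with values \(1/\sqrt2\) for \(\sigma=I,X,Z\) and \(i/\sqrt2\) for \(Y\). Hence it equals \(2^{-n/2}i^{\#Y}\). Combining these,
\begin{equation*}
\mathcal N_n\chi_v=[v\in X]+[v\in Z]+2\cdot 2^{-n/2}\,\mathrm{Re}\!\left(e^{i\theta_n}i^{\#Y}\right).
\end{equation*}
This gives four classes: \(v=0\) (where \(\chi=1\)); nonidentity \(X/I\) strings; nonidentity \(Z/I\) strings; and mixed strings. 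For mixed strings, \(\chi\) depends only on \(\#Y \bmod 4\). For \(n\geq3\), \(\theta_n=3\pi/4\) makes \(\mathrm{Re}(z_n i^k)=\pm1/\sqrt2\), so all \(\chi_v\neq0\) and \(\mathcal N_n\chi_v=1\pm 2^{-n/2}\sqrt2\) or \(\pm r_d\). Note that \(2\cdot2^{-n/2}/\sqrt2=r_d\).

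\textbf{Matching the candidate for \(n\geq3\).} I compute \(\Tr(P_vF_n)\) for the right-hand side of Eq.~\eqref{eq:explicit-fixed-dual} using the same three matrix elements:
\begin{equation*}
\Tr(P_v\ketbra{0^n}{+^n})=\bra{+^n}P_v\ket{0^n}=2^{-n/2}i^{\#Y},
\end{equation*}
and the identity term contributes \(-d\,(r_d/\kappa_d)\,[v=0]\). For a mixed string, only the cross terms survive. This gives \(\frac{\mathcal N_n}{r_d^2}2^{-n/2}\cdot2\,\mathrm{Re}(z_n i^{\#Y})\), which should equal \(\mathcal N_n/(\pm r_d)\); I would check this using \(2^{-n/2}\sqrt2=r_d\). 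For an axial string, the diagonal coefficient \(\mathcal N_n/(r_d\kappa_d)\) plus the cross contribution must equal \(\mathcal N_n/(1\pm r_d)\). Here the sign of the real part flips with \(\#Y\); an axial string has \(\#Y=0\), so \(\mathrm{Re}(z_n)=-1/\sqrt2\) and \(\kappa_d=1-r_d\) should appear. Finally, \(v=0\) fixes the identity coefficient via \(\Tr F_n=1\). The main obstacle is bookkeeping signs: \(\chi\) for mixed strings alternates with \(\#Y\bmod4\), and the cross term must reproduce \(1/\chi\) exactly rather than \(\chi\). This works only because \(|\mathrm{Re}(z_ni^k)|\) is constant, which is the reason for the choice \(\theta_n=3\pi/4\). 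I would tabulate the four residues of \(\#Y\) explicitly.

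\textbf{The cases \(n=1,2\).} For \(n=1\), the fiducial is an exact qubit SIC, so all \(|\chi_v|^2=1/3\) for \(v\neq0\). Then \(\mathcal M^{-1}\) is \(3\)(identity) minus a trace term, giving \(F_1=3\Pi_0-\id\). Equivalently, one checks that \(\Tr(P_v(3\Pi_0-\id))=3\chi_v=1/\chi_v\), using \(\chi_v^2=1/3\) from the responses formula with \(\theta_1=5\pi/12\). For \(n=2\), I would enumerate the \(16\) values of \(\chi_v\) with \(\theta_2=2\pi/3\). Note that \(\Tr(P_v\Pi_0^{\mathsf T})=\bra{\phi_n}P_v^{\mathsf T}\ket{\phi_n}^*\cdot\)(conjugation sign) \(=(-1)^{\#Y}\chi_v\), since \(Y^{\mathsf T}=-Y\) and \(\chi_v\) is real. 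So the candidate gives \((6+3(-1)^{\#Y})\chi_v-8[v=0]\). It therefore suffices to check \(\chi_v^2=1/9\) for strings with even \(\#Y\) and \(\chi_v^2=1/3\) for odd \(\#Y\), which is a short finite check.
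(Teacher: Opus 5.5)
Your proposal is correct, and the checks you leave as ``I would verify'' all close. On a mixed string the cross term contributes $\frac{\mathcal N_n}{r_d^2}(\mathcal N_n\chi_v)$. This equals $1/\chi_v=\mathcal N_n/(\mathcal N_n\chi_v)$ precisely because $(\mathcal N_n\chi_v)^2=r_d^2$. On a nonidentity axial string ($\#Y=0$) you get $\frac{\mathcal N_n}{r_d\kappa_d}-\frac{\mathcal N_n}{r_d}=\frac{\mathcal N_n}{\kappa_d}=1/\chi_v$. At $v=0$ the cross term also contributes $-\mathcal N_n/r_d$, since $\braket{0^n}{+^n}=2^{-n/2}$. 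Using $d\,r_d=2/r_d$ and $\mathcal N_n=1+\kappa_d$, you then find $\Tr F_n=(1-\kappa_d)/r_d=1$. Two small points of presentation: for axial strings only the minus sign in your $1\pm r_d$ occurs, and the transpose step is cleanest written as $\Tr(P_v\Pi_0^{\mathsf T})=\Tr(P_v^{\mathsf T}\Pi_0)=(-1)^{\#Y}\chi_v$.

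For $n\geq3$ your route differs from the paper's, which derives $F_n$ rather than verifying it. The paper writes $\mathcal M_{\phi_n}^{-1}(A)=A_0+\alpha_d^{-1}A_{\mathsf{Ax}}+\beta_d^{-1}A_{\mathsf{Mix}}$ and extracts the axial and mixed parts of $\Pi_u$ using complete $Z$- and $X$-basis dephasing, which acts simply on the four product outer products of $\Pi_u$. Substituting yields the outcome-dependent snapshot, including the sign $(-1)^{\bm a\cdot\bm b}$, in one step; setting $u=0$ gives $F_n$. Your coefficientwise check is shorter and needs no sector projectors. However, it presupposes the answer and reaches general $u$ only through the covariance $\widehat\rho_u=D_uF_nD_u^\dagger$, whereas the paper's computation explains where the five-term form comes from. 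For $n=1,2$ your check is the paper's argument in Pauli coordinates; the paper instead states the operator identities $\mathcal M_{\phi_1}^{-1}(A)=3A-\Tr(A)\id$ and $\mathcal M_{\phi_2}^{-1}(A)=6A+3A^{\mathsf T}-2\Tr(A)\id$.

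One aside in your proposal is overstated. Constant $|\mathrm{Re}(z_n i^k)|$ is what makes the dual's cross coefficient parallel to $z_n$, not what makes a five-term form possible. $F_2$ is itself of that form, because $\Pi_0^{\mathsf T}=\overline{\Pi_0}$ has the same four outer products; its cross coefficient is simply not parallel to $z_2$. Also, the paper selects $\theta_n=3\pi/4$ by maximizing $\lambda_{\min}$, which happens to land on $|\cos\theta_n|=|\sin\theta_n|$.
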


At \(n=1\) and \(n=3\), the Pauli--WH orbits are the tetrahedral and Hoggar
SICs, respectively \cite{Renes2004,ZhuWang2026}.  In both cases the result
reduces to the standard SIC rule
\(\widehat\rho_u=(d+1)\Pi_u-\id\), giving \(3\Pi_u-\id\) and
\(9\Pi_u-\id\), respectively.

For \(u=(\bm a,\bm b)\), write
\(\ket{\bm b}:=X(\bm b)\ket{0^n}\).  Tensor-Pauli conjugation acts on the two
product branches as
\begin{equation}
D_u\ket+^{\otimes n}=Z(\bm a)\ket+^{\otimes n},
\qquad
D_u\ket{0^n}=(-1)^{\bm a\cdot\bm b}\ket{\bm b}.
\label{eq:product-branch-conjugation}
\end{equation}
For \(n\geq3\), Eq.~\eqref{eq:product-branch-conjugation} shows directly that
tensor-Pauli conjugation preserves the four rank-one product-state terms in
\(F_n\).  Hence every snapshot has the same five-term structure.
Appendix~\ref{app:all-qubit-maps} obtains the fiducial-specific sectorwise
inverse, and Appendix~\ref{app:closed-form-dual-proof} converts it into this
product form and its explicit outcome-dependent version.

\begin{samepage}
The product form also makes the role of coherence transparent.  Apart from
the identity label \((\bm0,\bm0)\), the Pauli directions split into the
axial and mixed sets
\begin{align}
\mathsf{Ax}
&:=
\underbrace{\{(\bm0,\bm e):\bm e\neq\bm0\}}_{X/I\text{-only}}
\nonumber\\
&\quad\cup
\underbrace{\{(\bm c,\bm0):\bm c\neq\bm0\}}_{Z/I\text{-only}},
\nonumber\\
\mathsf{Mix}
&:=
\underbrace{\{(\bm c,\bm e):\bm c\neq\bm0,\ \bm e\neq\bm0\}}
_{\text{mixed directions}}.
\label{eq:axis-mixed-label-sets}
\end{align}
\end{samepage}
Together with the identity, the axial sector contains \(2d-1\) Pauli
directions, while the mixed sector contains the remaining \((d-1)^2\).
The WH orbit of the equal-weight mixture
\(\tau_{\rm inc}=(\ketbra{+^n}{+^n}+\ketbra{0^n}{0^n})/2\) spans only the
identity and axial directions and therefore has frame rank \(2d-1\).  The two cross outer
products in the coherent fiducial projector give nonzero Pauli coefficients
in the remaining \((d-1)^2\) directions, completing the IC frame.
Section~\ref{sec:observable-estimation} quantifies the resulting
direction-dependent reconstruction cost.

\section{Observable-estimation performance}
\label{sec:observable-estimation}

\subsection{Spectral input and standard benchmarks}
\label{sec:spectral-benchmarks}

For canonical classical shadows constructed from an IC POVM, inversion of
the measurement frame enters the single-shot raw second moments and controls
standard variance and sample-complexity bounds
\cite{Nguyen2022,Innocenti2023}.  The weakest nonidentity frame direction is
therefore the relevant spectral quantity for such bounds.

For the Pauli--WH POVM generated by the fiducial \(\ket{\phi_n}\) in
Eq.~\eqref{eq:fid}, this amplification can be read directly from the
projector-Gram spectrum computed in Ref.~\cite{ZhuWang2026}.  Define
\(G^\Pi_{u,u'}:=\Tr(\Pi_u\Pi_{u'})\).  Pauli covariance associates direction
\(P_v\) with the projector-Gram eigenvalue
\(\lambda_v:=d|\chi_v|^2\) (Appendix~\ref{app:gram-connection}).  Thus
Eq.~\eqref{eq:pauli-channel-eigenvalue} becomes
\begin{align*}
\mathcal M_{\phi_n}(P_v)&=\frac{\lambda_v}{d}P_v,\\
\mathcal M_{\phi_n}^{-1}(P_v)
&=\frac d{\lambda_v}P_v=\frac1{|\chi_v|^2}P_v.
\end{align*}
Hence \(d/\lambda_v\) is the inverse-frame amplification of the Pauli
direction \(P_v\).  For a Pauli target, this factor is exactly the
state-independent raw second moment of the canonical estimator; for a general
target, the inverse spectrum enters through a quadratic form and controls
bounds rather than determining every variance by itself.  The largest
nonidentity inverse factor is
\(d/\lambda_{\min}\), where
\begin{equation}
\lambda_{\min}=\min_{v\ne0}\lambda_v,
\qquad
\eta_n=\frac{d+1}{d}\lambda_{\min}.
\label{eq:eta}
\end{equation}
Maximizing \(\lambda_{\min}\) therefore minimizes the worst Pauli-direction
amplification.  This gives the phase-selection principle for
Eq.~\eqref{eq:fid}: within its equal-weight two-product-state family, the
listed phases maximize \(\lambda_{\min}\) over the relative phase
(Appendix~\ref{app:full-spectrum}).  An exact global SIC has
\(\lambda_{\min}=d/(d+1)\), so \(\eta=1\) is the unit benchmark.

\begin{theorem}[Exact spectral floor \cite{ZhuWang2026}]
\label{thm:spectrum}
For the family in Eq.~\eqref{eq:fid},
\begin{equation}
\lambda_{\min}=
\begin{cases}
2/3,&n=1,\\
4/9,&n=2,\\
\displaystyle\frac{2}{(2-\sqrt{2/d})^2},&n\geq3.
\end{cases}
\label{eq:lambda-min-exact}
\end{equation}
Consequently, \((\eta_1,\eta_2,\eta_3)=(1,5/9,1)\), and for all \(n\),
\begin{equation}
\eta_n\geq\frac12,\qquad
\eta_n\longrightarrow\frac12.
\label{eq:floor}
\end{equation}
The exact finite-\(n\) multiplicities are listed in
Appendix~\ref{app:full-spectrum}.
\end{theorem}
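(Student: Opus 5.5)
Since $\lambda_v=d|\chi_v|^2$ with $\chi_v=\langle\phi_n|P_v|\phi_n\rangle$, the plan is to compute every fiducial Pauli expectation in closed form and then minimize over $v\neq0$. Write $P_v$ with $X$-part $\bm e$ and $Z$-part $\bm c$. Expand
\begin{align*}
\mathcal N_n\chi_v={}&\langle+^n|P_v|+^n\rangle+\langle0^n|P_v|0^n\rangle\\
&+2\,\mathrm{Re}\bigl(e^{i\theta_n}\langle+^n|P_v|0^n\rangle\bigr).
\end{align*}

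We then evaluate the three terms factor by factor.
\begin{itemize}
\item The first term is $1$ iff $\bm c=\bm0$ (an $X/I$ string) and $0$ otherwise.
\item The second term is $1$ iff $\bm e=\bm0$ (a $Z/I$ string) and $0$ otherwise.
\item For the cross term, each factor satisfies $\langle+|\sigma|0\rangle=1/\sqrt2$ for $\sigma=I,X,Z$ and $i/\sqrt2$ for $\sigma=Y$. The cross term is therefore $2^{-n/2}i^{y}$, where $y$ is the number of $Y$ factors.
\end{itemize}

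This gives a sector decomposition.
\begin{itemize}
\item On the axial set $\mathsf{Ax}$ (where $y=0$), $\chi_v=(1+2^{1-n/2}\cos\theta_n)/\mathcal N_n$.
\item On mixed directions with no $Y$ (only $X$ and $Z$ factors), $\chi_v=2^{1-n/2}\cos\theta_n/\mathcal N_n$.
\item With $y\ge1$, $\chi_v=2^{1-n/2}\mathrm{Re}(i^{y}e^{i\theta_n})/\mathcal N_n$. This takes the values $\mp\sin\theta_n$ or $\mp\cos\theta_n$ times $2^{1-n/2}/\mathcal N_n$, according to $y\bmod4$.
\end{itemize}
The nonidentity values of $|\chi_v|$ are therefore drawn from
$$\frac{|1+2^{1-n/2}\cos\theta_n|}{\mathcal N_n},\qquad
\frac{2^{1-n/2}|\cos\theta_n|}{\mathcal N_n},\qquad
\frac{2^{1-n/2}|\sin\theta_n|}{\mathcal N_n}.$$
For each $n$ we then check which values actually occur. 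A $Y$-count with $y\equiv1,3\pmod 4$ needs $n\ge1$, and one with $y\equiv2\pmod4$ needs $n\ge2$, so all three values occur once $n\ge2$. At $n=1$, $y\in\{0,1\}$ and mixed-no-$Y$ is empty, so only the axial value and the $\sin$ value appear.

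Next we plug in the phases and take the minimum of $d|\chi_v|^2$.
\begin{itemize}
\item For $n\ge3$, $\theta_n=3\pi/4$ gives $|\cos\theta_n|=|\sin\theta_n|=1/\sqrt2$ and $\mathcal N_n=2-r_d$. The two coherent values coincide at $r_d/(2-r_d)$, giving $\lambda=2/(2-r_d)^2$. The axial value gives $d(1-r_d)^2/(2-r_d)^2$, which is at least the coherent value iff $\sqrt{d}\,(1-r_d)\ge\sqrt2$, i.e. $\sqrt d\ge2\sqrt2$, i.e. $d\ge8$. This holds for $n\ge3$ with equality at $n=3$, consistent with the Hoggar SIC being tight, where all values equal $8/9$. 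Hence $\lambda_{\min}=2/(2-\sqrt{2/d})^2$.
\item For $n=1$, with $\theta_1=5\pi/12$, we check directly that both occurring values give $\lambda=2/3$ (the tetrahedral SIC).
\item For $n=2$, with $\theta_2=2\pi/3$, $\cos\theta_2=-1/2$ and $\mathcal N_2=3/2$. The axial value is $1/3$ and the $\cos$ value is $1/3$, both giving $\lambda=4/9$. The $\sin$ value is $\sqrt3/3$, giving $\lambda=4/3$. Hence $\lambda_{\min}=4/9$.
\end{itemize}

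Finally, $\eta_n=\frac{d+1}{d}\lambda_{\min}$ gives $(1,5/9,1)$ at $n=1,2,3$. For $n\ge3$ we have $2/(2-r_d)^2\ge1/2$ and $(d+1)/d>1$, so $\eta_n>1/2$, and $r_d\to0$ gives $\eta_n\to1/2$. For $n=1,2$ the values $1$ and $5/9$ already exceed $1/2$.

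The main obstacle I expect is the bookkeeping of the cross term. One has to handle the phase conventions (Hermitian $Y$ versus $ZX$) and the dependence on $y\bmod4$ correctly, and confirm which cases are realizable for small $n$. The $n=1$ phase $5\pi/12$ in particular requires a careful trigonometric check that the axial and $\sin$ branches balance to $2/3$.
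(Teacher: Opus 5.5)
Your proposal is correct and follows essentially the paper's route: it expands $\chi_v$ over the two product branches, splits the Pauli labels into axial and mixed sectors, substitutes the chosen phases and minimizes $\lambda_v=d|\chi_v|^2$, as Appendices~\ref{app:all-qubit-maps} and~\ref{app:full-spectrum} do. The only difference is cosmetic: the paper uses the phase-free displacement $Z(\bm c)X(\bm e)$, so the cross term is $d^{-1/2}[z_n(-1)^{\bm c\cdot\bm e}+z_n^*]$ and only the parity of the $Y$-count matters, which removes the $y \bmod 4$ bookkeeping you flagged as the main obstacle.
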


Thus, although the family is not a SIC except at \(n=1,3\), it retains at
least half the SIC spectral floor in every size.  Equivalently, its
worst-direction inverse-frame amplification is at most twice the SIC
benchmark:
\begin{equation}
\begin{array}{c|c}
\text{measurement}&d/\lambda_{\min}\\ \hline
\text{global SIC}&d+1\\
\text{product tetrahedral SIC}&3^n\\
\text{present global WH}&\leq2(d+1)
\end{array}
\label{eq:spectral-benchmark-table}
\end{equation}
The global-SIC and product-SIC values are standard
\cite{Stricker2022,You2025,Renes2004,Scott2006}.  The exact value of
\(\lambda_{\min}\) for the present construction is given in
Eq.~\eqref{eq:lambda-min-exact}.
The product construction is explicit for every \(n\), but its
weakest-direction amplification exceeds the global-SIC value by the factor
\(3^n/(d+1)\sim(3/2)^n\); equivalently,
\begin{equation}
\eta_{\rm prod}(n)=\frac{d+1}{3^n}\longrightarrow0.
\label{eq:product-floor}
\end{equation}
In contrast, the present WH family is explicit for every \(n\)-qubit system
and satisfies \(\eta_n\geq1/2\).  Its worst-direction inverse-frame
amplification is therefore at most \(1/\eta_n\leq2\) times the global-SIC
benchmark; this ratio equals one for \(n=1,3\) and approaches two as
\(n\to\infty\).  The comparison uses the global-SIC value only as a benchmark
and does not assume the existence of an exact SIC in every dimension \(d=2^n\)
\cite{Horodecki2022Open,ApplebyFlammiaKopp2025}.

\subsection{General Hermitian observables}

For a Hermitian target $O$, one record returns
$\widehat O(u)=\Tr(O\widehat\rho_u)$.

Writing $O_0=O-\Tr(O)\id/d$, define for the canonical snapshots the
state-dependent squared shadow norm and its worst-state counterpart by
\begin{equation*}
\norm{O_0}_{\mathrm{sh},\rho}^{2}
:=\mathbb E_\rho[\widehat O_0^{\,2}],
\qquad
\norm{O_0}_{\mathrm{sh}}^{2}
:=\sup_\rho\norm{O_0}_{\mathrm{sh},\rho}^{2},
\end{equation*}
where the supremum is over density operators.  Thus the shadow norm is a raw
second moment, while
\(\operatorname{Var}_\rho[\widehat O]
=\norm{O_0}_{\mathrm{sh},\rho}^{2}-\Tr(\rho O_0)^2\).

\begin{proposition}[General observable: average and worst-case variance]
\label{thm:general-observable-variance}
Let $O$ be any Hermitian observable and let $O_0$ be its traceless part
defined above.  Draw $\ket{\psi}$ from Haar measure on pure states and set
$\rho_\psi=\ketbra{\psi}{\psi}$.  For each fixed $\rho_\psi$,
$\operatorname{Var}_{\rho_\psi}[\widehat O]$ is the variance over the
measurement outcome.  Averaging this variance over the Haar-random choice of
$\ket{\psi}$ gives
\begin{equation}
\begin{aligned}
&\mathbb E_{\psi\sim\mathrm{Haar}}
\operatorname{Var}_{\rho_\psi}[\widehat O]\\
&\quad=\frac1d\Tr[O_0\mathcal M_{\phi_n}^{-1}(O_0)]
-\frac{\Tr(O_0^2)}{d(d+1)}\\
&\quad\leq2\left(1+\frac1d\right)\Tr(O_0^2).
\end{aligned}
\label{eq:haar-average-variance}
\end{equation}
Uniformly over all input states, the worst-case variance obeys
\begin{equation}
\sup_\rho\operatorname{Var}_\rho[\widehat O]
\leq2(d+1)\Tr(O_0^2).
\label{eq:worst-state-general-variance}
\end{equation}
\end{proposition}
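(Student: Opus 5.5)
We will work throughout in the Pauli basis of Theorem~\ref{thm:pauli-diagonal-reconstruction}. The identity component of \(O\) contributes the constant \(\Tr(O)/d\) to every snapshot, because \(\chi_0=1\) and \(\mathcal M_{\phi_n}^{-1}(\id)=\id\). Hence \(\operatorname{Var}_\rho[\widehat O]=\operatorname{Var}_\rho[\widehat O_0]\), and it suffices to treat the traceless \(O_0\).

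The key object is the raw second moment
\[
\mathbb E_\rho[\widehat O_0^{\,2}]=\frac1d\sum_u\Tr(\rho\Pi_u)\,\Tr(O_0\widehat\rho_u)^2 .
\]
Since \(\mathcal M_{\phi_n}^{-1}\) is self-adjoint with respect to the Hilbert--Schmidt inner product (it is diagonal in the Hermitian orthogonal basis \(P_v\) with real eigenvalues), we can write \(\Tr(O_0\widehat\rho_u)=\Tr(\tilde O\Pi_u)\) with \(\tilde O:=\mathcal M_{\phi_n}^{-1}(O_0)\). This gives
\[
\mathbb E_\rho[\widehat O_0^{\,2}]=\Tr\Bigl(\rho\,\tfrac1d\sum_u\Tr(\tilde O\Pi_u)^2\Pi_u\Bigr).
\]

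For the Haar average, we replace \(\rho\) by \(\mathbb E_\psi\rho_\psi=\id/d\). The second moment then becomes \(\frac1{d^2}\sum_u\Tr(\tilde O\Pi_u)^2=\frac1d\Tr[\tilde O\,\mathcal M_{\phi_n}(\tilde O)]=\frac1d\Tr[O_0\mathcal M_{\phi_n}^{-1}(O_0)]\), using Eq.~\eqref{eq:fixed-frame}. The subtracted term uses the standard identity \(\mathbb E_\psi\Tr(\rho_\psi O_0)^2=\Tr(O_0^2)/(d(d+1))\) for traceless \(O_0\), which follows from \(\mathbb E\rho_\psi^{\otimes2}=(\id+\mathrm{SWAP})/(d(d+1))\). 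Together these give the equality in Eq.~\eqref{eq:haar-average-variance}.

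For the Haar bound, we expand \(O_0=\frac1d\sum_{v\neq0}o_vP_v\), so that \(\Tr[O_0\mathcal M^{-1}(O_0)]=\frac1d\sum_{v\ne0}o_v^2/|\chi_v|^2\). Each factor satisfies \(1/|\chi_v|^2=d/\lambda_v\leq d/\lambda_{\min}\leq 2(d+1)\) by Theorem~\ref{thm:spectrum} and \(\eta_n\geq1/2\). This yields \(\frac1d\Tr[O_0\mathcal M^{-1}(O_0)]\leq 2(1+1/d)\Tr(O_0^2)\). Dropping the negative term then gives the stated inequality.

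For the worst-case bound, we use \(\operatorname{Var}_\rho\leq\mathbb E_\rho[\widehat O_0^{\,2}]\) together with \(\Tr(\rho\Pi_u)\leq1\). This gives
\[
\mathbb E_\rho[\widehat O_0^{\,2}]\leq\frac1d\sum_u\Tr(\tilde O\Pi_u)^2=\Tr[\tilde O\mathcal M(\tilde O)]=\Tr[O_0\mathcal M^{-1}(O_0)]\leq 2(d+1)\Tr(O_0^2),
\]
where the last step is the same spectral bound multiplied by \(d\). This proves Eq.~\eqref{eq:worst-state-general-variance}.

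The main obstacle is conceptual bookkeeping rather than computation: we must check that \(\mathcal M^{-1}\) is Hilbert--Schmidt self-adjoint and must pin down the normalization \(\Tr(\tilde O\mathcal M(\tilde O))=\frac1d\sum_u\Tr(\tilde O\Pi_u)^2\). We must also confirm that \(2(d+1)\) indeed bounds \(d/\lambda_{\min}\) in the exceptional case \(n=2\), where \(d/\lambda_{\min}=9\leq10\).
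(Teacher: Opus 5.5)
Your proposal is correct and follows essentially the same route as the paper's proof. Both arguments run as follows: the raw second moment is linear in $\rho$, so its Haar average is its value at $\id/d$; self-adjointness of $\mathcal M_{\phi_n}^{-1}$ and the frame identity reduce that value to $\frac1d\Tr[O_0\mathcal M_{\phi_n}^{-1}(O_0)]$, which the spectral floor $\eta_n\geq 1/2$ bounds, and the worst-case factor $d$ comes from $\Tr(\rho\Pi_u)\leq 1$, equivalent to the paper's $p_\rho(u)\leq d\,p_{\id/d}(u)$.
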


The distinction between the two bounds is already visible in the raw second
moment.  It is linear in the input state, so its Haar average equals its value
at \(\id/d\); for an arbitrary state, \(\rho\leq\id=d(\id/d)\) bounds it by
\(d\) times that value.
The proof is given in Appendix~\ref{app:general-observable-proof}.

For arbitrary Hermitian targets, including fully global ones, the Haar-input
average in Eq.~\eqref{eq:haar-average-variance} has a
dimension-independent coefficient for all the global benchmarks considered
here.  An ideal global SIC, uniform sampling from a complete set of stabilizer
MUBs, and randomized global-Clifford measurements realize the same
depolarizing frame and give
\((1+1/(d+1))\Tr(O_0^2)\)
\cite{Huang2020,Renes2004,Scott2006,Zhang2024,WangCui2024}.  The complete
stabilizer MUB uses the minimum \(d+1\) Clifford measurement bases; it is an
exact finite realization of this frame, not a minimal-IC POVM.  The present
fixed \(d^2\)-outcome minimal-IC measurement is instead bounded by
\(2(1+1/d)\Tr(O_0^2)\).

State-uniformly, the comparison is different.  The full randomized
global-Clifford ensemble satisfies, for every Hermitian target,
\(\sup_\rho\operatorname{Var}_\rho[\widehat O_{\mathrm{Cl}}]
\leq3\Tr(O_0^2)\), whereas Eq.~\eqref{eq:worst-state-general-variance} gives
the upper bound \(2(d+1)\Tr(O_0^2)\) for the fixed WH measurement
\cite{Huang2020}.

A shallow randomized alternative trades circuit depth for a controlled
statistical penalty.  On a one-dimensional nearest-neighbor architecture
without ancillas, a relative-error
\(\varepsilon_{\mathrm{des}}\)-approximate unitary \(3\)-design reduces the
Clifford depth from \(O(n)\) to
\(O(\log(n/\varepsilon_{\mathrm{des}}))\).  For \(O\geq0\), the standard
reconstruction has variance at most
\(3\Tr(O_0^2)+10\varepsilon_{\mathrm{des}}[\Tr(O)]^2\)
and bias at most \(2\varepsilon_{\mathrm{des}}\Tr(O)\)
\cite{Schuster2025}.

The gap between averaged and state-uniform guarantees is also reflected in
outcome count.  The present POVM is minimal IC and has \(d^2\) outcomes,
whereas, among rank-one IC POVMs, a bound of the form
\(\sup_\rho\mathbb E_\rho[\widehat O_0^{\,2}]
\leq C\Tr(O_0^2)\), uniformly for all \(O_0\) with \(C\) independent of
\(d\), requires $\Omega(d^3)$ outcomes \cite{Yang2026}.

\begin{corollary}[State-dependent simultaneous estimation]
\label{cor:simultaneous-estimation}
Fix an input state $\rho$, let $O_1,\ldots,O_M$ be Hermitian targets, and put
\begin{equation*}
B_{\mathrm{var}}(\rho)
:=\max_k\operatorname{Var}_\rho[\widehat O_k].
\end{equation*}
Median-of-means postprocessing of independent records drawn from $\rho$
estimates all $M$
expectations to additive error $\epsilon$ with failure probability at most
$\delta$ using
\begin{equation}
N=O\!\left(
\frac{B_{\mathrm{var}}(\rho)\log(M/\delta)}{\epsilon^2}
\right).
\label{eq:sample-complexity}
\end{equation}
The quantum circuit and raw records are independent of the chosen targets;
Eq.~\eqref{eq:worst-state-general-variance} may be substituted for
$B_{\mathrm{var}}(\rho)$ when a state-independent shot count is required.
\end{corollary}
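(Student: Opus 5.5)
The corollary follows from the standard median-of-means argument; I will apply it target by target and finish with a union bound. Two ingredients come from earlier results. First, by Eq.~\eqref{eq:fixed-shadow} each single-record estimator $\widehat O_k(u)=\Tr(O_k\widehat\rho_u)$ is unbiased for $\Tr(\rho O_k)$. Second, its variance is finite and, by definition, at most $B_{\mathrm{var}}(\rho)$. Finiteness holds because the outcome alphabet has only $d^2$ elements and the snapshots $\widehat\rho_u=D_uF_nD_u^\dagger$ are bounded operators.

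\textbf{Step 1: one batch mean.} Split the $N$ i.i.d.\ records into $K$ equal batches of size $L=\lceil N/K\rceil$. For each target $k$ and batch $j$, form the batch mean $\widehat\mu_{k,j}$. Chebyshev's inequality gives
\[
\Pr\left(\abs{\widehat\mu_{k,j}-\Tr(\rho O_k)}\geq\epsilon\right)\leq\frac{B_{\mathrm{var}}(\rho)}{L\epsilon^2}.
\]
Choosing $L=\lceil 4B_{\mathrm{var}}(\rho)/\epsilon^2\rceil$ makes this failure probability at most $1/4$.

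\textbf{Step 2: the median.} The median $\widehat\mu_k$ of $\widehat\mu_{k,1},\ldots,\widehat\mu_{k,K}$ can deviate by $\epsilon$ or more only if at least $K/2$ of the independent batch means do. Each batch fails independently with probability at most $1/4$, so Hoeffding's inequality for the binomial count bounds this event by $e^{-K/8}$.

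\textbf{Step 3: all targets at once.} A union bound over the $M$ targets gives total failure probability at most $Me^{-K/8}$. Taking $K=\lceil 8\log(M/\delta)\rceil$ brings this to at most $\delta$. The total sample count is then
\[
N=KL=O\!\left(\frac{B_{\mathrm{var}}(\rho)\log(M/\delta)}{\epsilon^2}\right),
\]
which is Eq.~\eqref{eq:sample-complexity}.

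\textbf{Crucial point.} The same $N$ records serve every $k$. The batch partition and the records themselves are fixed before any target is chosen, and the measurement circuit realizing $\mathsf E^{(n)}$ does not depend on the $O_k$. The union bound needs only marginal control for each $k$, so correlations between the estimates for different targets are harmless. This is what makes the circuit and the raw records target-independent.

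\textbf{State-independent version.} If a shot count valid for every $\rho$ is required, Eq.~\eqref{eq:worst-state-general-variance} from Proposition~\ref{thm:general-observable-variance} gives $B_{\mathrm{var}}(\rho)\leq 2(d+1)\max_k\Tr(O_{k,0}^2)$ uniformly in $\rho$. Substituting this bound into Step 1 yields the stated replacement.

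\textbf{Main subtlety.} The only delicate point is to keep the constants in Chebyshev and Hoeffding explicit enough that the $O(\cdot)$ absorbs them. Beyond that, one must note that the variance, not the raw second moment, is what enters, because the estimator's mean is exactly $\Tr(\rho O_k)$.
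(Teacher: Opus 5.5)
Your proposal is correct and takes the same route as the paper: the proof in Appendix~\ref{app:general-observable-proof} applies a standard median-of-means bound with $B_{\mathrm{var}}(\rho)$ to each target and then takes a union bound over the $M$ targets. Your Chebyshev batch size $L=\lceil 4B_{\mathrm{var}}(\rho)/\epsilon^2\rceil$, the Hoeffding bound $e^{-K/8}$ on the median, and the choice $K=\lceil 8\log(M/\delta)\rceil$ simply make those standard constants explicit, with the trivial caveat that you need $L\geq1$ when $B_{\mathrm{var}}(\rho)=0$.
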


Proposition~\ref{thm:general-observable-variance} and
Corollary~\ref{cor:simultaneous-estimation} apply to arbitrary Hermitian
targets.  Since Pauli operators are eigenoperators of the measurement channel,
their variances admit an exact evaluation.

\subsection{Exact Pauli expectations}
\label{sec:pauli-performance}

For \(n\geq3\), we use the Pauli-sector split displayed after
Theorem~\ref{thm:structured-reconstruction}:
\begin{equation*}
\begin{aligned}
\{I\}\;&\mathbin{\dot\cup}
\underbrace{(X/I\text{-only})\mathbin{\dot\cup}(Z/I\text{-only})}
_{\mathsf{Ax},\;2(d-1)}\\
&\mathbin{\dot\cup}
\underbrace{(\text{mixed directions})}
_{\mathsf{Mix},\;(d-1)^2}.
\end{aligned}
\end{equation*}
For a Pauli label \(v=(\bm c,\bm e)\), the axial class
\(\mathsf{Ax}\) has exactly one of \(\bm c,\bm e\) nonzero, whereas the mixed
class \(\mathsf{Mix}\) has both nonzero.  Equivalently, a mixed string
contains a \(Y\), or contains an \(X\) and a \(Z\) on different qubits.

For $u=(\bm a,\bm b)$ and $v=(\bm c,\bm e)$, the commutation sign and the
canonical Pauli estimator are
\begin{align}
s_v(u)&:=(-1)^{\bm a\cdot\bm e+\bm b\cdot\bm c},
&D_uP_vD_u^\dagger&=s_v(u)P_v,\nonumber\\
\widehat P_v(u)
&:=\Tr(P_v\widehat\rho_u)=\frac{s_v(u)}{\chi_v},
&s_v(u)&\in\{+1,-1\},\nonumber\\
\widehat P_v(u)^2
&=\frac1{|\chi_v|^2}=\frac{d}{\lambda_v}=:C(P_v).
\label{eq:exact-pauli-second-moment}
\end{align}
Thus the outcome changes only the sign of the estimate, not its magnitude.
Consequently,
$C(P_v)=\mathbb E_\rho[\widehat P_v^{\,2}]$ is independent of the input
state.  The sign-resolved derivation is given in
Appendix~\ref{app:pauli-second-moment-proof}.

\begin{theorem}[Exact classwise Pauli variances for $n\geq3$]
\label{thm:global-pauli-variance}
Define
\begin{align}
A_d&:=\left(\frac{2-\sqrt{2/d}}{1-\sqrt{2/d}}\right)^2
\in(4,9],\quad A_d\to4,\nonumber\\
M_d&:=(\sqrt{2d}-1)^2<2d,\quad
M_d=\Theta(d)=\Theta(2^n).
\label{eq:axis-mixed-moments}
\end{align}
For a fixed state $\rho$, write $m_v(\rho):=\Tr(\rho P_v)$.  Then every
nonidentity Pauli direction has the following exact variance, raw second
moment, and tight worst-state variance:
\begin{equation}
\begin{aligned}
\operatorname{Var}_\rho(\widehat P_v)
&=C(P_v)-m_v(\rho)^2,\\
C(P_v)&=\mathbb E_\rho[\widehat P_v^{\,2}]
=\sup_\sigma\operatorname{Var}_\sigma(\widehat P_v)\\
&=\begin{cases}
A_d, & P_v\in\mathsf{Ax},\\
M_d, & P_v\in\mathsf{Mix}.
\end{cases}
\end{aligned}
\label{eq:global-pauli-variance}
\end{equation}
The supremum in each row is attained at $\sigma=I/d$; the identity has zero
variance.  The axial row contains the $2(d-1)$ nonidentity $X/I$-only and
$Z/I$-only strings, while the mixed row contains the remaining $(d-1)^2$
directions.  At $n=3$, the two values coincide at $A_8=M_8=9$; for $n>3$,
$M_d>A_d$.  Consequently,
\begin{equation*}
\sup_{\rho,\,P\ne I}\operatorname{Var}_\rho(\widehat P)
=M_d=\Theta(2^n),\qquad n\geq3.
\end{equation*}
\end{theorem}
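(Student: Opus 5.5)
The plan is to reduce everything to the fiducial Pauli responses \(\chi_v\) of Eq.~\eqref{eq:fiducial-response}. By Eq.~\eqref{eq:exact-pauli-second-moment}, which follows from Theorem~\ref{thm:pauli-diagonal-reconstruction} and covariance, \(\widehat P_v(u)=s_v(u)/\chi_v\) with \(s_v(u)=\pm1\). Hence \(\widehat P_v^{\,2}=1/|\chi_v|^2=C(P_v)\) deterministically, for every outcome and every input state. Unbiasedness, Eq.~\eqref{eq:fixed-shadow}, gives \(\mathbb E_\rho\widehat P_v=m_v(\rho)\), so \(\operatorname{Var}_\rho(\widehat P_v)=C(P_v)-m_v(\rho)^2\). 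Since \(C(P_v)\) does not depend on \(\rho\) and \(m_v(\rho)^2\geq0\), the supremum of the variance equals \(C(P_v)\). It is attained at \(\sigma=\id/d\), where \(m_v=0\) for every nonidentity \(P_v\). For \(v=0\) we have \(\chi_0=1\) and \(\widehat I\equiv1\), so the variance vanishes. What remains is to compute \(|\chi_v|\) classwise.

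For \(n\geq3\) I will expand \(\mathcal N_n\chi_v\) into the four terms coming from \(\ket{+^n}\) and \(e^{i\theta_n}\ket{0^n}\):
\[
\mathcal N_n\chi_v=\langle +^n|P_v|+^n\rangle+\langle 0^n|P_v|0^n\rangle+2\,\mathrm{Re}\bigl(e^{i\theta_n}\langle +^n|P_v|0^n\rangle\bigr).
\]
The diagonal terms are elementary. \(\langle +^n|P_v|+^n\rangle=1\) if \(\bm c=\bm0\) (an \(X/I\)-only string) and \(0\) otherwise. Likewise \(\langle 0^n|P_v|0^n\rangle=1\) if \(\bm e=\bm0\) and \(0\) otherwise. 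For the cross term, \(P_v\ket{0^n}=i^{k}\ket{\bm e}\), where \(k\) is the number of \(Y\) factors. This holds because \(X\ket0=\ket1\), \(Z\ket0=\ket0\) and \(Y\ket0=i\ket1\). Therefore \(\langle +^n|P_v|0^n\rangle=d^{-1/2}i^{k}\). Combining, and using \(\mathcal N_n=2-r_d\) with \(r_d=\sqrt{2/d}\):
\begin{itemize}
\item \emph{Axial} \(v\): exactly one indicator equals \(1\) and \(k=0\). This gives \(\mathcal N_n\chi_v=1+2d^{-1/2}\cos(3\pi/4)=1-r_d=\kappa_d\). Hence \(C(P_v)=(2-r_d)^2/(1-r_d)^2=A_d\).
\item \emph{Mixed} \(v\): both indicators vanish. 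This gives \(\mathcal N_n\chi_v=2d^{-1/2}\cos(3\pi/4+k\pi/2)=\pm r_d\), since the cosine is always \(\pm1/\sqrt2\). Hence \(C(P_v)=((2-r_d)/r_d)^2=(2/r_d-1)^2=(\sqrt{2d}-1)^2=M_d\).
\end{itemize}
The class sizes \(2(d-1)\) and \((d-1)^2\) follow directly from Eq.~\eqref{eq:axis-mixed-label-sets}.

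The remaining claims are elementary inequalities in \(x=r_d\in(0,1/2]\) for \(d\geq8\). The function \(A_d=((2-x)/(1-x))^2\) is increasing in \(x\). So as \(d\) increases from \(8\), it decreases from the value \(9\) at \(x=1/2\) toward the limit \(4\) as \(x\to0\), giving \(A_d\in(4,9]\). For \(M_d=(\sqrt{2d}-1)^2\), the bound \(M_d<2d\) is immediate and \(M_d=\Theta(d)\). At \(d=8\) both \(A_8\) and \(M_8\) equal \(9\). For \(n>3\), \(M_d\) is increasing in \(d\) while \(A_d\) is decreasing, so \(M_d>9>A_d\).

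Taking the maximum of \(C(P_v)\) over both classes then gives \(\sup_{\rho,P\neq I}\operatorname{Var}_\rho(\widehat P)=M_d\). The only step needing real care is the cross-term phase bookkeeping, namely tracking \(i^{k}\) and the choice \(\theta_n=3\pi/4\). This is exactly what makes every mixed direction have the same modulus \(r_d/\mathcal N_n\), regardless of how many \(Y\) factors the string contains.
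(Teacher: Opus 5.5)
Your proposal is correct and follows essentially the paper's own proof. The paper likewise uses $\widehat P_v(u)=s_v(u)/\chi_v$ to obtain the state-independent second moment $1/|\chi_v|^2$ and the variance identity, and then evaluates $|\chi_v|$ sectorwise from the two-branch expansion of the fiducial (Eq.~\eqref{eq:sector-characteristic}). The only cosmetic difference is that the paper expands the phase-free overlap $\langle\phi_n|Z(\bm c)X(\bm e)|\phi_n\rangle$, where your factor $i^{k}$ appears as $(-1)^{\bm c\cdot\bm e}$, before reading off the moduli $\kappa_d/\mathcal N_n$ and $r_d/\mathcal N_n$.
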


The same variance identity also covers the two smaller cases.  At $n=1$, all
three nonidentity Paulis have $C(P)=3$.  At $n=2$, $\bm c\cdot\bm e$ is the
parity of the number of $Y$ factors: the even- and odd-$Y$ sectors contain
nine and six directions with $C(P)=9$ and $3$, respectively.  In both cases,
the worst-state value is attained at $I/d$.

The classwise law above contrasts with the weight law of local measurements.
For the $n$-fold product tetrahedral SIC \cite{Stricker2022} and uniformly
randomized local-Pauli shadows \cite{Huang2020}, the canonical estimators obey
\begin{equation}
\mathbb E_\rho[\widehat P^2]=3^{w(P)},\qquad
\operatorname{Var}_\rho(\widehat P)
=\mathbb E_\rho[\widehat P^2]-\Tr(\rho P)^2.
\label{eq:product-pauli-variance}
\end{equation}
Their cost depends only on the Pauli weight $w(P)$ and is unchanged when
$X$, $Y$, and $Z$ are exchanged on a fixed support.  The present WH cost
depends instead on the axial--mixed class and not on weight.  Thus, for
$n>3$, $X^{\otimes n}$ and $Z^{\otimes n}$ have second moment
$A_d=O(1)$ here but $3^n$ under the local schemes, whereas a weight-one
$Y_j$ has second moment $M_d=\Theta(2^n)$ here but only $3$ locally.  Neither
scheme uniformly dominates the other: the fixed WH measurement favors the
axial sector across all weights, while the local schemes favor low-weight
mixed directions.  Appendix~\ref{app:product-sic-baselines} gives a
tensor-product derivation of Eq.~\eqref{eq:product-pauli-variance}.

Combining the exact moments with
Corollary~\ref{cor:simultaneous-estimation} gives, for $n\geq3$,
\begin{equation*}
\begin{aligned}
N_{\mathsf{Ax}}
&=O\!\left(\frac{n+\log(1/\delta)}{\epsilon^2}\right),\\
N_{\mathrm{all\ Pauli}}
&=O\!\left(
\frac{2^n[n+\log(1/\delta)]}{\epsilon^2}
\right).
\end{aligned}
\end{equation*}
The first bound estimates all $2(d-1)$ axial strings, which span every
Pauli weight; each target has dimension-independent variance, and the factor
$n$ is only logarithmic in the number of targets.  The second bound controls
all $4^n-1$ nonidentity Paulis: $M_d=\Theta(d)$ is the mixed-sector
worst-state cost and $\log(4^n)=\Theta(n)$ controls the full target set.  The
same fixed-measurement records support either task, and the queried Paulis may
be chosen after acquisition.  By contrast, a single Pauli specified in
advance can be measured directly in its eigenbasis using
$O(\log(1/\delta)/\epsilon^2)$ copies.

For the complete Pauli family, global-Clifford shadows have
$\mathbb E_\rho[\widehat P^2]=d+1$ for every nonidentity $P$ and hence the
same $O(d[n+\log(1/\delta)]/\epsilon^2)$ scaling \cite{Huang2020}.  A
complete stabilizer MUB realizes this channel using the minimum $d+1$
Clifford bases \cite{Zhang2024,WangCui2024}.  The statistical scaling is
therefore comparable, but the architectures differ: Clifford shadows sample
a global stabilizer basis on each shot, and the MUB implementation selects
among $d+1$ such bases, whereas the present scheme repeats one fixed
minimal-IC $d^2$-outcome POVM and records one $2n$-bit outcome.

The factor $d$ is unavoidable in the single-copy, no-quantum-memory model.
Even an adaptive protocol that chooses an arbitrary POVM on each fresh copy
requires $\Omega(d/\epsilon^2)$ copies to estimate the complete Pauli family
with constant success probability \cite{ChenCotler2021,ChenGongYe2024}.
Thus the fixed-WH bound has the optimal dependence on dimension, apart from
the $\Theta(n)$ target-set logarithm in the upper bound.  Allowing joint
two-copy measurements changes the model: data-dependent multistage protocols
can remove the exponential dependence on $n$, with $O(n/\epsilon^4)$ copies
information-theoretically and
$O(n\log(n/\epsilon)/\epsilon^4)$ copies for a triply efficient Clifford
implementation \cite{ChenGongYe2024,King2025}.

These formulas concern individual Pauli estimators.  For a Pauli sum
$O=\sum_v c_vP_v$, cross-covariances also contribute, so the individual costs
cannot simply be added; general observables are covered by
Sec.~\ref{sec:observable-estimation}.  Informational completeness also permits
full tomography from the same records.

\section{Full tomography from the same records}
\label{sec:tomography}

Direct fidelity estimation against a known ideal target
\cite{FlammiaLiu2011,daSilva2011} and cross-platform overlap estimation
using randomized local measurements \cite{ElbenVerification2020} avoid
full reconstruction.  For full tomography, sample requirements depend on
the reconstruction loss and measurement model \cite{Haah2017}; here we
evaluate canonical single-copy linear inversion under Hilbert--Schmidt loss.

Because the fixed POVM is informationally complete, the same records also
give the unbiased linear estimator in Eq.~\eqref{eq:linear-tomography}.  WH
covariance makes all canonical snapshots unitarily equivalent and hence gives
an exact finite-sample Frobenius error.  We compare it below with uniformly
randomized local-Pauli tomography as well as the fixed product SIC.

\begin{corollary}[Exact linear-tomography error]
\label{thm:tomography-mse}
For any density operator $\rho$ and $N$ independent and identically
distributed (i.i.d.) outcomes obtained by applying
the fixed circuit to identically prepared copies of $\rho$,
\begin{equation}
\mathbb E\widehat\rho_N=\rho,
\qquad
\mathbb E\|\widehat\rho_N-\rho\|_F^2
=\frac{H_n-\Tr(\rho^2)}N,
\label{eq:tomography-mse}
\end{equation}
where the squared snapshot norm is independent of the outcome and, for
$n\geq3$, has the exact value
\begin{align}
H_n&:=\Tr(\widehat\rho_u^2)
=\frac1d\left(1+\sum_{v\ne0}\frac d{\lambda_v}\right),\nonumber\\
&=\frac1d\left[1+2(d-1)A_d+(d-1)^2M_d\right].
\label{eq:snapshot-hs-norm}
\end{align}
The spectral floor gives
\begin{equation}
H_n\leq2(d^2+d-1)-\frac1d=O(d^2).
\label{eq:snapshot-hs-bound}
\end{equation}
Together with $M_d=\Theta(d)$, the exact expression shows
$H_n=\Theta(d^2)$, so the Hilbert--Schmidt MSE is
$\Theta(4^n/N)$.
\end{corollary}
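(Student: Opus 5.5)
The plan is to use the unbiasedness of single snapshots together with i.i.d.\ sampling, and then to compute the constant $\Tr(\widehat\rho_u^2)$ from the Pauli-diagonal inverse in Theorem~\ref{thm:pauli-diagonal-reconstruction}.

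\textbf{Step 1: reduce the error to one snapshot.} Unbiasedness, $\mathbb E\widehat\rho_N=\rho$, is already Eq.~\eqref{eq:fixed-shadow} averaged over $N$ records. Since the outcomes are i.i.d., the Frobenius mean-squared error is additive over records:
\[
\mathbb E\|\widehat\rho_N-\rho\|_F^2=\frac1N\,\mathbb E_u\|\widehat\rho_u-\rho\|_F^2 .
\]
Expanding the single-shot term and using $\mathbb E_u\widehat\rho_u=\rho$ gives $\mathbb E_u\Tr(\widehat\rho_u^2)-\Tr(\rho^2)$.

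\textbf{Step 2: the snapshot norm is outcome-independent.} By Eq.~\eqref{eq:fixed-dual-covariance}, $\widehat\rho_u=D_uF_nD_u^\dagger$. Unitary conjugation preserves the Hilbert--Schmidt norm, so $\Tr(\widehat\rho_u^2)=\Tr(F_n^2)=:H_n$ for every $u$, and the expectation over $u$ is trivial. This yields Eq.~\eqref{eq:tomography-mse}.

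\textbf{Step 3: evaluate $H_n$.} By Eq.~\eqref{eq:fixed-dual}, $F_n=d^{-1}\sum_v\chi_v^{-1}P_v$. Pauli orthogonality, $\Tr(P_vP_w)=d\delta_{vw}$, then gives
\[
H_n=\frac1d\sum_v\frac{1}{\chi_v^2}.
\]
The identity term has $\chi_0=1$. For $v\neq0$ we have $1/\chi_v^2=d/\lambda_v=C(P_v)$ by Eq.~\eqref{eq:exact-pauli-second-moment}. For $n\ge3$, Theorem~\ref{thm:global-pauli-variance} sets $C(P_v)=A_d$ on the $2(d-1)$ axial labels and $M_d$ on the $(d-1)^2$ mixed labels. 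This is the second line of Eq.~\eqref{eq:snapshot-hs-norm}.

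\textbf{Step 4: the upper bound.} Each nonidentity term satisfies $d/\lambda_v\le d/\lambda_{\min}$. Theorem~\ref{thm:spectrum} gives $\eta_n\ge1/2$, which means $d/\lambda_{\min}\le 2(d+1)$. Therefore
\[
H_n\le \frac1d\bigl[1+(d^2-1)\cdot2(d+1)\bigr].
\]
Comparing this expression with the claimed $2(d^2+d-1)-1/d$ is a routine algebra check. The one step I expect to need care is confirming that the spectral-floor bound matches this exact form rather than a slightly weaker polynomial.

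\textbf{Step 5: the matching lower bound.} Keep only the mixed-sector term, $H_n\ge (d-1)^2M_d/d$. Since $M_d=(\sqrt{2d}-1)^2=\Theta(d)$, this is $\Theta(d^2)$. Combined with Step 4, $H_n=\Theta(d^2)$. Because $\Tr(\rho^2)\le1$ is negligible, the mean-squared error is $\Theta(4^n/N)$.

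No step is a genuine obstacle, since the result follows directly from the earlier theorems. The main care point is that the exact formula requires $n\ge3$, while the $O(d^2)$ bound holds through the spectral floor for all $n$.
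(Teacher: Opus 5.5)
Your proposal is correct and follows essentially the same route as the paper: i.i.d.\ cancellation of cross terms, outcome-independence of $\Tr(\widehat\rho_u^2)$ via $\widehat\rho_u=D_uF_nD_u^\dagger$, a Parseval computation in the Pauli basis, sector counting for $n\geq3$, and the floor $d/\lambda_{\min}\le 2(d+1)$ for the upper bound. The algebra you deferred does close exactly, since $\frac1d\left[1+2(d^2-1)(d+1)\right]=2(d^2+d-1)-\frac1d$.
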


Canonical linear inversion gives
\begin{equation}
H_{\rm SIC}=H_{\rm Cl}=d^2+d-1,
\qquad
H_{\rm Pauli}=H_{\rm prod}=5^n,
\label{eq:tomography-baselines}
\end{equation}
for an ideal global SIC, global Clifford, uniformly randomized local-Pauli
acquisition, and the $n$-fold product tetrahedral SIC, respectively
\cite{Huang2020,Renes2004,Scott2006}.  The last equality follows because both
local schemes use one-qubit canonical factors of the form $3\Pi-\id$, whose
squared Hilbert--Schmidt norm is $5$.  Since
$H_n<2H_{\rm SIC}$ and $H_n<H_{\rm prod}$ for every $n\geq3$, the present
coefficient stays within a factor of two of the $\Theta(4^n)$ global benchmark
and below both local $5^n$ coefficients.  More explicitly,
\begin{equation*}
\frac{H_n}{H_{\rm Pauli}}
=\frac{H_n}{H_{\rm prod}}
=\frac{H_n}{5^n}
\sim 2\left(\frac45\right)^n.
\end{equation*}
Thus full tomography remains exponential, but the canonical sample-coefficient
advantage over either local baseline grows asymptotically as
$\tfrac12(5/4)^n$.  Table~\ref{tab:tomography-contexts}
separates this statistical comparison from measurement-context complexity.
The Pauli ensemble has $3^n$ joint settings and $6^n$ possible
setting--outcome records, although each shot still requires only $n$ local
basis choices.  The finite-$n$ formulas are proved in
Appendix~\ref{app:tomography-proof}.

\begin{center}
\begin{minipage}{\columnwidth}
\captionof{table}{Canonical i.i.d. linear tomography.  The coefficient $H$
enters $\mathbb E\|\widehat\rho_N-\rho\|_F^2=
(H-\Tr\rho^2)/N$.  A Pauli label includes the sampled setting and its Born
outcome; the two fixed minimal-IC measurements return the outcome alone.}
\label{tab:tomography-contexts}
\footnotesize
\setlength{\tabcolsep}{3.5pt}
\renewcommand{\arraystretch}{1.08}
\begin{tabular*}{\columnwidth}{@{\extracolsep{\fill}}lccc@{}}
\toprule
acquisition & settings & raw labels & $H$\\
\midrule
random Pauli & $3^n$ random & $6^n$ & $5^n$\\
product SIC & $1$ & $4^n$ & $5^n$\\
fixed WH & $1$ & $4^n$ & $H_n\sim2\cdot4^n$\\
\bottomrule
\end{tabular*}

\vspace{3pt}
\begin{tabular*}{\columnwidth}{@{\extracolsep{\fill}}rccc@{}}
\toprule
$n$ & Pauli / product SIC & present $H_n$ & ideal global SIC\\
\midrule
$1$ & $5$ & $5$ & $5$\\
$2$ & $25$ & $25$ & $19$\\
$3$ & $125$ & $71$ & $71$\\
$4$ & $625$ & $317.19$ & $271$\\
$5$ & $3125$ & $1482.11$ & $1055$\\
\bottomrule
\end{tabular*}
\end{minipage}
\end{center}

Explicitly forming $\widehat\rho_N$ requires $\Theta(d^2)=\Theta(4^n)$
storage.  Each individual snapshot, however, has a compact algebraic
representation that can be contracted directly with structured targets
without constructing a dense matrix.

\section{Structured snapshots and efficient postprocessing}
\label{sec:reconstruction}

Computational-basis sparsity provides one route to efficient qudit-shadow
postprocessing when direct matrix-element access and the target trace are
available \cite{WangDDB2025}.

Equations~\eqref{eq:fixed-dual-covariance},
\eqref{eq:explicit-fixed-dual}, and
\eqref{eq:product-branch-conjugation} turn the algebraic inverse into an
efficient per-record rule: every snapshot is a sum of four product-state
outer products and the identity, without constructing a dense \(d\times d\)
matrix.

This representation has two immediate consequences for exact single-record
postprocessing.  First, the four rank-one terms are outer products between
product stabilizer states.  For a
Pauli string \(P_v\), \(\Tr(P_v\widehat\rho_u)\) is computed directly from the
binary parity in Eq.~\eqref{eq:exact-pauli-second-moment} in \(O(n)\) time,
and a \(K\)-term Pauli sum costs \(O(Kn)\).  For a stabilizer-state projector,
the four rank-one terms require only a constant number of phase-sensitive
stabilizer overlaps, while the identity contribution is fixed by the trace.
Standard overlap algorithms therefore give the conservative costs \(O(n^3)\) and
\(O(Kn^3)\) for one projector and a \(K\)-term stabilizer-projector
decomposition, respectively
\cite{AaronsonGottesman2004,GarciaMarkovCross2014}.

These Pauli and stabilizer-decomposition target classes also admit exact
polynomial-time evaluation for global-Clifford shadows: a single-record
snapshot is \((d+1)\ketbra{s}{s}-\id\), where \(\ket{s}\) is a stabilizer
state, so tableau algorithms apply
\cite{Huang2020,AaronsonGottesman2004}.  The costs listed below are for the
present snapshots.

Second, the same representation is a sum of five product operators: the four
rank-one outer products and the identity.  Each term is a bond-one MPO, which
yields the uniform tensor-network guarantee below.

\begin{proposition}[Constant-bond snapshot evaluation]
\label{prop:structured-postprocessing}
For every \(n\geq3\), the canonical snapshot has an exact MPO representation
of bond dimension at most five.  If \(O\) is given exactly as an open-boundary
MPO of bond dimension \(\chi\), then the single-record value
\(\Tr(O\widehat\rho_u)\) can be computed exactly using \(O(n\chi^2)\)
arithmetic operations and \(O(\chi)\) additional working memory beyond
storage of the MPO tensors.
\end{proposition}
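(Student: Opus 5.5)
The plan is to start from Theorem~\ref{thm:structured-reconstruction} together with Eq.~\eqref{eq:product-branch-conjugation}. Conjugating Eq.~\eqref{eq:explicit-fixed-dual} by $D_u$ writes the snapshot as
\begin{equation*}
\widehat\rho_u=\alpha\bigl(\ketbra{p}{p}+\ketbra{q}{q}\bigr)+\beta\bigl(z_n^*\sigma\ketbra{p}{q}+z_n\sigma\ketbra{q}{p}\bigr)-\gamma\id,
\end{equation*}
where $\ket{p}=Z(\bm a)\ket+^{\otimes n}=\bigotimes_j\ket{(-1)^{a_j}}$, $\ket{q}=\ket{\bm b}=\bigotimes_j\ket{b_j}$, $\sigma=(-1)^{\bm a\cdot\bm b}$, and $\alpha,\beta,\gamma$ are the scalar coefficients of Eq.~\eqref{eq:explicit-fixed-dual}. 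Each of the five terms is a tensor product of single-qubit operators: $\ketbra{p}{q}=\bigotimes_j\ketbra{p_j}{q_j}$, and $\id=\bigotimes_j I$. A product operator is a bond-one MPO, and a sum of $K$ bond-one MPOs is an MPO of bond dimension $K$, obtained by block-diagonal stacking of the site tensors with the boundary vectors chosen to sum the branches. We absorb the scalars, and the signs $\sigma$ and $z_n$, into the first-site tensors. This gives bond dimension at most five for every $u$, and the local tensors are built in $O(n)$ time directly from the bit strings $(\bm a,\bm b)$.

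For the evaluation we use linearity:
\begin{equation*}
\Tr(O\widehat\rho_u)=\sum_{k=1}^5 c_k\Tr\Bigl(O\bigotimes_j R^{(k)}_j\Bigr),
\end{equation*}
where each $R^{(k)}_j$ is a $2\times2$ matrix. For one product term, we contract the MPO of $O$ site by site with $R^{(k)}_j$. At site $j$, we form the $\chi\times\chi$ transfer matrix $T_j=\sum_{s,s'}O_j^{s s'}(R_j)^{s' s}$ at cost $O(\chi^2)$. We then propagate a left boundary vector of length $\chi$ through $v\mapsto vT_j$, also at cost $O(\chi^2)$. Over $n$ sites this is $O(n\chi^2)$ per term and $O(n\chi^2)$ in total for the five terms.

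For memory, we keep only the current boundary vector, of size $O(\chi)$, one per term, and we never store $T_j$. Instead we compute $vT_j$ entrywise, as $(vT_j)_\beta=\sum_\alpha v_\alpha\sum_{s,s'}O_j^{\alpha\beta,ss'}R_j^{s's}$. This keeps the working memory at $O(\chi)$ beyond the stored MPO tensors. The identity term is simply the MPO trace of $O$ and fits in the same sweep.

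The only point requiring care is the bookkeeping that the conjugated cross terms remain single product operators with the correct phases. This follows from Eq.~\eqref{eq:product-branch-conjugation}, because $D_u$ acts on each branch as a product Pauli up to the global sign $\sigma$, which only rescales a coefficient. We do not need to argue that the bond dimension is minimal, only that it is at most five. Everything else is routine MPO arithmetic.
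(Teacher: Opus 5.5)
Your proposal is correct and follows the paper's proof. You write the snapshot as four conjugated product outer products plus the identity, direct-sum five bond-one MPOs to get bond dimension at most five, and contract each term against the target MPO with a length-$\chi$ boundary vector at $O(\chi^2)$ cost per site; your transfer-matrix and entrywise-update bookkeeping just spells out the $O(n\chi^2)$ time and $O(\chi)$ memory claims more explicitly than the paper does.
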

\begin{proof}
Each of the five product operators is a bond-one MPO.  Taking direct sums
of their virtual bonds gives an exact MPO of bond dimension at most five.
Contracting each term separately with the target MPO from left to right
propagates a boundary vector of dimension at most \(\chi\).  Each site costs
\(O(\chi^2)\) arithmetic operations and \(O(\chi)\) working memory.
Summing the five contractions therefore costs \(O(n\chi^2)\) operations
and \(O(\chi)\) additional working memory.
\end{proof}

Table~\ref{tab:record-cost} summarizes both postprocessing routes.  In every
row the dense canonical snapshot is avoided.

\begin{center}
\begin{minipage}{\columnwidth}
\captionof{table}{Classical cost per record after receiving the \(2n\)-bit
outcome.  The dense canonical snapshot is never constructed.}
\label{tab:record-cost}
\begin{ruledtabular}
\begin{tabular}{@{}ll@{}}
target representation & per-record cost\\
\hline
Pauli string & \(O(n)\)\\
\(K\)-term Pauli sum & \(O(Kn)\)\\
stabilizer-state projector & \(O(n^3)\)\\
\(K\)-term stabilizer-projector sum & \(O(Kn^3)\)\\
bond-\(\chi\) MPO & \(O(n\chi^2)\)\\
\end{tabular}
\end{ruledtabular}
\end{minipage}
\end{center}

The MPO row has broader scope: it applies to every bond-\(\chi\) target,
including those without a short Pauli or stabilizer decomposition.  General
global-Clifford snapshots have no analogous uniform guarantee.  Computing
certain probabilities
\(p=|\langle\phi_1\otimes\cdots\otimes\phi_n|G\rangle|^2\), where \(\ket G\)
is a stabilizer graph state, is \(\#\mathrm{P}\)-hard
\cite{Ghosh2023Complexity}.  The product-state projector
\(O_\phi=\bigotimes_j\ketbra{\phi_j}{\phi_j}\) is a bond-one MPO.
Substituting the global-Clifford snapshot
\(\widehat\rho_G=(d+1)\ketbra{G}{G}-\id\) gives
\(\Tr(O_\phi\widehat\rho_G)=(d+1)p-1\).  Thus computing this quantity exactly
is \(\#\mathrm{P}\)-hard in the worst case even for bond-one targets.  By
contrast, Proposition~\ref{prop:structured-postprocessing} gives a uniform
\(O(n\chi^2)\) algorithm for every bond-\(\chi\) MPO target here.

We now give the fixed quantum circuit that generates these records.

\FloatBarrier
\section{From the Pauli--WH orbit to one fixed circuit}
\label{sec:bell}

Let \(S\) carry the unknown \(n\)-qubit state, and let \(A\) be an \(n\)-qubit
ancilla initialized in \(\ket{0^n}\).  Every shot applies the same
premeasurement unitary to \(SA\), followed by computational-basis measurement
of all \(2n\) qubits.  The unitary has two stages.  First, a fiducial-dependent
preparation acting only on \(A\) maps \(\ket{0^n}_A\) to the conjugate
fiducial \(\ket{\phi_n^*}_A\).  Second, a fiducial-independent Bell-basis
transform factorizes over the matched pairs \((S_j,A_j)\).  With the
convention used below, the readouts of \(S\) and \(A\) are \(\bm a\) and
\(\bm b\), respectively, so the \(2n\)-bit outcome directly labels the
Pauli--WH effect \(E_{\bm a,\bm b}^{(n)}\).  Thus neither the measurement
setting nor the circuit changes between shots.

The common Bell construction below applies to an arbitrary normalized fiducial
and realizes the corresponding Pauli--WH POVM.  The only fiducial-dependent
stage is the exact preparation of \(\ket{\phi_n^*}\), constructed and costed in
Sec.~\ref{sec:fiducial-preparation}.

\subsection{Generic Bell interface}

Let \(\ket\phi\in\C^d\) be any normalized fiducial.  The premeasurement
unitary contains the following two ordered blocks.

\emph{(i) Fiducial preparation.}
Choose any unitary satisfying

\begin{equation}
U_{\rm prep}(\phi^*)\ket{0^n}_A=\ket{\phi^*}_A,
\label{eq:prep-def}
\end{equation}

where complex conjugation is taken in the computational basis.

\emph{(ii) Bell-basis transform.}
Independently of \(\ket\phi\), define

\begin{equation}
U_{\rm Bell}=\left(\bigotimes_{j=1}^nH_{S_j}\right)
\left(\prod_{j=1}^n\operatorname{CNOT}_{S_j\to A_j}\right),
\label{eq:bell}
\end{equation}

The matched CNOTs act on disjoint pairs.  With direct $S_j$--$A_j$ couplers,
they form one parallel layer, followed by one parallel layer of Hadamards.
For fixed \(\phi\), the complete premeasurement unitary is the
ordered composition of blocks (i) and (ii):

\begin{equation}
U_{\rm IC}(\phi)=
U_{\rm Bell}\bigl[\id_S\otimes U_{\rm prep}(\phi^*)\bigr].
\label{eq:uic}
\end{equation}

For the fiducial family in Eq.~\eqref{eq:fid}, we write
\(U_{\rm IC}^{(n)}:=U_{\rm IC}(\phi_n)\), as in Fig.~\ref{fig:main}.  The
following proposition identifies the POVM induced by computational-basis
readout after \(U_{\rm IC}(\phi)\).  It specializes the standard
fiducial-state-plus-generalized-Bell construction
\cite{DAriano2004,GuptaWeiss2025} to the tensor-Pauli convention used here.

\begin{proposition}[Fixed-circuit realization of a Pauli--WH POVM]
\label{prop:generic-bell}
Let $\ket\phi\in\C^d$ be normalized.  Initialize $S$ in the state $\rho$ and
$A$ in $\ket{0^n}$.  After applying $U_{\rm IC}(\phi)$ and measuring all
$2n$ qubits in the computational basis, let $\bm a$ and $\bm b$ denote the
outcomes on $S$ and $A$, respectively.  Then
\begin{equation}
p(\bm a,\bm b\mid\rho)
=\frac1d\Tr\!\left[
\rho D_{\bm a,\bm b}\ketbra\phi\phi
D_{\bm a,\bm b}^\dagger\right].
\label{eq:born-uic}
\end{equation}
For $\phi=\phi_n$, Eq.~\eqref{eq:fixed-povm} reduces this expression to
$p(\bm a,\bm b\mid\rho)=\Tr[\rho E_{\bm a,\bm b}^{(n)}]$.  Thus a single
fixed premeasurement unitary followed by computational-basis readout
implements the corresponding $d^2=4^n$-outcome Pauli--WH POVM.
\end{proposition}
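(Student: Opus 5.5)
The plan is to work in the Heisenberg picture: pull each computational-basis projector \(\ketbra{\bm a}{\bm a}_S\otimes\ketbra{\bm b}{\bm b}_A\) back through \(U_{\rm Bell}\), identify it as a generalized Bell projector on \(SA\), and then evaluate that projector against \(\rho\otimes\ketbra{\phi^*}{\phi^*}\) using the transpose trick for the maximally entangled state. Block (i) enters only through Eq.~\eqref{eq:prep-def}. The state just before block (ii) is therefore \(\rho_S\otimes\ketbra{\phi^*}{\phi^*}_A\), and
\[
p(\bm a,\bm b\mid\rho)=\Tr\bigl[(\rho\otimes\ketbra{\phi^*}{\phi^*})\,U_{\rm Bell}^\dagger(\ketbra{\bm a}{\bm a}\otimes\ketbra{\bm b}{\bm b})U_{\rm Bell}\bigr].
\]

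\emph{Step 1 (single-pair Bell basis).} Let \(\ket{\Phi^+}=(\ket{00}+\ket{11})/\sqrt2\) on one pair \((S_j,A_j)\). I will verify directly on basis states that
\[
H_S\,\operatorname{CNOT}_{S\to A}\,(Z^aX^b\otimes\id)\ket{\Phi^+}=\pm\ket{a}_S\ket{b}_A .
\]
For the base case, \((X^b\otimes\id)\ket{\Phi^+}\) is mapped by the CNOT to \(\ket{+}\ket{b}\), and the Hadamard then gives \(\ket{0}\ket{b}\). Inserting \(Z^a\) on \(S\) turns \(\ket{+}\) into \(\pm\ket{-}\) when \(a=1\), which the Hadamard sends to \(\ket{1}\). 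Since \(U_{\rm Bell}\) is unitary, it follows that \(U_{\rm Bell}^\dagger\ket{a}\ket{b}=\pm(Z^aX^b\otimes\id)\ket{\Phi^+}\).

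\emph{Step 2 (tensorization).} The CNOTs and Hadamards in Eq.~\eqref{eq:bell} factorize over the matched pairs, so the pulled-back projector is \(\ketbra{\Phi_{\bm a,\bm b}}{\Phi_{\bm a,\bm b}}\). Here
\[
\ket{\Phi_{\bm a,\bm b}}=(D_{\bm a,\bm b}\otimes\id)\ket{\Omega},\qquad \ket{\Omega}=d^{-1/2}\sum_{\bm i}\ket{\bm i}_S\ket{\bm i}_A,
\]
after reordering \(S_1A_1\cdots S_nA_n\) into \(S|A\). The per-pair signs combine into a global phase, which cancels in the projector.

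\emph{Step 3 (transpose trick).} I will use the identity \((\id\otimes B)\ket{\Omega}=(B^{\mathsf T}\otimes\id)\ket{\Omega}\) together with \(\bra{\Omega}(X\otimes Y)\ket{\Omega}=d^{-1}\Tr(XY^{\mathsf T})\). These give
\[
p(\bm a,\bm b\mid\rho)=\bra{\Omega}\bigl(D^\dagger\rho D\otimes\ketbra{\phi^*}{\phi^*}\bigr)\ket{\Omega}=\tfrac1d\Tr\bigl[D^\dagger\rho D\,(\ketbra{\phi^*}{\phi^*})^{\mathsf T}\bigr].
\]
Because \((\ketbra{\phi^*}{\phi^*})^{\mathsf T}=\ketbra{\phi}{\phi}\) in the computational basis, cyclicity of the trace yields Eq.~\eqref{eq:born-uic}. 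Specializing to \(\phi=\phi_n\) and comparing with Eq.~\eqref{eq:fixed-povm} gives \(\Tr[\rho E^{(n)}_{\bm a,\bm b}]\).

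\emph{Main obstacle.} The only delicate point is bookkeeping in Step 1. One must check that the ordering \(Z^{\bm a}X^{\bm b}\) in Eq.~\eqref{eq:field-to-pauli}, with \(\bm a\) read on \(S\) and \(\bm b\) read on \(A\), matches the circuit's labels rather than a swapped or \(\bm a\oplus\bm b\) relabeling. I will settle this by explicitly checking all four cases of \((a,b)\) on one pair. Phases need no tracking, since they cancel in the projector.
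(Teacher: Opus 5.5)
Your proof is correct, but it takes a different route from the paper's.

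The paper works in the Schr\"odinger picture with a pure input $\ket\psi$:
\begin{itemize}
\item it expands $\ket\psi_S\ket{\phi^*}_A$ in the computational basis and pushes it through the matched CNOTs, which force $\bm y=\bm x\oplus\bm b$ for the $A$ readout;
\item applying the Walsh--Hadamard layer then gives the amplitude $d^{-1/2}\sum_{\bm x}(-1)^{\bm a\cdot\bm x}\psi_{\bm x}\phi^*_{\bm x\oplus\bm b}$;
\item it identifies this as $d^{-1/2}\bra\phi D_{\bm a,\bm b}^\dagger\ket\psi$ via $D_{\bm a,\bm b}^\dagger\ket{\bm x}=(-1)^{\bm a\cdot\bm x}\ket{\bm x\oplus\bm b}$;
\item it extends to mixed $\rho$ by affinity.
\end{itemize}

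You instead pull the readout projectors back through $U_{\rm Bell}$ to the generalized Bell basis $(D_{\bm a,\bm b}\otimes\id)\ket\Omega$ and finish with the transpose trick. Your route handles mixed $\rho$ directly. It also makes transparent why the ancilla must carry $\ket{\phi^*}$ rather than $\ket\phi$: the conjugation is undone exactly by the transpose in $\bra\Omega(X\otimes Y)\ket\Omega=d^{-1}\Tr(XY^{\mathsf T})$. This exposes the fiducial-plus-Bell structure the construction is built on.

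The paper's computation, in exchange, uses only the actions of CNOT and Hadamard on basis states. It needs no entangled-state identities, and its label bookkeeping ($\bm a$ from the $S$ register, $\bm b$ from the $A$ register) is read off term by term.

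Your single-pair check is right. In fact the signs you allowed for are all $+1$: $Z_S$ commutes with $\operatorname{CNOT}_{S\to A}$, and $Z\ket+=\ket-$ exactly. So $U_{\rm Bell}^\dagger\ket{\bm a}\ket{\bm b}=(D_{\bm a,\bm b}\otimes\id)\ket\Omega$ holds with no phase at all, after reordering the qubits.
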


Appendix~\ref{app:bell-proof} explicitly maps the two-register readout
\((\bm a,\bm b)\) to the WH displacement \(D_{\bm a,\bm b}\) and proves
Eq.~\eqref{eq:born-uic} for an arbitrary input state \(\rho\).  Since
\(U_{\rm Bell}\) is fiducial independent, any exact circuit
preparing \(\ket{\phi^*}\) realizes the same POVM and reconstruction map;
different choices affect only the preparation resources, such as gate count,
depth, connectivity, and ancillary workspace.

\FloatBarrier
\section{Exact fiducial preparation and resource tradeoffs}
\label{sec:fiducial-preparation}

It remains to prepare the known ancilla state \(\ket{\phi_n^*}\).  For the two
unitary routes, the construction is easiest
to read in three layers: (i) encode the two product branches in one root
qubit, (ii) expand that one-qubit memory along a path or a tree, and (iii)
attach the common Bell analyzer.  The first two layers prepare
\(\ket{\phi_n}\).  Complex conjugation only sends
\(\theta_n\mapsto-\theta_n\), so the same topology and resource counts prepare
\(\ket{\phi_n^*}\); below we therefore analyze \(\ket{\phi_n}\) and postpone
the already-fixed Bell block.

There is only one parameterized local rule in the expansion: a two-qubit
split turns one logical memory into two smaller memories while preserving the
coherent branch label.  Each memory occupies one physical output qubit; its
label (m) records the number of output factors still encoded, not the number
of qubits occupied by the memory.  A nearest-neighbor path applies this rule
serially, whereas a balanced tree applies disjoint splits in parallel.
Figure~\ref{fig:memory-schedules} puts the rule and both schedules side by
side.  The optional measurement-assisted LCU route uses a different control
model and admits both heralded and deterministic adaptive realizations.
Table~\ref{tab:hierarchy} compares these resource points; each is followed by
the same Bell-analysis block.

\subsection{Exact preparation by binary memory splitting}

\subsubsection{Nearest-neighbor path}

The path construction introduces no separate memory qubit.  Before step $j$,
qubits $1,\ldots,j-1$ already have their final values in each coherent branch,
qubit $j$ of $A$ carries the encoding of the two product tails that remain,
and all qubits to its right are in $\ket0$.  The adjacent unitary on
$(j,j+1)$ fixes qubit $j$ as the next $\ket+$ or $\ket0$ output and transfers
the shortened encoding to qubit $j+1$, which is also an output qubit of $A$.
Linearity preserves the relative coefficient $e^{i\theta_n}$.  Thus all
intermediate information stays within $A$, and no work qubit is used.

\begin{theorem}[Ancilla-free nearest-neighbor preparation]
\label{thm:sequential}
For every $n\geq2$, there exist an explicitly preparable one-qubit state
$\ket{\eta_n}$ and explicitly constructible adjacent two-qubit unitaries
$G_j$, acting on qubits $(j,j+1)$, such that
\begin{equation}
G_{n-1}\cdots G_2G_1
\bigl(\ket{\eta_n}_1\ket{0^{n-1}}_{2,\ldots,n}\bigr)
=\ket{\phi_n}.
\label{eq:sequential}
\end{equation}
The circuit uses one single-qubit initialization and $n-1$ adjacent two-qubit
gates, with two-qubit depth $n-1$ and no work qubits, measurements, or
feedforward.  Compiling the initialization and first split jointly gives
$1+2(n-2)=2n-3$ CNOTs and the same CNOT depth.
\end{theorem}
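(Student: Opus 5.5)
The plan is to encode the two product branches in a single travelling \emph{memory} qubit and preserve their overlap at each step. For $k\geq1$ I would fix two real single-qubit states $\ket{m^{(k)}_+},\ket{m^{(k)}_0}$ with
\begin{equation*}
\braket{m^{(k)}_+}{m^{(k)}_0}=\braket{+^k}{0^k}=2^{-k/2}.
\end{equation*}
For example, take $\ket{m^{(k)}_{+}}=\cos\alpha_k\ket0+\sin\alpha_k\ket1$ and $\ket{m^{(k)}_0}=\cos\alpha_k\ket0-\sin\alpha_k\ket1$ with $\cos 2\alpha_k=2^{-k/2}$. Up to a fixed single-qubit rotation, this gives $\ket{m^{(1)}_+}=\ket+$ and $\ket{m^{(1)}_0}=\ket0$. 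The inductive invariant before step $j$ is
\begin{equation*}
\frac{1}{\sqrt{\mathcal N_n}}\Bigl(\ket{+^{j-1}}\ket{m^{(k)}_+}_j+e^{i\theta_n}\ket{0^{j-1}}\ket{m^{(k)}_0}_j\Bigr)\ket{0^{n-j}},\qquad k=n-j+1 .
\end{equation*}
I would set $\ket{\eta_n}:=(\ket{m^{(n)}_+}+e^{i\theta_n}\ket{m^{(n)}_0})/\sqrt{\mathcal N_n}$. This state is normalized exactly because its overlap reproduces $\braket{+^n}{0^n}$, so $\|\ket{\eta_n}\|^2=\mathcal N_n/\mathcal N_n$ from Eq.~\eqref{eq:fid}. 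At $j=n$ (that is, $k=1$) the invariant is $\ket{\phi_n}$, up to the fixed final rotation, which I would absorb into $G_{n-1}$.

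The split rule would come next. I need $G_j$ on qubits $(j,j+1)$ such that
\begin{equation*}
G_j\ket{m^{(k)}_+}\ket0=\ket+\ket{m^{(k-1)}_+},\qquad G_j\ket{m^{(k)}_0}\ket0=\ket0\ket{m^{(k-1)}_0}.
\end{equation*}
The input pair and the output pair have identical Gram matrices, since $\braket{+}{0}\cdot2^{-(k-1)/2}=2^{-k/2}$. Therefore the linear map defined on the two-dimensional span is an isometry. It extends to a unitary on $\C^2\otimes\C^2$ by completing orthonormal bases of the complements, which is explicit Gram--Schmidt. Linearity then carries the coefficient $e^{i\theta_n}$ through unchanged, and the already fixed qubits $1,\dots,j-1$ are untouched. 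Induction on $j$ gives Eq.~\eqref{eq:sequential}. The resources follow directly: one initialization, $n-1$ adjacent gates acting strictly in sequence, so depth $n-1$, and no other qubits. For $\ket{\phi_n^*}$ I would replace $\theta_n$ by $-\theta_n$ in $\ket{\eta_n}$ only, because all the $G_j$ are real.

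The CNOT count is the step I expect to need the most care. Because qubit $j+1$ always enters $G_j$ in $\ket0$, only the isometry $\C^2\to\C^2\otimes\C^2$ matters. I would invoke the standard fact that an isometry from one qubit to two qubits needs at most two CNOTs, which is weaker than the three needed for a generic two-qubit unitary. The argument uses the freedom in the unitary completion, either by citing isometry-synthesis results or constructively: write the isometry as a uniformly controlled rotation on qubit $j+1$ conditioned on qubit $j$ after a local basis change, and each such multiplexed $R_y$ costs two CNOTs. This gives $2$ CNOTs per split for $G_2,\dots,G_{n-1}$. For the first step, $G_1(\ket{\eta_n}\ket0)$ is simply a fixed two-qubit state, and any two-qubit state is preparable with one CNOT plus local gates via its Schmidt decomposition. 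The total is therefore $1+2(n-2)=2n-3$ CNOTs. The CNOT depth is the same, since the gates act sequentially on overlapping pairs.
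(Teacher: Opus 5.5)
Your proposal follows the paper's proof. It uses one-qubit memory states whose overlap reproduces $\braket{+^k}{0^k}=2^{-k/2}$, the equal-Gram-matrix argument that makes each split an isometry extendable to an adjacent two-qubit unitary, and the same loop invariant. It also has the same $1+2(n-2)$ ledger, with a one-CNOT Schmidt-decomposition preparation of the fused root state. Your symmetric encoding $\cos\alpha_k\ket0\pm\sin\alpha_k\ket1$, with the final $R_y(\pi/4)$ absorbed into $G_{n-1}$, is a harmless variant of the paper's choice $\ket{\mu_0^{(m)}}=\ket0$, $\ket{\mu_+^{(m)}}=c_m\ket0+r_m\ket1$. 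The paper's choice reaches $\ket0$ and $\ket+$ at $m=1$ with no correction.

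One of your two justifications for two CNOTs per nonroot split is wrong, however. Take local gates, then a single uniformly controlled rotation on qubit $j+1$ controlled by qubit $j$, then further local gates $A\otimes B$. This circuit maps the inputs $\ket{x}\ket0$ onto the span of $A\ket0\otimes\ket{w_0}$ and $A\ket1\otimes\ket{w_1}$. These are two product vectors whose qubit-$j$ factors are orthogonal.

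The image you need is $\mathrm{span}\{\ket+\ket{m_+^{(k-1)}},\ket0\ket{m_0^{(k-1)}}\}$. It contains exactly two product rays, namely $\ket+\ket{m_+^{(k-1)}}$ and $\ket0\ket{m_0^{(k-1)}}$, because any combination with both coefficients nonzero has a rank-two coefficient matrix. Equality of the two spans would force $A\ket0$ and $A\ket1$ to be proportional to $\ket+$ and $\ket0$, hence $\ket+\perp\ket0$, which is false. The freedom in the unitary completion cannot help, since only the action with qubit $j+1$ in $\ket0$ is constrained.

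The count is still correct via the isometry-synthesis result you mention (Iten et al., which the paper also invokes). The paper makes it explicit with
$U_{a,b}=[R_y(\alpha_a)\otimes I]\operatorname{CNOT}_{2\to1}[R_y(-\alpha_a)\otimes R_y(-\delta_{a,b})]\operatorname{CNOT}_{1\to2}[I\otimes R_y(\delta_{a,b})]$. This is a controlled rotation from the memory wire onto the fresh wire, followed by one in the opposite direction, with a nontrivial rotation on the memory wire between the CNOTs. That structure is exactly what a single multiplexed $R_y$ lacks.
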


\begin{proof}
\emph{Memory encoding and one path step.}
For $m\geq1$, define the one-qubit memory states representing the two
unfinished $m$-factor product tails by
\begin{equation}
\begin{aligned}
c_m&:=2^{-m/2},& r_m&:=\sqrt{1-c_m^2},\\
\ket{\mu_0^{(m)}}&:=\ket0,&
\ket{\mu_+^{(m)}}&:=c_m\ket0+r_m\ket1.
\end{aligned}
\label{eq:memory-states}
\end{equation}
Here $m$ is the number of output factors still to be generated.  The two
encoding states have overlap $c_m$, exactly matching the overlap of the product
tails $\ket{0^{\otimes m}}$ and $\ket{+^{\otimes m}}$.
First prepare on qubit $1$
\begin{equation}
\ket{\eta_n}=\frac{\ket{\mu_+^{(n)}}+
e^{i\theta_n}\ket{\mu_0^{(n)}}}{\sqrt{\mathcal N_n}}.
\label{eq:moving-memory-initial}
\end{equation}
By the definition of $\mathcal N_n$ in Eq.~\eqref{eq:fid},
$\ket{\eta_n}$ is normalized.

For $m\geq2$, prescribe one shortening step by
\begin{align}
U_m\ket{\mu_0^{(m)}}\ket0
&=\ket{\mu_0^{(1)}}\ket{\mu_0^{(m-1)}},\nonumber\\
U_m\ket{\mu_+^{(m)}}\ket0
&=\ket{\mu_+^{(1)}}\ket{\mu_+^{(m-1)}}.
\label{eq:moving-memory-isometry}
\end{align}
Since $\ket{\mu_0^{(1)}}=\ket0$ and
$\ket{\mu_+^{(1)}}=\ket+$, this step writes one physical output factor and
passes the shortened encoding to the next qubit of $A$.  The input overlap is
$c_m$, and the output overlap is
$2^{-1/2}c_{m-1}=c_m$.  The two pairs therefore have the same Gram matrix, so
the prescribed map is an isometry on their span and extends to a two-qubit
unitary.  Appendix~\ref{app:path-block-compilation} gives an explicit unitary
completion.
At chronological step $j$, the unfinished tail contains $m=n-j+1$ output
factors; define the physical gate on sites $(j,j+1)$ by
\begin{equation*}
G_j:=U_{n-j+1}^{(j,j+1)},\qquad j=1,\ldots,n-1.
\end{equation*}

For the path, applying $G_1=U_n^{(1,2)}$ produces the root state below;
we suppress the untouched factor $\ket{0^{n-2}}$ on qubits $3,\ldots,n$:
\begin{equation}
\begin{aligned}
\ket{\xi_{1,n-1}^{(n)}}
&:=G_1\bigl(\ket{\eta_n}_1\ket0_2\bigr)\\
&=\frac1{\sqrt{\mathcal N_n}}\Bigl(
\ket+_1\ket{\mu_+^{(n-1)}}_2\\[-2pt]
&\hspace{2.0em}
+e^{i\theta_n}\ket0_1\ket{\mu_0^{(n-1)}}_2\Bigr).
\end{aligned}
\label{eq:first-sequential-step}
\end{equation}
The first output factor is now fixed in both coherent branches, while qubit
$2$ carries the encoding of the remaining $n-1$ factors.

\emph{Loop invariant and termination.}
Iterating the same rule gives, after $k$ gates and for
$1\leq k\leq n-1$,
\begin{equation}
\begin{aligned}
\ket{\psi_k}=\frac{1}{\sqrt{\mathcal N_n}}\Bigl(&
\ket+^{\otimes k}\ket{\mu_+^{(n-k)}}\\[-2pt]
&+e^{i\theta_n}\ket0^{\otimes k}
\ket{\mu_0^{(n-k)}}\Bigr)\ket0^{\otimes(n-k-1)}.
\end{aligned}
\label{eq:moving-memory-invariant}
\end{equation}
The three displayed factors occupy, from left to right, qubits $1{:}k$,
$k+1$, and $k+2{:}n$; the final factor is absent when $k=n-1$.
Equation~\eqref{eq:first-sequential-step} is the case $k=1$.  Applying
$G_{k+1}=U_{n-k}^{(k+1,k+2)}$ and
Eq.~\eqref{eq:moving-memory-isometry} proves the step from $k$ to $k+1$ for
$1\leq k\leq n-2$.
For $k=n-1$, $\ket{\mu_+^{(1)}}=\ket+$ and
$\ket{\mu_0^{(1)}}=\ket0$.  The two completed components are therefore
$\ket+^{\otimes n}$ and $e^{i\theta_n}\ket0^{\otimes n}$, and their coherent
sum is Eq.~\eqref{eq:fid}, proving Eq.~\eqref{eq:sequential}.  The one-CNOT
root block in Appendix~\ref{app:path-block-compilation} directly prepares
Eq.~\eqref{eq:first-sequential-step} from $\ket{00}$; it replaces the separate
initialization and $G_1$.  The same appendix gives a two-CNOT completion of
each later split, yielding the CNOT count stated in the theorem.
\end{proof}

\subsubsection{Balanced interaction tree}

Here $m$ labels the unresolved block size, whereas
$\ket{\mu_s^{(m)}}$ occupies only one qubit.  The left-to-right path
repeatedly partitions this block size as $m=1+(m-1)$.  To expose parallelism,
choose instead $m=a+b$ with $a,b\geq1$ and apply $U_{a,b}$ to the memory and
an unused output qubit.  The result is two single-qubit memories associated
with child blocks of sizes $a$ and $b$.  Explicitly,
\begin{align}
U_{a,b}\ket{\mu_0^{(m)}}\ket0
&=\ket{\mu_0^{(a)}}\ket{\mu_0^{(b)}},\nonumber\\
U_{a,b}\ket{\mu_+^{(m)}}\ket0
&=\ket{\mu_+^{(a)}}\ket{\mu_+^{(b)}}.
\label{eq:binary-memory-splitting}
\end{align}
Indeed, the input overlap is $c_m$, while the output overlap is
$c_ac_b=c_{a+b}=c_m$; the two prescribed pairs have the same Gram matrix and
therefore define an isometry that extends to a two-qubit unitary.  The path is
the special choice $(a,b)=(1,m-1)$, which recovers
Eq.~\eqref{eq:moving-memory-isometry} with $U_m=U_{1,m-1}$.
Choosing $a$ and $b$ as nearly equal as possible gives the parallel schedule
below.

\begin{figure*}[t!]
\centering
\includegraphics[width=0.80\textwidth]{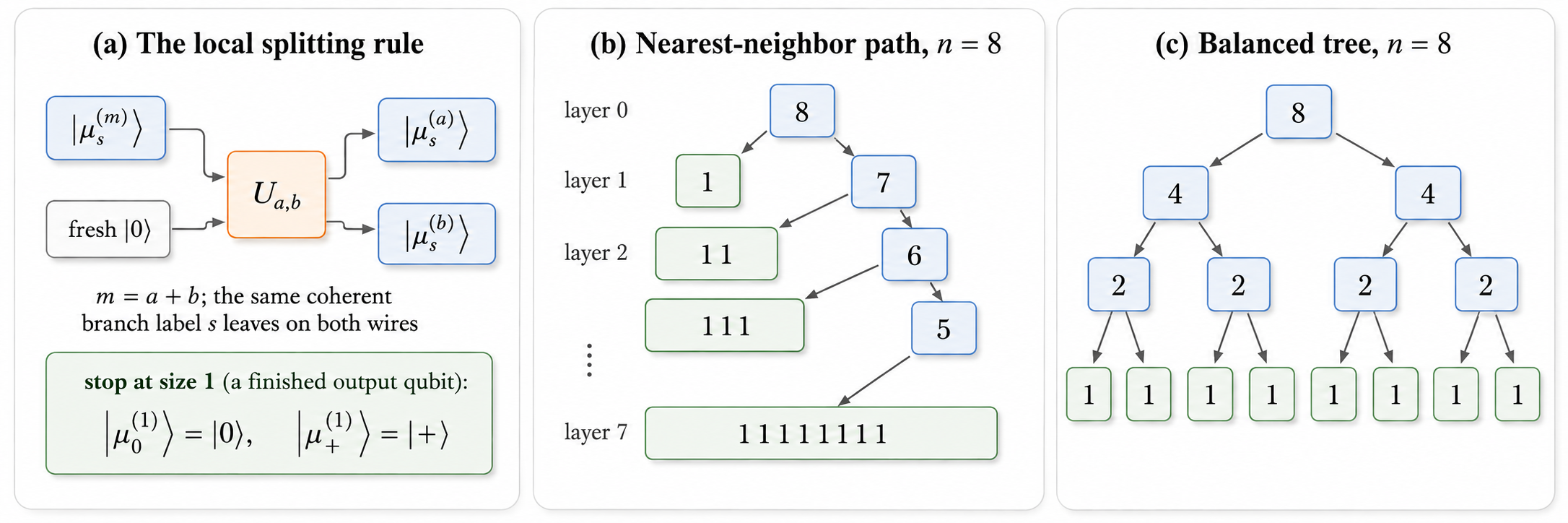}
\caption{Binary-memory splitting for \(n=8\).  Blue boxes carry one-qubit
encodings \(\ket{\mu_s^{(k)}}\); green boxes labeled \(1\) are completed
output qubits.
(a) \(U_{a,b}\) preserves \(s\) while splitting \(m=a+b\); ``fresh'' is an
unused output qubit of \(A\).  (b) Repeated \((1,m-1)\) splits form a
nearest-neighbor path; each green bar collects the outputs completed so far.
(c) Balanced splits on disjoint pairs run in parallel.  Both schedules use
only the \(n\) output qubits.}
\label{fig:memory-schedules}
\end{figure*}

\begin{theorem}[Ancilla-free balanced-tree preparation]
\label{thm:balanced-memory-tree}
With arbitrary one- and two-qubit gates and balanced-tree connectivity,
$\ket{\phi_n}$ can be prepared exactly on its $n$-qubit register using no
work qubits.  Initialize one of the $n$ output qubits in $\ket{\eta_n}$ from
Eq.~\eqref{eq:moving-memory-initial} and the other $n-1$ output qubits in
$\ket0$.  Whenever a qubit carries a one-qubit encoding for $m>1$ output
factors not yet written, choose
\begin{equation}
a=\lfloor m/2\rfloor,
\qquad b=\lceil m/2\rceil.
\label{eq:balanced-split-sizes}
\end{equation}
Apply $U_{a,b}$ to that qubit and one output qubit of $A$ still in $\ket0$.
After the split, the original wire carries the one-qubit memory for the
size-$a$ child block, while the previously unused output wire carries the
corresponding memory for the size-$b$ child block.  Perform all splits on
disjoint pairs in parallel and repeat until every unresolved block has size
one.  The circuit uses $n-1$ two-qubit gates and has two-qubit depth
$\lceil\log_2n\rceil$.
\end{theorem}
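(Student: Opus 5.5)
The plan is a loop-invariant argument like the one in the proof of Theorem~\ref{thm:sequential}, with the path schedule replaced by a balanced binary partition of the $n$ output qubits. Given Eq.~\eqref{eq:binary-memory-splitting}, the splitting gates are unitaries fixed by a Gram-matrix match. So the proof reduces to three things: a correctness invariant, a count of the gates, and a count of the parallel rounds.

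\emph{Invariant.} After every round, the set of output qubits is partitioned into consecutive blocks $B_1,\dots,B_K$ of sizes $m_1,\dots,m_K$. Each block has one designated wire $w_k\in B_k$, and every other qubit of $B_k$ is still in $\ket0$. The claim is that the state equals
\begin{equation*}
\frac{1}{\sqrt{\mathcal N_n}}\Bigl(\bigotimes_k\ket{\mu_+^{(m_k)}}_{w_k}
+e^{i\theta_n}\bigotimes_k\ket{\mu_0^{(m_k)}}_{w_k}\Bigr)
\otimes\ket{0\cdots0}_{\rm rest}.
\end{equation*}
At the start there is one block of size $n$ whose designated wire holds $\ket{\eta_n}$, and this is exactly Eq.~\eqref{eq:moving-memory-initial}. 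In one round, every block with $m_k>1$ is split with $a=\lfloor m_k/2\rfloor$ and $b=\lceil m_k/2\rceil$. The gate $U_{a,b}$ acts on $w_k$ and on an unused qubit of $B_k$; that qubit exists because $m_k\ge2$ and only $w_k$ is occupied. Distinct blocks use disjoint qubit pairs, so the gates commute and can be applied in parallel. By linearity, Eq.~\eqref{eq:binary-memory-splitting} acts branch by branch. It replaces the factor $\ket{\mu_s^{(m_k)}}$ with $\ket{\mu_s^{(a)}}\ket{\mu_s^{(b)}}$ for the same label $s$ in both branches, so the relative phase $e^{i\theta_n}$ and the normalization are untouched. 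Repartitioning $B_k$ into child blocks of sizes $a$ and $b$, each containing one of the two wires, restores the invariant. When every block has size one, $\ket{\mu_+^{(1)}}=\ket+$ and $\ket{\mu_0^{(1)}}=\ket0$, so the state is Eq.~\eqref{eq:fid}. All qubits used are output qubits of $A$, and no measurement occurs.

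\emph{Counts.} Each split increases the number of blocks by exactly one, and the process runs from one block to $n$ blocks, so there are exactly $n-1$ two-qubit gates. For the depth, one round sends a block of size $m$ to blocks of size at most $\lceil m/2\rceil$. By induction, after $t$ rounds every block has size at most $\lceil n/2^t\rceil$, which uses $\lceil\lceil x\rceil/2\rceil=\lceil x/2\rceil$. This bound equals $1$ exactly when $t\ge\lceil\log_2 n\rceil$, so the two-qubit depth is at most $\lceil\log_2 n\rceil$. For the matching lower bound, I would note that the largest block after $t$ rounds is exactly $\lceil n/2^t\rceil$, because the balanced split keeps $\lceil m/2\rceil$. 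Alternatively, any circuit of two-qubit depth $t$ starting from a product state can entangle at most $2^t$ qubits, while $\ket{\phi_n}$ is not a product across any bipartition.

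The one step I expect to need care is connectivity: the qubit pairs actually coupled must be edges of the stated balanced-tree graph. I would handle this by \emph{defining} the balanced-tree connectivity as the set of pairs (memory wire, fresh wire) used by the recursion, for example by fixing the memory of a block at its leftmost qubit and taking the fresh qubit at the start of the right child. Everything else follows directly from the Gram-matrix isometry already established before the theorem.
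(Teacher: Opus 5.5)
Your proof is correct and essentially the paper's: the paper proves correctness by strong induction on the block size $m$, using the branchwise maps that send $\ket{\mu_0^{(m)}}\ket{0^{m-1}}$ and $\ket{\mu_+^{(m)}}\ket{0^{m-1}}$ to the completed products, where you use a round-by-round global invariant; it gets the depth from the recursion $L_{\rm tree}(m)=1+\max\{L_{\rm tree}(a),L_{\rm tree}(b)\}$ where you use the $\lceil n/2^t\rceil$ block-size bound, and the $n-1$ gate count is identical. Drop the ``alternatively'' lower-bound remark, though: it is false that depth $t$ can only produce states that are non-product across every cut on at most $2^t$ qubits (the depth-two 1D cluster state is a counterexample for every $n$), which is why the paper's general bound, Proposition~\ref{prop:balanced-depth-lower-bound}, needs the all-pairs $Z$-correlation argument and reaches only $\frac12\log_2 n$; your exact largest-block count already settles the depth of this schedule.
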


\begin{proof}
\emph{Recursive rule and correctness.}
We prove correctness by strong induction on the number $m$ of output factors
still to be generated.  Start with one qubit carrying the memory for an
unresolved block of size $m$ and $m-1$ output qubits in $\ket0$.  We claim
that the recursive splits implement the two branchwise maps
\begin{align}
\ket{\mu_0^{(m)}}\ket{0^{m-1}}&\longmapsto\ket{0^m},\nonumber\\
\ket{\mu_+^{(m)}}\ket{0^{m-1}}&\longmapsto\ket+^{\otimes m}.
\label{eq:balanced-tree-invariant}
\end{align}
For $m=1$, these are just
$\ket{\mu_0^{(1)}}=\ket0$ and $\ket{\mu_+^{(1)}}=\ket+$.
For $m>1$, choose $a$ and $b$ from
Eq.~\eqref{eq:balanced-split-sizes}.  Equation~\eqref{eq:binary-memory-splitting}
maps that memory to two single-qubit memories for child blocks of sizes $a$
and $b$.  Assign $a-1$ of the remaining zero qubits to the first child and
$b-1$ to the second.  The induction hypothesis expands these two disjoint
blocks into $a$ and $b$ completed factors, respectively, and the two
expansions can run in parallel.  This proves both maps above.

Equation~\eqref{eq:balanced-tree-invariant} describes the result after the
entire recursion has terminated.  Starting from
$\ket{\eta_n}\ket{0^{n-1}}$, the $\ket{\mu_+^{(n)}}$ component becomes
$\ket+^{\otimes n}$, while the $\ket{\mu_0^{(n)}}$ component becomes
$\ket{0^n}$.  Linearity preserves their relative coefficient, so the final
output is
\begin{equation}
\ket{\eta_n}\ket{0^{n-1}}
\xrightarrow{\text{all recursive splits}}
\frac{\ket+^{\otimes n}+e^{i\theta_n}\ket{0^n}}{\sqrt{\mathcal N_n}}
=\ket{\phi_n}.
\end{equation}

\emph{Gate count and depth.}
Each split replaces one encoding by two and uses one previously untouched
output qubit.  Starting with one unresolved block and ending with $n$
size-one blocks therefore requires $n-1$ splits.  All of these qubits belong
to $A$, so no work qubit is used.

Let $L_{\rm tree}(m)$ be the number of two-qubit layers needed to resolve a
block of size $m$.  The two child blocks are disjoint and can be expanded in
parallel, so
\begin{equation}
L_{\rm tree}(1)=0,\qquad
L_{\rm tree}(m)=1+\max\{L_{\rm tree}(a),L_{\rm tree}(b)\},
\end{equation}
for $m>1$.  Since $b=\lceil m/2\rceil\geq a$, strong induction gives
\begin{equation}
L_{\rm tree}(m)=1+\left\lceil\log_2 b\right\rceil
=\lceil\log_2m\rceil.
\end{equation}
Here we used
$1+\lceil\log_2\lceil m/2\rceil\rceil=\lceil\log_2m\rceil$, which also
holds when $m$ is not a power of two.
\end{proof}

\emph{Eight-qubit illustration.}
To make the recursion concrete, we now spell out the $n=8$ instance step by
step.
Label the eight output qubits of $A$ as $1,\ldots,8$.  All eight are present
from the start.  An ``unused'' qubit below means one of these output qubits
still in $\ket0$, not an additional work qubit.

\emph{Initialization.}
This is the $n=8$ instance of the one-qubit initialization in
Eq.~\eqref{eq:moving-memory-initial}.  Starting from $\ket{0^8}_A$, prepare
qubit $1$ in
\begin{equation*}
\ket{\eta_8}
=\frac{(1/16+e^{i\theta_8})\ket0
+(\sqrt{255}/16)\ket1}{\sqrt{\mathcal N_8}}.
\end{equation*}
A single-qubit gate suffices because $\ket{\eta_8}$ is normalized.  Leave the
other seven qubits in $\ket0$.  The initial state for the recursive splits is
therefore
\begin{equation}
\begin{aligned}
\ket{\Psi_0}:={}&\ket{\eta_8}_1\ket0_2\cdots\ket0_8\\
={}&\frac{1}{\sqrt{\mathcal N_8}}\Bigl(
\ket{\mu_+^{(8)}}_{1}
+e^{i\theta_8}\ket{\mu_0^{(8)}}_{1}\Bigr)\\[-2pt]
&{}\otimes\ket0_{2}\cdots\ket0_{8}.
\end{aligned}
\label{eq:eight-qubit-initial-state}
\end{equation}
Here both $\ket{\mu_+^{(8)}}$ and $\ket{\mu_0^{(8)}}$ are \emph{one-qubit}
memory states representing unresolved eight-factor product tails; neither is
an eight-qubit register.

\emph{Layer 1: $8\to4+4$.}
Use qubit $5$, which is still in $\ket0$, and apply $U_{4,4}$ to qubits
$(1,5)$.
Equation~\eqref{eq:binary-memory-splitting} acts on the two branches as
\begin{align*}
\ket{\mu_0^{(8)}}_{1}\ket0_{5}
&\longmapsto
\ket{\mu_0^{(4)}}_{1}\ket{\mu_0^{(4)}}_{5},\\
\ket{\mu_+^{(8)}}_{1}\ket0_{5}
&\longmapsto
\ket{\mu_+^{(4)}}_{1}\ket{\mu_+^{(4)}}_{5}.
\end{align*}
The other six output qubits remain in $\ket0$ and are suppressed below.
Applying these two rules to $\ket{\Psi_0}$ gives
\begin{equation}
\ket{\Psi_1}=\frac{1}{\sqrt{\mathcal N_8}}\Bigl(
\ket{\mu_+^{(4)}}_{1}\ket{\mu_+^{(4)}}_{5}
+e^{i\theta_8}
\ket{\mu_0^{(4)}}_{1}\ket{\mu_0^{(4)}}_{5}\Bigr).
\label{eq:eight-qubit-layer-one}
\end{equation}

\emph{Layer 2: each $4\to2+2$.}
Apply $U_{2,2}$ simultaneously to qubit pairs $(1,3)$ and $(5,7)$.  Thus each
size-four memory is split using a distinct output qubit still in $\ket0$,
while the other four output qubits remain in $\ket0$ and are suppressed.  The
result is
\begin{equation}
\begin{aligned}
\ket{\Psi_2}=\frac{1}{\sqrt{\mathcal N_8}}\Bigl(&
\bigotimes_{j\in\{1,3,5,7\}}\ket{\mu_+^{(2)}}_{j}\\[-2pt]
&+e^{i\theta_8}
\bigotimes_{j\in\{1,3,5,7\}}\ket{\mu_0^{(2)}}_{j}\Bigr).
\end{aligned}
\label{eq:eight-qubit-layer-two}
\end{equation}

\emph{Layer 3: each $2\to1+1$.}
Finally apply $U_{1,1}$ in parallel to
qubit pairs $(1,2)$, $(3,4)$, $(5,6)$, and $(7,8)$.  Every qubit now contains
one completed factor:
\begin{align}
\ket{\Psi_3}
&=\frac{1}{\sqrt{\mathcal N_8}}\left(
\bigotimes_{j=1}^{8}\ket{\mu_+^{(1)}}_{j}
+e^{i\theta_8}
\bigotimes_{j=1}^{8}\ket{\mu_0^{(1)}}_{j}\right)
\nonumber\\
&=\frac{\ket+^{\otimes8}+e^{i\theta_8}\ket0^{\otimes8}}
{\sqrt{\mathcal N_8}}
=\ket{\phi_8}.
\label{eq:eight-qubit-balanced-example}
\end{align}
The block sizes have therefore evolved as
$8\to4+4\to2+2+2+2\to1+\cdots+1$.  Each layer acts linearly on the two
coherent terms, so their normalization and relative phase are preserved.
The preparation consists of one single-qubit initialization followed by
$1+2+4=7$ two-qubit gates in three two-qubit layers.  Thus its
two-qubit depth is three, and it uses no qubit outside $A$.
This is precisely the $n=8$ instance of the recursive construction proved
above.

\subsubsection{Compilation and common CNOT ledger}

The path and balanced-tree constructions above act entirely within the
$n$-qubit output register $A$.  Neither invokes a fanout primitive, and
neither uses clean or dirty work qubits beyond $A$.  At each split, the input
qubit in $\ket0$ is one of the final output qubits of $A$ that has not yet
been written; after the split it remains part of the output.  Thus the tree
hides no ancillary workspace.  It remains to compile the abstract two-qubit
splits into CNOTs and one-qubit gates.

At the abstract level, either schedule contains $n-1$ split gates.  Compiling
all of them independently as one-to-two-qubit isometries would use
$2(n-1)$ CNOTs \cite{Iten2016}.  Appendix~\ref{app:path-block-compilation}
gives one explicit two-CNOT template for every $U_{a,b}$, including balanced
splits with $a,b>1$.  The root is cheaper: before it, the input is
one known product state and is not yet correlated with any completed output.
We therefore fuse the initialization and first split and compile their fixed
two-qubit output state, rather than the full root unitary.  Let
$(a,b)=(1,n-1)$ for the path and
$(\lfloor n/2\rfloor,\lceil n/2\rceil)$ for the balanced tree.  In either
case, write their joint target as
\begin{equation}
\ket{\xi_{a,b}^{(n)}}:=\frac{
\ket{\mu_+^{(a)}}\ket{\mu_+^{(b)}}
+e^{i\theta_n}\ket{00}}{\sqrt{\mathcal N_n}}.
\label{eq:root-two-qubit-state}
\end{equation}
Its coefficient-matrix determinant is
$e^{i\theta_n}r_ar_b/\mathcal N_n\neq0$, so $\ket{\xi_{a,b}^{(n)}}$ is entangled.
Appendix~\ref{app:path-block-compilation} constructs its direct preparation
from $\ket{00}$ using exactly one CNOT and one-qubit gates.  This root block
replaces jointly---it does not follow---the initialization and first split.
For $n\geq2$, the remaining $n-2$ nonroot splits use two CNOTs each, giving
\begin{equation}
C_{\rm prep}
=\underbrace{1}_{\substack{\text{initialization and}\\\text{root split}}}
+\underbrace{2(n-2)}_{\substack{\text{$n-2$ nonroot splits,}\\
\text{two CNOTs each}}}
=2n-3.
\label{eq:memory-prep-cnot-count}
\end{equation}
For the path, these CNOTs are sequential, so the CNOT depth is $2n-3$.
For the balanced tree, let $L=\lceil\log_2n\rceil$.  The joint first block
has CNOT depth one, and each of the remaining $L-1$ tree levels becomes two
parallel CNOT layers.  Its CNOT depth is therefore
$1+2(L-1)=2L-1$.  For $n=1$, no preparation CNOT is needed.

\subsubsection{Why the unitary schedules are optimal}

\begin{proposition}[Optimal two-qubit-gate count]
\label{prop:balanced-gate-optimality}
For $n\geq2$, every unitary circuit that starts from a product state and
prepares $\ket{\phi_n}$ exactly requires at least $n-1$ two-qubit gates, even
with any number of additional qubits initialized in a product state.  The
two-qubit-gate count in
Theorem~\ref{thm:balanced-memory-tree} is therefore optimal.
\end{proposition}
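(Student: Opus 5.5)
Every qubit of $\ket{\phi_n}$ must become entangled with the rest, and each two-qubit gate can merge at most two connected components of an interaction graph. Counting merges then gives the $n-1$ bound.

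\emph{Interaction graph.} Take a unitary circuit of two-qubit gates (with arbitrary one-qubit gates) acting on the $n$ output qubits of $A$ plus any number of extra qubits, all starting in a product state. The vertices of the graph are all qubits, and each two-qubit gate contributes an edge between the pair it acts on. By induction over the gate sequence, the state at every step is a tensor product over the connected components of the graph built so far. The base case is the product input. For the inductive step, a one-qubit gate acts inside a single component, and a two-qubit gate acts inside the union of at most two components, which is exactly the merged component.

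\emph{Entanglement of the fiducial.} Suppose the final state is $\ket{\phi_n}_A\otimes\ket{\omega}$ on the extra qubits. Then the reduced state on $A$ is the pure state $\ket{\phi_n}$. I will show that for every nontrivial bipartition $A=B\cup B^c$, $\ket{\phi_n}$ is not a product across the cut. Up to normalization,
\[
\ket{\phi_n}=\ket{+}^{\otimes |B|}\ket{+}^{\otimes |B^c|}+e^{i\theta_n}\ket{0}^{\otimes |B|}\ket{0}^{\otimes |B^c|}.
\]
The vectors $\ket{+}^{\otimes k}$ and $\ket{0}^{\otimes k}$ are linearly independent for $k\geq1$, since their overlap is $2^{-k/2}<1$. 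Hence the Schmidt rank across the cut is $2$. This is the same determinant argument as for Eq.~\eqref{eq:root-two-qubit-state}, where the coefficient-matrix determinant $e^{i\theta_n}r_ar_b/\mathcal N_n$ is nonzero.

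\emph{Conclusion.} Let $K$ be the component of the final graph containing qubit $1$ of $A$. If some output qubit $j$ of $A$ lay outside $K$, the global state would factorize across $K$ versus its complement. Tracing out the extra qubits, which are in the pure state $\ket{\omega}$ tensored with $\ket{\phi_n}$, the state $\ket{\phi_n}$ would then factorize across $(A\cap K)$ versus $(A\setminus K)$. Both parts are nonempty, since qubit $1\in K$ and qubit $j\notin K$, so this contradicts the previous step. Therefore all $n$ output qubits lie in one connected component. Every connected graph containing at least $n$ vertices has at least $n-1$ edges, so at least $n-1$ two-qubit gates are needed. Theorem~\ref{thm:balanced-memory-tree} attains $n-1$, so it is optimal.

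\emph{Main obstacle.} The delicate step is the final factorization when extra qubits are present. One might worry that the final state only reproduces $\ket{\phi_n}$ after tracing out the extra qubits, with the extras still entangled with $A$. This cannot happen, because $\ket{\phi_n}$ is pure. A pure marginal forces the global state to be $\ket{\phi_n}\otimes\ket{\omega}$, and then the component factorization restricts cleanly to $A$.
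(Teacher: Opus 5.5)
Your proof is correct and matches the paper's argument: Schmidt rank two of $\ket{\phi_n}$ across every nontrivial cut forces all $n$ output qubits into one connected component of the two-qubit-gate interaction graph, so the graph has at least $n-1$ edges. You also make explicit two steps the paper leaves implicit: the induction showing that the state factorizes over connected components, and the purity argument showing that, when extra qubits are present, a factorization of the global state restricts to a factorization of $\ket{\phi_n}$ on the output register.
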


\begin{proof}
Across every nontrivial bipartition $S\mid\bar S$, the two branch vectors
$\ket+_S^{\otimes|S|},\ket0_S^{\otimes|S|}$ are linearly independent, as are
their counterparts on $\bar S$.  Hence $\ket{\phi_n}$ has Schmidt rank two
across every such cut.  In the interaction graph whose edges are the
two-qubit gates, all $n$ output qubits must therefore lie in one connected
component; otherwise the output would factor across a nontrivial cut.  A
connected graph containing $n$ specified vertices has at least $n-1$ edges.
\end{proof}

\begin{proposition}[Logarithmic unitary-depth lower bound]
\label{prop:balanced-depth-lower-bound}
For $n\geq2$, consider circuits starting from a product state on the output
and work registers, with arbitrary one-qubit gates, layers of disjoint
two-qubit gates, all-to-all connectivity, and any number of work qubits.
Exact preparation of $\ket{\phi_n}$ requires two-qubit depth
\begin{equation}
D\geq\frac12\log_2n.
\label{eq:depth-lower-bound}
\end{equation}
The balanced tree is therefore asymptotically depth optimal.
\end{proposition}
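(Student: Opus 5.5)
Since $\ket{\phi_n}$ has Schmidt rank two across every nontrivial bipartition (see the proof of Proposition~\ref{prop:balanced-gate-optimality}), every output qubit depends on every other output qubit. We can therefore argue through light cones. Take a circuit of two-qubit depth $D$ acting on the $n$ output qubits plus any number of work qubits, all initially in a product state. After the circuit, the work qubits end in some state. Exact preparation of $\ket{\phi_n}$ on the output register means the full final state is $\ket{\phi_n}\otimes\ket{\omega}$ for some work state $\ket\omega$, because $\ket{\phi_n}$ is pure and so cannot be entangled with the work register. The main tool is the backward light cone. In depth $D$ with one-qubit gates free, each output qubit $j$ has a backward light cone $L_j$ of at most $2^D$ initial qubits: each two-qubit layer at most doubles the set of qubits that can influence it. Symmetrically, the forward light cone of any initial qubit reaches at most $2^D$ final qubits.

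Next we show that if two output qubits $i,j$ have disjoint backward light cones, their reduced state factorizes: $\rho_{ij}=\rho_i\otimes\rho_j$. This is the standard fact that the reduced state of a set of final qubits is computed from the initial state restricted to the union of their light cones. Disjoint cones of a product input give independent marginals. For $\ket{\phi_n}$ this factorization fails for every pair $i\neq j$. To verify this we compute a nonzero connected correlator, for instance $\langle Z_iZ_j\rangle-\langle Z_i\rangle\langle Z_j\rangle$, or $XX$, using the explicit fiducial. Both marginals and the pair marginal come from the two-branch form of $\ket{\phi_n}$. A short calculation shows the connected $Z_iZ_j$ correlator is proportional to the weight of the $\ket{0^n}$ branch times the weight of the $\ket{+^n}$ branch, plus cross terms. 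The cleanest robust check is to note that $\rho_{ij}$ has rank two, since it comes from the two branch components, while a product of two mixed single-qubit states would have rank four, and $\rho_i$ is mixed because of the Schmidt rank-two property. (A product in which one factor is pure would contradict the Schmidt rank two across the cut $\{i\}\mid$rest.) So we conclude that every pair of backward light cones must intersect.

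Then we count. Since $L_1$ meets each $L_j$, some initial qubit $q\in L_1$ lies in the light cones of many outputs. There are $n$ outputs and $|L_1|\le 2^D$, so by pigeonhole some $q\in L_1$ lies in at least $n/2^D$ of the cones $L_j$. Equivalently, the forward cone of $q$ contains at least $n/2^D$ outputs. Since forward cones also have size at most $2^D$, we get $n/2^D\le 2^D$, hence $4^D\ge n$ and $D\ge\tfrac12\log_2 n$. Because Theorem~\ref{thm:balanced-memory-tree} achieves $\lceil\log_2 n\rceil$, the balanced tree is within a factor of two and is asymptotically optimal.

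The main obstacle is making the pairwise non-factorization claim airtight in the presence of work qubits. We handle it by noting that the output register is exactly in the pure state $\ket{\phi_n}$, so its two-qubit marginals are those of $\ket{\phi_n}$ regardless of the workspace. The rank argument above then applies: $\rho_{ij}$ is supported on the span of two product vectors, so its rank is at most two, while a product of two mixed qubit states has rank four. We also need that each single-qubit marginal is mixed, which follows from linear independence of $\ket{+}$ and $\ket{0}$ in the branch decomposition.
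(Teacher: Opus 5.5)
Your proof is correct and follows the paper's argument. You show that all backward light cones must pairwise intersect, and since backward and forward cones each have at most $2^D$ qubits, this gives $n\le 4^D$. The only difference is the witness that $\rho_{ij}\neq\rho_i\otimes\rho_j$. The paper computes the connected correlator $\operatorname{Cov}_{\phi_n}(Z_i,Z_j)=\nu_n(1-\nu_n)>0$ explicitly. Your rank count is equally valid: $\rho_{ij}$ is supported on $\operatorname{span}\{\ket{++},\ket{00}\}$, whereas a product of two full-rank qubit marginals has rank four. It also has the minor advantage of not depending on the chosen phase $\theta_n$.
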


\begin{proof}
The key observation is that the target $\ket{\phi_n}$ has nonzero connected $Z$ correlation
between every pair of output qubits.  To verify this, fix two distinct outputs
$i$ and $j$, let $Z_i$ denote Pauli $Z$ on output $i$, and set
$x_n=2c_n\cos\theta_n$ and $\nu_n=(1+x_n)/(2+x_n)$.  Direct evaluation gives
\begin{align}
\langle Z_i\rangle_{\phi_n}
&=\langle Z_iZ_j\rangle_{\phi_n}=\nu_n,\nonumber\\
\operatorname{Cov}_{\phi_n}(Z_i,Z_j)
&=\nu_n(1-\nu_n)>0.
\label{eq:all-pair-Z-covariance}
\end{align}
For $n\geq2$, the phases in Eq.~\eqref{eq:fid} give
$x_n\in[-1/2,0)$ and hence $\nu_n\in[1/3,1/2)$.
We now turn this all-pairs correlation into a light-cone bound.  Let
$\mathcal B_i$ be the backward input light cone of output $i$.  At depth $D$,
$|\mathcal B_i|\leq2^D$.  If $\mathcal B_i$ and $\mathcal B_j$ were disjoint,
the product input would give zero connected correlation; hence
$\mathcal B_i\cap\mathcal B_j\neq\varnothing$ for every pair.  Let
$\mathcal F_a$ denote the forward output light cone of input wire $a$.  Fix
$i$.  Every output $j$ shares some ancestor $a\in\mathcal B_i$ with $i$, so
$j\in\mathcal F_a$.  Since $|\mathcal F_a|\leq2^D$, all $n$ outputs obey
\begin{equation}
n\leq\sum_{a\in\mathcal B_i}|\mathcal F_a|\leq4^D.
\end{equation}
This proves Eq.~\eqref{eq:depth-lower-bound}.
\end{proof}

\subsubsection{Small-system endpoints}

For $n=1$, the complete tetrahedral-SIC circuit uses one arbitrary preparation
gate
$U_{\rm prep}(\phi_1^*)$, one Hadamard, and one CNOT.  One CNOT is minimal for
a deterministic implementation of a qubit SIC POVM using one ancilla qubit
and computational-basis readout of both qubits \cite{You2025}.

For $n=3$, the ancilla-free CNOT complexity of preparing the Hoggar fiducial is
exactly three, assuming arbitrary one-qubit gates and all-to-all CNOT
connectivity.  Figure~\ref{fig:hoggar-circuit} gives a three-CNOT construction,
and Appendix~\ref{app:hoggar} proves that no circuit with at most two CNOTs can
prepare the state.  Adding one Bell-analysis CNOT for each of the three
system--ancilla pairs gives a six-CNOT implementation of the three-qubit
Hoggar SIC POVM.

\subsection{Optional measurement-assisted LCU}
\label{subsec:lcu}

The balanced tree already gives the strongest measurement-free resource point
used in this work.  Mid-circuit measurement and feedforward have also been
used for dynamic generalized measurements and constant-depth long-range
entangling gates \cite{Ivashkov2024,Baumer2025}.  Related measurement-assisted
constructions prepare arbitrary states at constant quantum depth with
exponential auxiliary resources \cite{Zi2025}, or exploit permutation
symmetry \cite{Luo2026} and suitable matrix-product-state structure
\cite{Smith2024,Li2026}.  Here they allow the two product
branches to be combined by a short heralded LCU procedure.  Following the
two-term LCU idea \cite{ChildsWiebe2012}, a
branch-label qubit coherently selects $H^{\otimes n}$ or
$e^{i\theta_n}\id$; interference produces the target in one branch:
\begin{equation}
\frac{H^{\otimes n}+e^{i\theta_n}\id}{2}\ket{0^n}
=\frac{\sqrt{\mathcal N_n}}{2}\ket{\phi_n}
=\sqrt{p_n}\ket{\phi_n}.
\label{eq:two-term-lcu}
\end{equation}
Here $p_n=\mathcal N_n/4$ is the probability of that branch.

For the coherent reference implementation, let $A$ be the $n$-qubit output
register, $F$ a single flag qubit carrying the branch label, and $W$ an
$(n-1)$-qubit fanout register.  Write
$\ket{\mathbf b}_W:=\ket b^{\otimes(n-1)}$ for $b\in\{0,1\}$.  For $n=1$,
$W$ is empty and the fanout unitary below is the identity.

\begin{figure}[t]
\centering
\definecolor{lcuBlue}{HTML}{356D9B}
\definecolor{lcuGold}{HTML}{B77A24}
\definecolor{lcuGreen}{HTML}{3E7C59}
\begin{tikzpicture}[
x=1cm,y=1cm,font=\scriptsize,>=Latex,
lcustep/.style={draw=lcuBlue!75,fill=lcuBlue!5,rounded corners=2pt,
minimum width=7.25cm,text width=6.75cm,minimum height=7.5mm,align=center,
inner xsep=3pt,inner ysep=2pt,line width=0.5pt},
control/.style={lcustep,draw=lcuGold!80!black,fill=lcuGold!6},
success/.style={lcustep,draw=lcuGreen!80!black,fill=lcuGreen!6},
arrow/.style={-{Latex[length=1.35mm,width=0.95mm]},draw=black!65,
line width=0.55pt,shorten <=1pt,shorten >=1pt}
]
\node[lcustep] (s1) at (0,0)
{1. \(U_{\rm flag}(\theta_n)\) on \(F\): create the two branch amplitudes};
\node[control] (s2) at (0,-1.02)
{2. \(U_{\rm fan}\) on \(FW\): distribute the classical-basis branch label};
\node[lcustep] (s3) at (0,-2.04)
{3. \(U_{\rm sel}\): write \(\ket+^{\otimes n}\) or \(\ket{0^n}\) on \(A\)};
\node[control] (s4) at (0,-3.06)
{4. \(U_{\rm fan}^{\dagger}\): return every workspace qubit in \(W\) to \(\ket0\)};
\node[success] (s5) at (0,-4.08)
{5. \(H_F\), then measure \(F\):\quad
\(F=0\Rightarrow\ket{\phi_n}\) \((p_n\geq3/8)\);\quad
\(F=1\Rightarrow\) discard and retry};
\draw[arrow] (s1) -- (s2);
\draw[arrow] (s2) -- (s3);
\draw[arrow] (s3) -- (s4);
\draw[arrow] (s4) -- (s5);
\end{tikzpicture}
\caption{Coherent reference decomposition of the optional heralded LCU
preparation, read from top to bottom.  The adaptive realization below compiles
steps 2--4 into one exact dynamic common-control primitive.  In the coherent
decomposition, the fanout workspace carries the branch label only between
steps 2 and 4 and is reset before the flag is measured.  The unknown state has
not yet interacted with \(A\), so failed ancilla preparations can be discarded
without consuming a copy of \(\rho\).}
\label{fig:lcu-flow}
\end{figure}

\begin{lemma}[Constant-success branch synthesis]
\label{lem:branch-synthesis}
For every $n\geq1$, with arbitrary one-qubit gates and all-to-all CNOT
connectivity, there is an explicit unitary $V_n$ on $FAW$ such that
\begin{align}
V_n\ket{0^{2n}}
=\frac12\Bigl[&\ket0_F
\bigl(\ket+^{\otimes n}+e^{i\theta_n}\ket{0^n}\bigr)_A
\nonumber\\[-2pt]
{}+&\ket1_F
\bigl(\ket+^{\otimes n}-e^{i\theta_n}\ket{0^n}\bigr)_A
\Bigr]\ket{\mathbf0}_W .
\label{eq:lcu}
\end{align}
Consequently, measuring $F$ yields $F=0$ and leaves $A$ exactly in
$\ket{\phi_n}$ with probability $p_n=\mathcal N_n/4\geq3/8$; $W$ is reset in either
branch.  The circuit uses $n$ additional initialized work qubits---the flag
qubit $F$ and the $(n-1)$-qubit register $W$---together with $3n-2$ CNOTs,
$O(n)$ one-qubit gates, and CNOT depth
$2\lceil\log_2n\rceil+1$.
\end{lemma}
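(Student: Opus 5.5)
The plan is to build $V_n$ directly from the five steps of Fig.~\ref{fig:lcu-flow} and to verify Eq.~\eqref{eq:lcu} branch by branch in the computational basis of $F$. Every gate after step~1 is either a CNOT controlled by a qubit that holds a classical branch label, or a one-qubit gate on a fresh wire. So it suffices to track the two branches $F=0$ and $F=1$ separately and then recombine them by linearity.

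\emph{Step 1: flag.} Take $U_{\rm flag}(\theta_n):=\operatorname{diag}(1,e^{i\theta_n})\,H$ on $F$. It produces $(\ket0_F+e^{i\theta_n}\ket1_F)/\sqrt2$.

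\emph{Step 2: fanout.} $U_{\rm fan}$ copies the basis value of $F$ into each of the $n-1$ qubits of $W$. It is a CNOT tree rooted at $F$: every qubit already holding the label targets one fresh qubit per layer. This uses $n-1$ CNOTs in depth $\lceil\log_2 n\rceil$, and gives $(\ket0_F\ket{\mathbf0}_W+e^{i\theta_n}\ket1_F\ket{\mathbf1}_W)/\sqrt2$.

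\emph{Step 3: selection.} Attach to each output qubit $A_j$ a distinct label carrier: $F$ for $j=1$ and $W_{j-1}$ otherwise. Each pair then needs the map $\ket c\ket0\mapsto\ket c\ket+$ for $c=0$ and $\ket c\ket0\mapsto\ket c\ket0$ for $c=1$. The intended realization is $R_y(\alpha)\,\mathrm{CNOT}\,R_y(\beta)$ on $A_j$, using $R_y(\alpha)XR_y(\beta)=X R_y(\beta-\alpha)$. The branch conditions are $R_y(\alpha+\beta)\ket0=\ket+$ for $c=0$ and $XR_y(\beta-\alpha)\ket0=\ket0$ for $c=1$. These hold for $\alpha+\beta=\pi/2$ and $\beta-\alpha=\pi$, that is, $\alpha=-\pi/4$ and $\beta=3\pi/4$. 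This step uses one CNOT per output qubit, $n$ CNOTs in total, all on disjoint pairs and hence in a single CNOT layer. The state becomes $2^{-1/2}(\ket0_F\ket+^{\otimes n}\ket{\mathbf0}_W+e^{i\theta_n}\ket1_F\ket{0^n}\ket{\mathbf1}_W)$.

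\emph{Step 4: unfanout.} Step 3 never changes the basis value of $F$ or $W$, so $U_{\rm fan}^\dagger$ returns $W$ to $\ket{\mathbf0}$ in both branches. This costs another $n-1$ CNOTs in depth $\lceil\log_2n\rceil$.

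\emph{Step 5: flag Hadamard.} Applying $H_F$ gives exactly Eq.~\eqref{eq:lcu}. Define $V_n$ as the composition of steps 1--5.

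The resource count is then $(n-1)+n+(n-1)=3n-2$ CNOTs with CNOT depth $2\lceil\log_2n\rceil+1$, plus $O(n)$ one-qubit gates. For $n=1$, $W$ is empty and only the single selection CNOT remains.

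For the success probability, the $F=0$ component has squared norm $\tfrac14\|\ket+^{\otimes n}+e^{i\theta_n}\ket{0^n}\|^2=\mathcal N_n/4$, and its normalized $A$-state is $\ket{\phi_n}$ by Eq.~\eqref{eq:fid}. The bound $p_n\ge3/8$ is equivalent to $\mathcal N_n\ge3/2$, which I will check case by case:
\begin{itemize}[leftmargin=*]
\item $n=1$: $\cos(5\pi/12)>0$, so $\mathcal N_1>2$.
\item $n=2$: $\mathcal N_2=2-\tfrac12=\tfrac32$.
\item $n\ge3$: $\mathcal N_n=2-\sqrt{2/d}\ge2-\tfrac12$.
\end{itemize}

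I expect the main obstacle to be the selection block, specifically achieving one CNOT per output without an expensive controlled-Hadamard. The fix is to exploit the known input $\ket0$ on $A_j$, as above, and I will verify this two-angle identity explicitly rather than only asserting it. The remaining care point is that $U_{\rm fan}^\dagger$ genuinely disentangles $W$. This follows because every CNOT in step~3 uses $F$ or $W$ only as a control.
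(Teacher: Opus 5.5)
Your proposal is correct and follows the paper's five-block construction: flag rotation, CNOT fanout tree, one parallel selection layer, unfanout, and a final $H_F$, with the same $3n-2$ CNOTs, CNOT depth $2\lceil\log_2 n\rceil+1$, and case-by-case check of $\mathcal N_n\geq 3/2$. The one difference is minor: you compile each selection pair as $R_y(-\pi/4)\,\mathrm{CNOT}\,R_y(3\pi/4)$ on $A_j$, which is valid only for the known input $\ket0_{A_j}$, whereas the paper implements a genuine zero-controlled Hadamard via $H=R_y(\pi/4)ZR_y(-\pi/4)$ and $X$ flips on the control; both use one CNOT per pair, and yours suffices because the lemma constrains only $V_n\ket{0^{2n}}$.
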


\begin{proof}
Figure~\ref{fig:lcu-flow} gives the register-level roadmap for the coherent
reference implementation.  Initialize $FAW$ in
$\ket0_F\ket{0^n}_A\ket{\mathbf0}_W$; the five steps below are listed in
physical time order.
\begin{enumerate}[label=(\arabic*),leftmargin=*,nosep]
\item \emph{Create the branch superposition.}  Apply
$U_{\rm flag}(\theta_n)$ to $F$, where
\begin{equation*}
U_{\rm flag}(\theta):=\operatorname{diag}(1,e^{i\theta})_F H_F.
\end{equation*}
It produces
\begin{equation*}
U_{\rm flag}(\theta_n)\ket0_F
=\frac{\ket0_F+e^{i\theta_n}\ket1_F}{\sqrt2},
\end{equation*}
with $A$ and $W$ still in $\ket{0^n}_A\ket{\mathbf0}_W$.

\item \emph{Fan out the branch label.}  For $b\in\{0,1\}$, the relevant
action of the coherent fanout on its initialized workspace is
\begin{equation*}
U_{\rm fan}\ket b_F\ket{\mathbf0}_W
=\ket b_F\ket{\mathbf b}_W .
\end{equation*}
This is the standard binary-tree cat-state (repetition-code) encoding
primitive: it coherently encodes the computational-basis branch label rather
than cloning an arbitrary input state.  Under all-to-all connectivity, a
binary tree of $n-1$ CNOTs and depth $\lceil\log_2n\rceil$ implements
$U_{\rm fan}$ \cite{MooreNilsson2001}, giving
\begin{equation*}
\frac{\ket0_F\ket{0^n}_A\ket{\mathbf0}_W
+e^{i\theta_n}\ket1_F\ket{0^n}_A\ket{\mathbf1}_W}{\sqrt2}.
\end{equation*}

\item \emph{Select the two product branches.}  Let $F$ control $A_1$ and
$W_j$ control $A_{j+1}$ for $1\leq j\leq n-1$.  Denote the parallel product
of the $n$ zero-controlled Hadamards by $U_{\rm sel}$.  Since each pair obeys
$\ket0\ket0\mapsto\ket0\ket+$ and
$\ket1\ket0\mapsto\ket1\ket0$, this step gives
\begin{equation*}
\frac{\ket0_F\ket+_A^{\otimes n}\ket{\mathbf0}_W
+e^{i\theta_n}\ket1_F\ket{0^n}_A\ket{\mathbf1}_W}{\sqrt2}.
\end{equation*}

\item \emph{Erase the copied label.}  Applying $U_{\rm fan}^{\dagger}$
resets $W$ without undoing the two branches already written to $A$:
\begin{equation*}
\frac{\ket0_F\ket+_A^{\otimes n}
+e^{i\theta_n}\ket1_F\ket{0^n}_A}{\sqrt2}
\ket{\mathbf0}_W.
\end{equation*}

\item \emph{Interfere the branches.}  A final $H_F$ interferes the branch
amplitudes, routing their sum---the LCU numerator in
Eq.~\eqref{eq:two-term-lcu}---to $F=0$ and their difference to $F=1$.
\end{enumerate}
The complete branch-synthesis unitary is therefore
\begin{equation}
V_n:=H_FU_{\rm fan}^\dagger U_{\rm sel}U_{\rm fan}
U_{\rm flag}(\theta_n),
\label{eq:Vn-blocks}
\end{equation}
where the rightmost factor acts first and identities on untouched registers
are implicit.  Applying the final $H_F$ to the state in step~(4) gives
Eq.~\eqref{eq:lcu}.  Its $F=0$ component is
$\sqrt{\mathcal N_n}\ket{\phi_n}/2$ and therefore has probability
\begin{equation}
p_n=\frac{\mathcal N_n}{4}\geq\frac38.
\label{eq:pn}
\end{equation}
Explicitly, $p_1=(3+\sqrt3)/8$, $p_2=3/8$, and
$p_n=1/2-2^{-(n+3)/2}$ for $n\geq3$.

The forward and reverse fanout trees contribute $2(n-1)$ CNOTs and
depth $2\lceil\log_2n\rceil$; the parallel layer $U_{\rm sel}$ contributes
$n$ further CNOTs and one layer.  Indeed,
$H=R_y(\pi/4)ZR_y(-\pi/4)$, so a controlled Hadamard is locally equivalent
to a controlled $Z$, and hence to one CNOT; changing the control from one to
zero requires only local $X$ gates.  This proves the stated resources.
\end{proof}

\begin{table*}[t]
\caption{Preparation resources for the two unitary schedules and two adaptive
LCU realizations.  Extra qubits are counted beyond the $n$-qubit fiducial
register $A$, so total preparation width is $n$ plus the second-column entry.
The heralded value is $n+1$, rather than the coherent reference's $n$, because
dynamic fanout uses $n$ measured auxiliaries in place of the $(n-1)$-qubit
workspace $W$; this extra qubit is a space-for-depth cost.
The deterministic adaptive bound is a conservative exact-primitives composition
\cite{Baumer2025,Buhrman2024,Takahashi2016}.  The unitary counts apply for
$n\geq2$, and the common $n$-pair Bell-analysis layer is excluded.  NN denotes
nearest-neighbor.}
\label{tab:hierarchy}
\centering
\scriptsize
\renewcommand{\arraystretch}{1.10}
\begin{tabular}{>{\raggedright\arraybackslash}p{3.65cm}
>{\raggedright\arraybackslash}p{2.00cm}
>{\raggedright\arraybackslash}p{3.25cm}
>{\raggedright\arraybackslash}p{3.25cm}
>{\raggedright\arraybackslash}p{3.85cm}}
\toprule
route and model & extra qubits beyond $A$ & gate-count bound & depth bound & success and measurements\\
\midrule
nearest-neighbor path (Thm.~\ref{thm:sequential})\newline
measurement-free; NN $A$ rail
& $0$
& $n-1$ two-qubit gates; $2n-3$ CNOTs
& $n-1$ two-qubit; $2n-3$ CNOT
& deterministic; no mid-circuit measurement\\
balanced tree (Thm.~\ref{thm:balanced-memory-tree})\newline
measurement-free; tree connectivity
& $0$
& $n-1$ two-qubit gates; $2n-3$ CNOTs
& $\lceil\log_2n\rceil$ two-qubit; $2\lceil\log_2n\rceil-1$ CNOT
& deterministic; no mid-circuit measurement\\
\addlinespace[3pt]
heralded LCU (Prop.~\ref{prop:heralded})\newline
adaptive; alternating 1D
\cite{Baumer2025}
& $n+1$
& $3n-1$ CNOTs/attempt; expected at most $\frac83(3n-1)$
& 5 CNOT layers/attempt; $O(1)$ expected adaptive; $O(\log n)$ classical
& nondeterministic; $p_n\geq3/8$; $\mathbb E[N_{\rm attempts}]\leq8/3$; 2 measurement rounds/attempt\\
deterministic adaptive LCU (Prop.~\ref{prop:deterministic-adaptive})\newline
adaptive; NN grid
& $O(n\log n)$
& $O(n\log n)$ quantum gates
& $O(1)$ adaptive; $O(\log n)$ classical
& deterministic; $p_{\rm succ}=1$; constant measurement rounds; no retry\\
\bottomrule
\end{tabular}
\end{table*}

\subsubsection{Constant-depth heralded realization}

Here adaptive quantum depth counts only the quantum layers between measurement
and feedforward rounds, excluding classical processing, reset, and queuing
latency.  The flag outcome $F$ heralds exact ancilla preparation: failed
attempts are discarded and repeated before $A$ interacts with $\rho$, and only
a successful ancilla is coupled to $\rho$.  Because acceptance is independent
of $\rho$, heralding leaves both the logical POVM and its reconstruction map
unchanged and cannot bias the Born distribution.  The internal measurement
outcomes are implementation records, not externally sampled shadow settings.

\begin{proposition}[Exact heralded constant adaptive quantum depth]
\label{prop:heralded}
Our construction uses the one-round dynamic-fanout primitive of
Ref.~\cite{Baumer2025} on its alternating 1D layout.  In each attempt,
conditional on obtaining $F=0$, it prepares $\ket{\phi_n}$ exactly.  Each
attempt has $O(n)$ quantum size, $O(1)$ adaptive quantum depth, one round of
mid-circuit measurements followed by the final measurement of $F$, and
$O(\log n)$ classical parity depth.  Repeating with a freshly reinitialized
ancilla register until success gives
\begin{equation}
\mathbb E[N_{\rm attempts}]=p_n^{-1}\leq\frac83.
\end{equation}
Thus expected adaptive quantum depth and total gate count are $O(1)$ and
$O(n)$, respectively.
\end{proposition}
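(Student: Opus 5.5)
The plan is to start from the coherent reference unitary $V_n$ of Lemma~\ref{lem:branch-synthesis}, Eq.~\eqref{eq:Vn-blocks}. Its only non-constant-depth parts are the two fanout trees $U_{\rm fan}$ and $U_{\rm fan}^\dagger$. Everything else is constant depth: $U_{\rm flag}$ is one one-qubit gate on $F$, $U_{\rm sel}$ is one parallel layer of $n$ zero-controlled Hadamards, and the final $H_F$ is one more one-qubit gate. I would therefore replace the coherent fanout/unfanout pair by the one-round dynamic-fanout primitive of Ref.~\cite{Baumer2025}.

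\emph{Forward fanout.} Copy the computational-basis label of $F$ onto $n$ auxiliaries arranged on the alternating 1D layout.
\begin{itemize}[leftmargin=*,nosep]
\item Prepare Bell pairs between neighbouring auxiliaries with one Hadamard layer and one CNOT layer.
\item Apply one CNOT layer coupling $F$ and the chain.
\item Measure the intermediate qubits in the $Z$ basis.
\item Apply $X$ corrections determined by prefix parities of the outcomes.
\end{itemize}
These prefix parities are computed classically by a parallel-prefix circuit of depth $O(\log n)$. This is the source of the $n$ measured auxiliaries and the $n+1$ extra-qubit count in Table~\ref{tab:hierarchy}. After correction, $n-1$ unmeasured auxiliaries carry $\ket{\mathbf b}$, exactly as $W$ does in step (2). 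Step (3) is then identical to the coherent version.

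\emph{Erasure.} Instead of running the fanout tree in reverse, measure every copy in the $X$ basis. Each outcome $x_j$ multiplies the $b=1$ branch by $(-1)^{x_j}$. Apply a single $Z_F$ conditioned on the parity $\bigoplus_j x_j$, again an $O(\log n)$-depth classical XOR. This restores the relative phase $e^{i\theta_n}$ exactly and disentangles the auxiliaries, reproducing the state of step (4). It is the standard measurement-based uncomputation of a cat-state encoding.

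The correctness argument is the crux, and I expect the phase bookkeeping to be the most delicate step. I must check two things.
\begin{itemize}[leftmargin=*,nosep]
\item Measuring the copies in the $X$ basis does not disturb $A$. Conditioned on $b$, the copies are in the product state $\ket{\mathbf b}$, so each outcome is uniform and independent of $A$.
\item The induced sign depends only on $b$ and the outcome parity, and so is corrected by a Pauli on $F$ alone.
\end{itemize}
With both facts, the post-correction state equals the state of step (4) for every measurement record. Applying $H_F$ and measuring $F$ then gives Eq.~\eqref{eq:lcu}. Hence, conditional on $F=0$, $A$ is exactly $\ket{\phi_n}$ with probability $p_n=\mathcal N_n/4\geq3/8$ by Eq.~\eqref{eq:pn}. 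The intermediate measurements are implementation records and carry no dependence on $\rho$, because $A$ has not yet met $S$.

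\emph{Resource count.} Per attempt the quantum layers are: Bell-pair creation, the fanout CNOT layer, the mid-circuit measurement round, the feedforward Paulis, the $U_{\rm sel}$ layer, the $X$-basis measurements together with the $Z_F$ feedforward, and finally $H_F$ and the measurement of $F$. This is a constant number of layers, with $O(n)$ gates and $O(\log n)$ classical parity depth. I would match the precise CNOT tally ($3n-1$ CNOTs, five layers) to the primitive of Ref.~\cite{Baumer2025}.

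\emph{Repetition.} Attempts use freshly reinitialized registers and independent measurement records, so the number of attempts is geometric with success probability $p_n$. Its mean is $\mathbb E[N_{\rm attempts}]=1/p_n\leq8/3$. Linearity of expectation then gives expected adaptive depth $O(1)$ and expected gate count $O(n)$.
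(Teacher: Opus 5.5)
Your proof is sound for the asymptotic content, but it uses a different decomposition from the paper's. You keep the coherent reference circuit of Lemma~\ref{lem:branch-synthesis} intact:
\begin{itemize}
\item a dynamic fanout writes the flag label into a persistent copy register $W$;
\item the zero-controlled Hadamards act from $F$ and $W$ onto $A$ exactly as in step~(3);
\item $U_{\rm fan}^\dagger$ is replaced by $X$-basis measurement of the copies plus a parity-controlled $Z_F$.
\end{itemize}
Your phase bookkeeping for this erasure is correct. The paper never keeps $W$. It notes that on clean $W$ the whole sandwich $U_{\rm fan}^\dagger U_{\rm sel}U_{\rm fan}$ is just $C_{F=0}(H^{\otimes n})$. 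Since $H=LXR$ with $LR=\id$, this equals $X_F(I_F\otimes L^{\otimes n})\bigl(\prod_j\operatorname{CNOT}_{F\to A_j}\bigr)(I_F\otimes R^{\otimes n})X_F$, a single common-control fanout whose targets are the $A_j$ themselves. One call of the Ref.~\cite{Baumer2025} primitive then implements the entire selection step, with no copy register and nothing to erase. Your route buys generality: the copies are used only as controls, so they can drive any family of controlled unitaries, not only those locally equivalent to a CNOT. The paper's route exploits the Hadamard structure to get its specific ledger: $n$ measured auxiliaries, $n+1$ extra qubits, $3n-1$ CNOTs, and a single measurement round.

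Two of your resource claims do not follow from your own circuit.

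First, as written you have two measurement-and-feedforward rounds before $F$ is read: the fusion measurements with $X$ corrections, then the $X$-basis erasure with $Z_F$ feedforward. The proposition asserts one round followed by the flag measurement. The fix is immediate. The $X$-basis measurements of $W$ commute with $H_F$, and $H_FZ_F^{s}=X_F^{s}H_F$. So you can measure $W$ together with $F$ and declare success when $F\oplus s=0$, where $s=\bigoplus_j x_j$.

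Second, the width. The Bell-pair fusion you sketch measures one qubit per fused pair, so $n$ auxiliaries cannot leave $n-1$ unmeasured copies. Producing the $n-1$ copies in $W$, which must persist through $U_{\rm sel}$, costs about $n-1$ further measured qubits. That is roughly $2n-1$ qubits beyond $A$. This is still $O(n)$, which is all the proposition claims. But the $n+1$ extra qubits and the $3n-1$-CNOT, five-layer tally come from the paper's collapse to a single fanout onto $A$, not from your construction.
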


In Eq.~\eqref{eq:Vn-blocks}, the only blocks whose depth grows with $n$ are
$U_{\rm fan}$ and $U_{\rm fan}^{\dagger}$.  On the initialized workspace, the
former coherently broadcasts the branch bit,
\begin{equation*}
\begin{aligned}
&(\alpha\ket0+\beta\ket1)_F\ket{0^{n-1}}_W
\longmapsto \alpha\ket0_F\ket{0^{n-1}}_W \\
&\hspace{7.4em}{}+\beta\ket1_F\ket{1^{n-1}}_W,
\end{aligned}
\end{equation*}
which is a GHZ-type encoding; the inverse erases the copied label.  The
measurement-and-feedforward construction of Ref.~\cite{Baumer2025} implements
this fanout exactly at constant CNOT depth using measured auxiliaries and
Pauli feedforward, while $U_{\rm sel}$ is already one parallel layer.  A
literal replacement in Eq.~\eqref{eq:Vn-blocks} would therefore use the
dynamic-fanout primitive twice.

For the present Hadamard targets, however, one call suffices.  When $W$ starts
in $\ket{\mathbf0}$, the full sandwich obeys
\begin{equation*}
\begin{aligned}
&U_{\rm fan}^{\dagger}U_{\rm sel}U_{\rm fan}
\ket b_F\ket\psi_A\ket{\mathbf0}_W \\
&\qquad={}
\ket b_F\bigl(H_A^{\otimes n}\bigr)^{1-b}
\ket\psi_A\ket{\mathbf0}_W,
\quad b\in\{0,1\}.
\end{aligned}
\end{equation*}
Thus it is a zero-controlled $H^{\otimes n}$.  To expose its fanout form, set
$L=R_y(\pi/4)H$ and $R=HR_y(-\pi/4)$.  A Hadamard on $A_j$
controlled by $F=1$ obeys
$C_{F=1}(H_{A_j})=(I_F\otimes L_{A_j})
\mathrm{CNOT}_{F\to A_j}(I_F\otimes R_{A_j})$.  Our control is on
$F=0$, which is obtained by flipping $F$ before and after the standard
one-controlled operation:
$C_{F=0}(U)=X_FC_{F=1}(U)X_F$.  Hence
\begin{equation*}
\begin{aligned}
C_{F=0}(H_A^{\otimes n})={}&
X_F(I_F\otimes L_A^{\otimes n}) \\
&\times\left(\prod_{j=1}^{n}\mathrm{CNOT}_{F\to A_j}\right)
(I_F\otimes R_A^{\otimes n})X_F .
\end{aligned}
\end{equation*}
The single-qubit gates in each outer layer act in parallel, and the middle
product is exactly one common-control fanout.  The one-round primitive of
Ref.~\cite{Baumer2025} then uses $n$ measured auxiliaries, $3n-1$ CNOTs, and
five CNOT layers on its alternating 1D layout.  A balanced XOR tree computes
the Pauli corrections in $O(\log n)$ classical depth.  After feedforward,
$H_F$ and the measurement of $F$ herald the LCU branch.  Hence one attempt has
peak width $2n+1$, one mid-circuit measurement round followed by the final
flag measurement, and expected CNOT count
$(3n-1)/p_n\leq\frac83(3n-1)$.  The corrected channel, Eq.~\eqref{eq:lcu}, and
$p_n$ are unchanged.

\subsubsection{Deterministic completion}

The heralded procedure succeeds after a random number of attempts, with
constant expected adaptive quantum depth.  Exact amplitude amplification makes
the preparation deterministic.
With additional auxiliary qubits, measurement and feedforward also keep the
completed preparation at constant adaptive quantum depth.

\begin{proposition}[Deterministic preparation at constant adaptive quantum depth]
\label{prop:deterministic-adaptive}
For every $n\geq2$, the local alternating quantum--classical computation
(LAQCC) model of Ref.~\cite{Buhrman2024}, with exact arbitrary one-qubit
rotations, admits an exact deterministic preparation of $\ket{\phi_n}$ with
$O(1)$ adaptive quantum depth.  The construction uses $O(n\log n)$ initialized
auxiliary qubits beyond $A$ and $O(n\log n)$ quantum gates, a constant number
of measurement rounds, and $O(\log n)$ classical feedforward depth.
\end{proposition}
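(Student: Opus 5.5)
We derive Proposition~\ref{prop:deterministic-adaptive} from the heralded branch synthesis of Lemma~\ref{lem:branch-synthesis}. The heralding flag is converted into a deterministic outcome by exact amplitude amplification. Every fanout that is wide in $n$ is then compiled into constant adaptive depth using the exact LAQCC primitives of Refs.~\cite{Buhrman2024,Baumer2025}.

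\emph{Step 1: exact amplitude amplification.} Write $V_n\ket{0^{2n}}=\sqrt{p_n}\ket0_F\ket{\phi_n}_A\ket{\mathbf0}_W+\sqrt{1-p_n}\ket1_F\ket{\phi_n^\perp}_A\ket{\mathbf0}_W$, with $p_n=\mathcal N_n/4\in[3/8,1/2]$ by Eq.~\eqref{eq:pn}. Since $p_n$ is known exactly, we use exact amplitude amplification with adjustable phases (the Long or Brassard--H{\o}yer--Mosca--Tapp form). This gives
\[
Q=-V_nS_0(\varphi)V_n^\dagger S_F(\phi),
\]
where $S_F$ phases the $F=0$ subspace and $S_0$ phases the all-zero state. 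One further phased iteration maps the state exactly onto $\ket0_F\ket{\phi_n}\ket{\mathbf0}_W$. For success amplitude $\sin^2\alpha=p_n\geq3/8$, a single generalized iterate with suitable $(\phi,\varphi)$ suffices. I will verify this from the standard two-dimensional rotation picture: one iterate reaches amplitude one exactly when the phases solve a single trigonometric equation, and a solution exists for all $p_n\geq1/4$. If that condition fails, a constant number $k=O(1)$ of iterates still suffices. Either way, the circuit contains $O(1)$ copies of $V_n$, $V_n^\dagger$, $S_F$ and $S_0$.

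\emph{Step 2: constant-depth compilation of each block.} The blocks inside $V_n$ are handled as in Prop.~\ref{prop:heralded}. The flag gates are local, and $U_{\rm fan}^\dagger U_{\rm sel}U_{\rm fan}$ collapses to a zero-controlled $H^{\otimes n}$. That block is one common-control fanout conjugated by single-qubit layers, so each call costs $O(1)$ adaptive depth and $O(n)$ auxiliaries.

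The new ingredient is the reflection $S_0(\varphi)$ about $\ket{0^{2n}}$ on $FAW$. This is a $2n$-controlled phase, equivalently an exact OR/AND of $2n$ bits. Ref.~\cite{Buhrman2024} shows that the LAQCC model implements such unbounded fan-in gates, together with fanout, exactly. The implementation uses constant quantum depth, $O(n\log n)$ auxiliaries on a grid, and $O(\log n)$ classical feedforward depth; the logarithmic factors come from the Takahashi--Tani style counting and parity constructions \cite{Takahashi2016}.

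\emph{Step 3: resource accounting.} A constant number of primitive calls, each with $O(1)$ measurement rounds, yields $O(1)$ adaptive quantum depth overall. The auxiliary count is $O(n\log n)$, since measured auxiliaries are reset and reused, and the gate count is also $O(n\log n)$. The classical feedforward depth is $O(\log n)$ because the Pauli-frame corrections compose by XOR. All corrections are exact Pauli operations, so the output is exactly $\ket{\phi_n}$ with $p_{\rm succ}=1$.

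\emph{Main obstacle.} The delicate point is making the amplification exact for the specific, $n$-dependent $p_n$ with a bounded number of iterates. This requires solving explicitly for the phases and checking that they exist for every $n\geq2$, including $p_2=3/8$. I would fall back on the general result that exact amplitude amplification succeeds for any known $p$ with $\lceil\pi/(4\arcsin\sqrt p)-1/2\rceil$ phased iterates. Here that number is bounded, since $\arcsin\sqrt{3/8}>\pi/6$.

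A secondary check is that the exactness claims for OR-type reflections in LAQCC hold with the stated $O(n\log n)$ width.
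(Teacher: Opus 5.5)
Your proposal is correct and follows essentially the paper's route: one exact phase-matched amplitude-amplification step (the paper uses the same phase $\varphi_n=2\arcsin\bigl(1/(2\sqrt{p_n})\bigr)$ on both reflections, which is real precisely because $p_n\geq 1/4$), dynamic-fanout compilation of each $V_n$-type block, and an exact Takahashi--Tani/LAQCC OR construction for $S_0$, with the same resource accounting. The points you gloss are minor. First, $S_0$ must be realized by coherently computing the all-zero predicate into a marker, applying the phase, and uncomputing, never measuring the predicate; since $W$ stays in $\ket{\mathbf 0}$, the predicate need only be evaluated on $FA$. Second, your bad-branch $A$-state is not actually orthogonal to $\ket{\phi_n}$; good and bad branches are orthogonal only through the flag $F$, which is why phasing the $F=0$ subspace is the correct $S_F$.
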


The construction combines the two LCU branches coherently by one exact
amplification step.  Its branch-control operations and all-zero selective
phase admit constant-depth implementations using measurement and feedforward
\cite{Baumer2025,Buhrman2024,Takahashi2016}.
Appendix~\ref{app:lcu-alternatives} proves the amplification identity and gives
the selective-phase construction and resource count.  Preparation finishes
before $A$ contacts $S$, so the same Bell interface and reconstruction apply.
Relative to the measurement-free balanced tree, this route trades additional
auxiliary qubits and classical feedback for lower quantum depth; the two
resource points are compared in Table~\ref{tab:hierarchy}.

\subsection{Comparison and end-to-end resources}

Table~\ref{tab:hierarchy} places four resource points on one ledger: exact
two-qubit and CNOT resources for the strict-unitary schedules, asymptotic
resources for the heralded and deterministic adaptive LCU realizations, and
the universal Bell readout kept outside every preparation count.

For $n\geq2$ and before topology-dependent routing, either unitary preparation
followed by the common Bell layer contains $n-1$ preparation isometries and
$n$ Bell-analysis CNOTs, hence $2n-1$ two-qubit gates in total.  At readout,
the fiducial-assisted analyzer acts on exactly the $2n$ measured qubits $SA$;
no additional online workspace remains.  The optimized root isometry uses one
CNOT, the remaining $n-2$ preparation isometries use two each, and the Bell
layer contributes $n$.  The resulting pre-routing count is therefore
\begin{equation}
C_{\rm full}=\underbrace{1+2(n-2)}_{\rm preparation}
+\underbrace{n}_{\rm Bell}=3n-3
\label{eq:full-measurement-cnot-count}
\end{equation}
CNOTs.  With direct matched $S_j$--$A_j$ couplers, the CNOT depth is $2n-2$
for the path and $2\lceil\log_2n\rceil$ for the tree.  At $n=1$, the complete
tetrahedral-SIC measurement uses one Bell-analysis CNOT.

The two schedules expose a connectivity tradeoff without a width tradeoff:
the path uses only neighboring couplings within $A$, whereas the tree requires
long-range pairs.  On a single interleaved line, additional routing may still
be needed to combine either preparation with the matched Bell layer.  The
adaptive implementations use preparation measurements internally, but only
an exactly prepared ancilla is coupled to the unknown state; any transport
needed for that coupling lies outside the preparation-only resources in
Table~\ref{tab:hierarchy}.  Every implementation yields the same $4^n$
external outcomes $(\bm a,\bm b)$ and uses the same reconstruction map.

\FloatBarrier
\section{From-circuit numerical validation and finite-shot benchmarks}
\label{sec:shadow-demo}

\begin{figure*}[t]
\centering
\includegraphics[width=\textwidth]{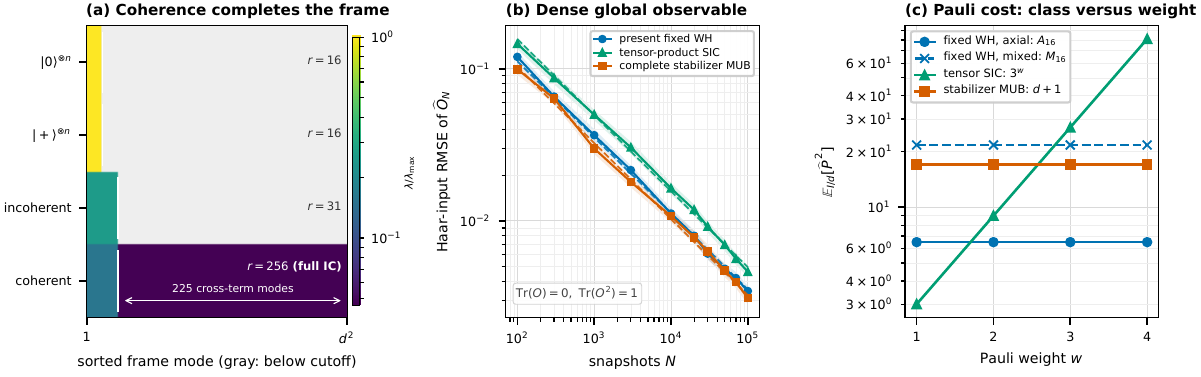}
\caption{From-circuit numerical validation at $n=4$.
(a) Common-normalized numerical frame spectra for the two product branches,
their incoherent mixture, and the coherent fiducial.  The rank ladder
$16,16,31,256$ shows that coherent cross terms add 225 directions and complete
the frame; gray entries lie below the $10^{-10}\lambda_{\max}$ rank cutoff.
(b) Haar-input root-mean-square error (RMSE) for one fixed
centered dense target with $\Tr(O^2)=1$.  The three measurements use the same
256 Haar-random input states and an independent outcome stream for every state
and measurement.  Markers are empirical RMSEs of the canonical sample means,
bands are pointwise 95\% percentile-bootstrap intervals over the 256
input--record trials, and dashed curves are numerical-frame predictions rather
than fits.  For the complete stabilizer MUB, this Haar prediction equals those
of uniform global-Clifford shadows and an ideal global SIC.  (c) Maximally-mixed
single-shot second moments grouped by Pauli weight.  Filled markers summarize
the weight-binned means over all 30 axial strings for the present WH
measurement, tensor SIC, and complete MUB; blue crosses give the corresponding
means over all 225 mixed WH strings.  The dashed blue line is the
independent closed-form value $M_{16}=(\sqrt{32}-1)^2$, while the axial WH
markers equal $A_{16}$.  Uniform global Clifford and an ideal global SIC share
the complete-MUB value $d+1$ at $I/d$.}
\label{fig:shadow-demo}
\end{figure*}

The numerics ask three questions not assumed by the analytic derivation:
whether coherence completes the frame, whether raw records show the predicted
finite-sample behavior, and whether the reconstructed costs exhibit the
class--weight tradeoff.  Panels~(a) and (c) are deterministic frame
diagnostics; panel~(b) continues through Born sampling and canonical sample
means.  We choose $n=4$, the first anisotropic case after the $n=3$ Hoggar SIC
and a size at which direct dense frame inversion remains practical.

Because every route in Table~\ref{tab:hierarchy} implements the same POVM, it
suffices to simulate one.  We independently compile $\ket{\phi_4^*}$ as a
nearest-neighbor bond-dimension-two circuit on the four qubits of $A$, with no
work qubits, and append the Bell layer.  For each computational-basis input
$\ket{x}$, the circuit gives
\begin{equation*}
A_{u,x}=\bra{u}U_{\rm IC}\ket{x,0^n}.
\end{equation*}
The row $A_u:=(A_{u,x})_x$ determines $E_u=A_u^\dagger A_u$ and hence
$\mathcal M_{\rm num}(A)=\sum_u\Tr(E_uA)E_u/\Tr E_u$.  We invert this frame
numerically and apply the same procedure to independently constructed
tensor-SIC and complete-MUB effects.  The MUB has the same depolarizing frame,
and hence the same Haar benchmark below, as uniform global Clifford
\cite{Stricker2022,Zhang2024,WangCui2024}.
No analytic WH spectrum, snapshot, or variance formula enters a numerical
marker or RMSE band.  The ideal global SIC is only a tight-frame benchmark; no
dimension-16 SIC fiducial or circuit is inserted \cite{Renes2004,Scott2006}.

Panel~(a) repeats the reconstruction for $\ket{0^n}$, $\ket{+^n}$, their
equal-weight incoherent mixture, and the coherent fiducial.  Their ranks are
$16,16,31,256$: the mixture retains only the two axial sectors, whereas the
coherent cross terms supply the 225 missing mixed directions.  These are
diagnostic controls, not alternative implementations.

Panel~(b) fixes the dense observable before sampling.  The same 256 independent
Haar states are used for all measurements, with a separate 100,000-outcome
Born stream for each state and measurement; each plotted $N$ is a nested
prefix.  Markers are RMSEs over these 256 end-to-end trials, and bands are
pointwise percentile-bootstrap intervals from 1000 resamples.  Dashed curves
use the numerically reconstructed maximally-mixed second moment and
$\mathbb E_\psi\langle\psi|O|\psi\rangle^2
=\Tr(O^2)/[d(d+1)]$, not a fit to the markers.  The complete-MUB variance is
$18/17$, shared here by uniform global Clifford and an ideal global SIC.  The
raw-record RMSE is consistent with $N^{-1/2}$ scaling.

Panel~(c) reads the single-shot moments directly from the numerical frames at
$I/d$.  For all 30 axial strings, the present value
$A_{16}\simeq6.49$ is weight independent, compared with $3^w$ for the tensor
SIC and 17 for the complete MUB.  All 225 mixed WH strings give
$M_{16}=(\sqrt{32}-1)^2\simeq21.69$, including each weight-one $Y_j$.
Uniform global Clifford and an ideal global SIC give 17 at $I/d$.  Thus the
present measurement beats the tensor SIC from weight two onward in the axial
sector, but pays a larger mixed-sector cost; this is a classwise tradeoff, not
uniform dominance.

These comparisons place the present measurement on a broader
hardware--statistical frontier.  It is attractive when shot-to-shot setting
control is costly and the target family contains axial or structured global
observables.
Product SICs remain natural for low-weight mixed targets, whereas
global-Clifford shadows provide the stronger dimension-independent
state-uniform guarantee at the price of a setting ensemble.  Locally entangled
shadows and target-set-specific strategies under bounded interaction range
occupy other points on this frontier
\cite{XuPauli2026,IppolitiEntangled2024}.

\section{Discussion and outlook}
\label{sec:discussion}

\discussionheading{Precompiled analyzer and system-facing interface.}
In the present construction, no shot-dependent measurement setting is applied
to the unknown system $S$: every copy encounters the same system-facing
analyzer, and the Born outcome itself labels the shadow snapshot.  The
fiducial register $A$ is freshly prepared before interacting with $S$, with
all fiducial-dependent processing confined to the analyzer side.  Once $A$
is ready, each system qubit $S_j$ undergoes only the matched Bell-analysis
operations $\operatorname{CNOT}_{S_j\to A_j}$ and $H_{S_j}$, followed by
terminal $Z$-basis measurement.  With direct matched couplers and ignoring
routing overhead, the system-facing entangling depth is one, and no
$S$--$S$ gate is required.  The nonlocal structure of the POVM is supplied
by the correlations prepared in $A$ together with the fixed pairwise Bell
interface, rather than by a global entangling circuit acting directly on $S$.

Randomized local-Pauli and global-Clifford shadows sample from large,
exponentially growing measurement ensembles.  Local Pauli measurements have $3^n$
possible settings, each requiring only local basis changes.  Even when each
setting is shallow, a long acquisition may invoke many distinct physical
control sequences; global-Clifford settings generally also involve multiqubit
gates on $S$ \cite{Huang2020}.  Precomputing the setting labels does not remove
the need to execute the corresponding physical configurations, although
caching and batching may reduce the associated overhead.  Repeating the fixed
system-facing interface eliminates shot-dependent reconfiguration of the
system-facing measurement setting during acquisition, at the cost of an
additional $n$-qubit analyzer register and per-shot analyzer preparation.

The analyzer architecture and system-facing control configuration remain fixed
across shots, while $A$ is reset and freshly prepared before each interaction
with $S$.  A simple platform-dependent measure of the operational cost is
the time required to reach a target precision,
\begin{equation*}
T_\epsilon^{(x)}\simeq
N_\epsilon^{(x)}
\bigl(\tau_{\rm shot}^{(x)}+\bar\tau_{\rm update}^{(x)}\bigr)
+T_{\rm cal}^{(x)},
\qquad x\in\{\mathrm{rand},\mathrm{fix}\}.
\end{equation*}
Here $N_\epsilon$ is the required number of shots, $\bar\tau_{\rm update}$
denotes the average setting-update overhead,
and $T_{\rm cal}^{(x)}$ accounts for the calibration cost of the corresponding
acquisition scheme.  For the fixed analyzer,
$\bar\tau_{\rm update}^{(\mathrm{fix})}=0$ after initialization, while
$\tau_{\rm shot}^{(\mathrm{fix})}$ includes fresh preparation of $A$, the fixed
Bell-analysis layer, and readout.

\discussionheading{Small-system tomography.}
The same fixed circuit provides informationally complete tomography with a
single measurement setting, replacing the $3^n$ local-Pauli settings
conventionally used for $n$-qubit tomography while retaining the minimal
$d^2=4^n$ outcomes of a rank-one IC measurement.  Beyond this reduction in
setting complexity, the Hilbert--Schmidt MSE coefficient is also exponentially smaller
under the canonical i.i.d. linear inversion of Sec.~\ref{sec:tomography}.
Randomized local-Pauli tomography and a fixed tensor-product qubit SIC measurement both
have Hilbert--Schmidt MSE coefficient $5^n$, whereas the present measurement
has coefficient $H_n$, with
\begin{equation*}
H_n<5^n\quad(n\geq3),\qquad H_n\sim2\cdot4^n.
\end{equation*}
Equivalently, $H_n/5^n\sim2(4/5)^n$, so the sample requirement at fixed
Hilbert--Schmidt MSE is smaller by an asymptotic factor of order $(5/4)^n$
than for these two benchmarks.  For $n=3$, the Hoggar SIC gives
$H_3=71$, compared with $5^3=125$ for the local-Pauli and product-SIC
benchmarks.  Table~\ref{tab:tomography-contexts} summarizes the finite-size
comparison.

Tensor-product qubit SICs already provide scalable fixed minimal-IC
measurements, but their Pauli statistics mirror those of randomized
local-Pauli shadows: the second moment grows as \(3^w\) with operator
weight \(w\).  The present correlated measurement instead keeps
axial-Pauli costs bounded at every weight and, for arbitrary Hermitian
targets, has the same dimension-independent scaling as global-Clifford
shadows in the Haar-averaged sense.  These gains require correlated
fiducial preparation and come with larger costs for low-weight mixed-Pauli
targets.

\discussionheading{From minimal IC to shallow SIC measurements.}
A SIC provides a rank-one minimal-IC benchmark with equal pairwise overlaps
between distinct measurement projectors, but its existence has not been
established in every power-of-two dimension
\cite{Horodecki2022Open,ApplebyFlammiaKopp2025}.  Existence and implementation
are separate questions: knowing a SIC fiducial does not by itself provide a
shallow circuit for preparing it on an $n$-qubit device.

The present construction gives an explicit minimal-IC measurement for every
$n$, with shallow fiducial preparation and exact SIC symmetry at $n=1,3$.
For other $n$, the projector overlaps are not all equal, but the statistical
bounds established above still apply.  An interesting question is whether
WH-covariant measurements with SIC symmetry, or uniformly near-SIC
projector overlaps and comparable statistical performance, can be realized
for all $n$ with shallow
qubit-compatible preparation.  This asks for more than SIC existence alone.

\discussionheading{Beyond minimality.}
For general targets, the present measurement has a dimension-independent
Haar-input coefficient, matching global-Clifford shadow scaling, but an
$O(d)$ state-uniform coefficient.  Relaxing minimality could improve the
latter while retaining a fixed analyzer.  For rank-one IC measurements,
Ref.~\cite{Yang2026} shows that a dimension-independent uniform
raw-second-moment bound for all traceless, Hilbert--Schmidt-normalized
observables requires $\Omega(d^3)$ outcomes, and achieves this scaling using
a collection of $\Theta(d^2)$ measurement bases.

Here we consider a different implementation question: whether the same
outcome scaling can be realized within the WH--Bell architecture.  A natural
$d^3$-outcome candidate uses $d$ WH fiducials coherently indexed by an
$n$-qubit label register, together with the $n$-qubit fiducial and system
registers, giving a $3n$-qubit fixed analyzer.  If the resulting combined
orbit forms an equal-weight complex projective $3$-design, the measurement
yields a dimension-independent, state-uniform variance bound for
Hilbert--Schmidt-normalized targets \cite{Huang2020}.  Whether such an
all-$n$ WH family exists, and whether its coherent fiducial preparation can
be compiled into a shallow $3n$-qubit circuit with computationally efficient
reconstruction, are interesting questions for further study.

\par\addvspace{0.6\baselineskip}
These results highlight measurement-setting control as an
experimental resource complementary to conventional metrics such as circuit
width, gate count, depth, and sample complexity.  Randomized protocols realize
part of their measurement complexity through shot-dependent setting selection
and implementation, whereas a fixed analyzer shifts part of that complexity
into precompiled measurement geometry,
analyzer-state preparation, and a fixed system-facing interface reused across shots.
More broadly, this raises the question of where the complexity of quantum
measurement should reside.  How much randomness and control must be supplied
anew on each shot, and how much can instead be compiled once into the geometry
and circuitry of the measuring device itself?

\begin{acknowledgments}
This work was supported by the National Natural Science Foundation of China
under Grants No.~42330707 and No.~42530108.
\end{acknowledgments}

\clearpage
\appendix

\section{Normalization of the WH orbit}
\label{app:povm-normalization}

For an $n$-qubit system, $d=2^n$.  This appendix proves the WH-orbit
normalization for an arbitrary normalized fiducial $\ket\phi$; the fixed POVM
in Eq.~\eqref{eq:fixed-povm} is obtained by setting
$\ket\phi=\ket{\phi_n}$.  We first make the two kinds of labels explicit.
Write
\begin{equation*}
u=(\bm a,\bm b),\qquad v=(\bm c,\bm e),
\end{equation*}
where all four bold symbols belong to $\F_2^n$.  The label $u$ identifies an
orbit element, so we abbreviate
$D_u=D_{\bm a,\bm b}$, $\Pi_u=\Pi_{\bm a,\bm b}$, and
$E_u=E_{\bm a,\bm b}$.  The independent label $v$ will identify a Pauli-basis
element.  Each label has $d^2=4^n$ possible values.

Every effect $E_u=d^{-1}\Pi_u$ is positive because $\Pi_u$ is a rank-one
projector.  To show that the effects form a POVM, it therefore remains only
to prove
\begin{equation*}
\sum_uE_u=\id.
\end{equation*}

\emph{Step 1: Expand an operator in the Pauli basis.}
Let $P_v=P_{\bm c,\bm e}$ be the Hermitian tensor Pauli proportional to
$Z(\bm c)X(\bm e)$.  Its irrelevant phase is chosen so that $P_v$ is
Hermitian.  The $d^2$ operators $\{P_v\}$ form an orthogonal basis, with
$P_0=\id$ and
\begin{equation*}
\Tr(P_vP_w)=d\delta_{vw}.
\end{equation*}
Every operator $A$ can therefore be written as
\begin{equation}
A=\frac1d\sum_v\Tr(P_vA)P_v.
\label{eq:appendix-pauli-expansion}
\end{equation}

\emph{Step 2: Average one Pauli component over all displacements.}
Recall that $u=(\bm a,\bm b)$ labels the conjugating displacement and
$v=(\bm c,\bm e)$ labels the Pauli being conjugated.  The single-qubit
relation $ZX=-XZ$ gives
\begin{equation*}
D_uP_vD_u^\dagger
=(-1)^{\bm a\cdot\bm e+\bm b\cdot\bm c}P_v
=:s_v(u)P_v.
\end{equation*}
Thus $s_v(u)=+1$ when $D_u$ and $P_v$ commute and $s_v(u)=-1$ when they
anticommute.  The notation $s_v(u)$ is only shorthand for the explicit binary
phase displayed above.

For a fixed $\bm e$, the sum
$\sum_{\bm a}(-1)^{\bm a\cdot\bm e}$ equals $d$ if $\bm e=\bm0$.  If
$\bm e\ne\bm0$, exactly half of the bit strings $\bm a$ give
$\bm a\cdot\bm e=0$ and half give $\bm a\cdot\bm e=1$, so the signs cancel
and the sum is zero.  The same argument applies to the sum over $\bm b$.
Consequently,
\begin{equation}
\begin{aligned}
\sum_us_v(u)
&=\sum_{\bm a,\bm b}
(-1)^{\bm a\cdot\bm e+\bm b\cdot\bm c}\\
&=
\left[\sum_{\bm a}(-1)^{\bm a\cdot\bm e}\right]
\left[\sum_{\bm b}(-1)^{\bm b\cdot\bm c}\right]\\
&=d^2\delta_{v0}.
\end{aligned}
\label{eq:appendix-character-sum}
\end{equation}
Here $\delta_{v0}=1$ exactly when $\bm c=\bm e=\bm0$.  Thus every
nonidentity Pauli component cancels in the displacement average; only
$P_0=\id$ survives.

\emph{Step 3: Average the full operator.}
Inside each conjugated operator $D_uAD_u^\dagger$, replace $A$ by its
expansion in Eq.~\eqref{eq:appendix-pauli-expansion}.  Then apply
Eq.~\eqref{eq:appendix-character-sum}:
\begin{equation}
\begin{aligned}
\frac1{d^2}\sum_uD_uAD_u^\dagger
&=\frac1{d^3}\sum_v\Tr(P_vA)P_v\sum_us_v(u)\\
&=\frac{\Tr A}{d}\id.
\end{aligned}
\label{eq:appendix-pauli-twirl}
\end{equation}
The last equality keeps only $v=0$, for which $P_0=\id$ and
$\Tr(P_0A)=\Tr A$.

\emph{Step 4: Apply the twirl to the fiducial projector.}
Take $A=\ketbra\phi\phi$.  Because $\ket\phi$ is normalized,
$\Tr A=1$, and its conjugates are precisely the orbit projectors
$\Pi_u=D_u\ketbra\phi\phi D_u^\dagger$.
Equation~\eqref{eq:appendix-pauli-twirl} therefore gives
\begin{equation}
\frac1{d^2}\sum_u\Pi_u=\frac1d\id,
\qquad
\sum_u\Pi_u=d\id,
\qquad
\sum_uE_u=\id.
\end{equation}
Thus the factor $1/d$ is exactly the physical POVM normalization.  This
argument uses only normalization of $\ket\phi$ and Pauli covariance; it does
not by itself imply informational completeness.  For the family generated by
$\ket{\phi_n}$,
Ref.~\cite{ZhuWang2026} establishes both informational completeness and
uniform spectral stability.

\section{Measurement-channel and reconstruction proofs}
\label{app:reconstruction-proofs}

\subsection{Pauli diagonalization and covariance}
\label{app:pauli-channel-proof}

We first prove Theorem~\ref{thm:pauli-diagonal-reconstruction} and the
covariance formulas in Eqs.~\eqref{eq:fixed-dual} and
\eqref{eq:fixed-dual-covariance}.  We use the conjugation sign
\(s_v(u)\in\{\pm1\}\) introduced in Appendix~\ref{app:povm-normalization},
so that \(D_uP_vD_u^\dagger=s_v(u)P_v\).  From the explicit sign formula
there, the exponents add modulo two, so
\(s_v(u)s_w(u)=s_{v\oplus w}(u)\).  Equation~\eqref{eq:appendix-character-sum}
then gives
\begin{equation}
\sum_u s_v(u)s_w(u)=d^2\delta_{v,w}.
\label{eq:pauli-character-orthogonality}
\end{equation}
By the Pauli expansion in Eq.~\eqref{eq:appendix-pauli-expansion} and the
definition of \(\chi_v\),
\begin{align}
\Pi_0
&=\frac1d\sum_v\Tr(P_v\Pi_0)P_v
=\frac1d\sum_v\chi_vP_v,\nonumber\\
\Pi_u=D_u\Pi_0D_u^\dagger
&=\frac1d\sum_v s_v(u)\chi_vP_v.
\label{eq:orbit-pauli-expansion}
\end{align}
Taking the Hilbert--Schmidt inner product with \(P_w\) and using
\(\Tr(P_vP_w)=d\delta_{v,w}\) then gives
\begin{equation}
\Tr(\Pi_uP_w)=s_w(u)\chi_w.
\label{eq:orbit-pauli-overlap}
\end{equation}
Because \(P_v\) is Hermitian, \(\chi_v=\Tr(\Pi_0P_v)\) is real.
Substituting Eqs.~\eqref{eq:orbit-pauli-expansion} and
\eqref{eq:orbit-pauli-overlap} directly into the definition of
\(\mathcal M_{\phi_n}\), and using
Eq.~\eqref{eq:pauli-character-orthogonality} gives
\begin{align}
\mathcal M_{\phi_n}(P_v)
&=\frac1d\sum_u\Tr(\Pi_uP_v)\Pi_u\nonumber\\
&=\frac{\chi_v}{d^2}\sum_{u,w}s_v(u)s_w(u)\chi_wP_w
=\chi_v^2P_v=|\chi_v|^2P_v.
\label{eq:appendix-pauli-diagonalization}
\end{align}
Writing \(A=\sum_v a_vP_v\), with
\(a_v=\Tr(P_vA)/d\), Eq.~\eqref{eq:appendix-pauli-diagonalization} acts on
each Pauli coefficient separately:
\begin{align*}
\mathcal M_{\phi_n}(A)&=\sum_v|\chi_v|^2a_vP_v,\\
\mathcal M_{\phi_n}^{-1}(A)&=\sum_v\frac{a_v}{|\chi_v|^2}P_v.
\end{align*}
These are the two formulas in Eq.~\eqref{eq:pauli-channel-closed-form}.  The
inverse exists exactly when every \(\chi_v\) is nonzero; equivalently, no
Pauli direction is erased, and the orbit is IC.

Assume henceforth that every \(\chi_v\) is nonzero.  For \(A=\Pi_0\),
Eq.~\eqref{eq:orbit-pauli-expansion} shows that the Pauli
coefficient is \(a_v=\chi_v/d\).  Since \(\chi_v\) is real,
\(\chi_v/|\chi_v|^2=1/\chi_v\), including when \(\chi_v<0\).  Substituting
these coefficients into the inverse gives
\begin{align*}
F_n=\mathcal M_{\phi_n}^{-1}(\Pi_0)
&=\frac1d\sum_v\frac{\chi_v}{|\chi_v|^2}P_v\\
&=\frac1d\sum_v\frac1{\chi_v}P_v.
\end{align*}

It remains to pass from the reference outcome to an arbitrary \(u\).  For
\(A=\sum_va_vP_v\), Pauli conjugation gives
\(D_uAD_u^\dagger=\sum_vs_v(u)a_vP_v\).  Applying the coefficientwise inverse
to this expansion proves covariance:
\begin{align}
\mathcal M_{\phi_n}^{-1}(D_uAD_u^\dagger)
&=\sum_v\frac{s_v(u)a_v}{|\chi_v|^2}P_v\nonumber\\
&=D_u\mathcal M_{\phi_n}^{-1}(A)D_u^\dagger.
\label{eq:appendix-inverse-covariance}
\end{align}
Finally, the orbit definition gives \(\Pi_u=D_u\Pi_0D_u^\dagger\).  Applying
Eq.~\eqref{eq:appendix-inverse-covariance} with \(A=\Pi_0\) yields
\begin{align*}
\widehat\rho_u
&=\mathcal M_{\phi_n}^{-1}(\Pi_u)\\
&=\mathcal M_{\phi_n}^{-1}(D_u\Pi_0D_u^\dagger)\\
&=D_u\mathcal M_{\phi_n}^{-1}(\Pi_0)D_u^\dagger
=D_uF_nD_u^\dagger.
\end{align*}
This proves Eqs.~\eqref{eq:fixed-dual} and
\eqref{eq:fixed-dual-covariance}.

\subsection{Fiducial overlaps and channel eigenvalues}
\label{app:all-qubit-maps}

It remains to evaluate these overlaps.  Write
\(P_{\bm c,\bm e}=\gamma_{\bm c,\bm e}D_{\bm c,\bm e}\), where
\(D_{\bm c,\bm e}=Z(\bm c)X(\bm e)\) is phase free and
\(|\gamma_{\bm c,\bm e}|=1\).  The corresponding phase-free overlap
\(\widetilde\chi_{\bm c,\bm e}
:=\langle\phi_n|D_{\bm c,\bm e}|\phi_n\rangle\) satisfies
\(\chi_{\bm c,\bm e}=\gamma_{\bm c,\bm e}
\widetilde\chi_{\bm c,\bm e}\), so
\(|\chi_{\bm c,\bm e}|^2=|\widetilde\chi_{\bm c,\bm e}|^2\).
With \(z_n=e^{i\theta_n}\), direct expansion of Eq.~\eqref{eq:fid} gives
\begin{align}
\widetilde\chi_{\bm c,\bm e}
&=\langle\phi_n|D_{\bm c,\bm e}|\phi_n\rangle\nonumber\\
&=\frac{\delta_{\bm c,\bm0}+\delta_{\bm e,\bm0}
+d^{-1/2}[z_n(-1)^{\bm c\cdot\bm e}+z_n^*]}{\mathcal N_n}.
\label{eq:sector-characteristic}
\end{align}

For \(n=1\), direct substitution gives
\(|\widetilde\chi_v|^2=1/3\) for all \(v\ne0\).  Hence
\(\mathcal M_{\phi_1}\) is the identity on the scalar component and multiplies
the traceless component by \(1/3\).  Therefore
\begin{align}
\mathcal M_{\phi_1}(A)
&=\frac{A+\Tr(A)\id}{3},\nonumber\\
\mathcal M_{\phi_1}^{-1}(A)
&=3A-\Tr(A)\id.
\label{eq:n1-channel-inverse}
\end{align}
Taking \(A=\Pi_0\) gives \(F_1=3\Pi_0-\id\).

For \(n=2\), Eq.~\eqref{eq:sector-characteristic} gives
\begin{equation}
|\widetilde\chi_{\bm c,\bm e}|^2=
\begin{cases}
1/9,&(\bm c,\bm e)\ne(\bm0,\bm0),\
\bm c\cdot\bm e=0,\\
1/3,&\bm c\cdot\bm e=1.
\end{cases}
\label{eq:n2-response-sectors}
\end{equation}
The Hermitian Pauli convention satisfies
\(P_{\bm c,\bm e}^{\mathsf T}
=(-1)^{\bm c\cdot\bm e}P_{\bm c,\bm e}\).  Thus the symmetric traceless
sector has channel eigenvalue \(1/9\), while the antisymmetric sector has
eigenvalue \(1/3\).  Writing
\[
A_\pm=\frac{A\pm A^{\mathsf T}}2,\qquad
A_0=\frac{\Tr(A)}4\id,
\]
gives
\begin{align}
\mathcal M_{\phi_2}(A)
&=\frac{2A-A^{\mathsf T}+2\Tr(A)\id}{9},\nonumber\\
\mathcal M_{\phi_2}^{-1}(A)
&=6A+3A^{\mathsf T}-2\Tr(A)\id.
\label{eq:n2-channel-inverse}
\end{align}
Taking \(A=\Pi_0\) gives
\(F_2=6\Pi_0+3\Pi_0^{\mathsf T}-2\id\).

For \(n\geq3\), we use the axial and mixed sets defined in
Eq.~\eqref{eq:axis-mixed-label-sets}.  For any operator \(A\), define its
scalar, axial, and mixed Pauli components by
\begin{align}
A_0&:=\frac{\Tr A}{d}\id,\qquad
c_v(A):=\frac1d\Tr(P_vA),\label{eq:pauli-coefficients}\\
A_{\mathsf{Ax}}&:=\sum_{v\in\mathsf{Ax}}c_v(A)P_v,\qquad
A_{\mathsf{Mix}}:=A-A_0-A_{\mathsf{Ax}}.
\label{eq:axis-mixed-pauli-expansion}
\end{align}

To evaluate the channel on these sectors, set
\begin{equation}
\begin{aligned}
r_d&:=\sqrt{\frac2d},&\mathcal N_n&=2-r_d,
&\kappa_d&:=1-r_d,\\
\alpha_d&:=\frac{\kappa_d^2}{\mathcal N_n^2},&
\beta_d&:=\frac{2}{d\mathcal N_n^2}.
\end{aligned}
\label{eq:axis-mixed-eigenvalues}
\end{equation}
Now \(z_n=e^{3\pi i/4}\).
On \(\mathsf{Ax}\), exactly one Kronecker delta in
Eq.~\eqref{eq:sector-characteristic} equals one and
\(\bm c\cdot\bm e=0\), giving
\(\widetilde\chi_v=(1-r_d)/\mathcal N_n\).  On \(\mathsf{Mix}\), both deltas vanish,
and the remaining numerator has modulus \(r_d\) for either value of
\(\bm c\cdot\bm e\).  Hence the channel eigenvalue is \(1\) in the identity
direction, \(\alpha_d\) on \(\mathsf{Ax}\), and \(\beta_d\) on
\(\mathsf{Mix}\).  Grouping the Pauli expansion into these three sectors gives
\begin{align}
\mathcal M_{\phi_n}(A)
&=A_0+\alpha_dA_{\mathsf{Ax}}+\beta_dA_{\mathsf{Mix}},\nonumber\\
\mathcal M_{\phi_n}^{-1}(A)
&=A_0+\alpha_d^{-1}A_{\mathsf{Ax}}
+\beta_d^{-1}A_{\mathsf{Mix}}.
\label{eq:compact-channel}
\end{align}
Taking \(A=\Pi_u\) gives
\begin{equation}
\widehat\rho_u=\frac{\id}{d}
+\alpha_d^{-1}(\Pi_u)_{\mathsf{Ax}}
+\beta_d^{-1}(\Pi_u)_{\mathsf{Mix}}.
\label{eq:compact-snapshot}
\end{equation}
At \(n=3\), \(d=8\) and the two nonidentity values coincide at
\(1/9=1/(d+1)\).  Since
\(A_{\mathsf{Ax}}+A_{\mathsf{Mix}}=A-A_0\) and
\(dA_0=\Tr(A)\id\), the sector formulas reduce directly to
\begin{align*}
\mathcal M_{\phi_3}(A)
&=A_0+\frac{A-A_0}{d+1}
=\frac{A+\Tr(A)\id}{d+1},\\
\mathcal M_{\phi_3}^{-1}(A)
&=A_0+(d+1)(A-A_0)
=(d+1)A-\Tr(A)\id.
\end{align*}
These are exactly the standard SIC channel and inverse.  In particular,
\(\Tr(\Pi_u)=1\) gives
\(\mathcal M_{\phi_3}^{-1}(\Pi_u)=9\Pi_u-\id\).
All listed overlaps are nonzero, completing the direct proof of minimal
informational completeness.

\subsection{Closed-form dual and outcome-dependent snapshots}
\label{app:closed-form-dual-proof}

We now convert the sector inverse in Eq.~\eqref{eq:compact-snapshot} into the
product form stated in Theorem~\ref{thm:structured-reconstruction}.  For
\(n\geq3\) and outcome \(u=(\bm a,\bm b)\), set
\begin{align*}
z_n&=e^{i\theta_n},&
\ket{x_{\bm a}}&=Z(\bm a)\ket+^{\otimes n},\\
\ket{\bm b}&=X(\bm b)\ket{0^n},&
\omega_{\bm a,\bm b}&=(-1)^{\bm a\cdot\bm b}.
\end{align*}
First, direct expansion of the orbit projector gives
\begin{equation}
\Pi_{\bm a,\bm b}
=\frac{\ketbra{x_{\bm a}}{x_{\bm a}}+\ketbra{\bm b}{\bm b}
+z_n^*\omega_{\bm a,\bm b}\ketbra{x_{\bm a}}{\bm b}
+z_n\omega_{\bm a,\bm b}\ketbra{\bm b}{x_{\bm a}}}{\mathcal N_n}.
\label{eq:orbit-four-products}
\end{equation}
Second, we isolate its axial and mixed Pauli components.  Define complete
dephasing in the \(Z\) and \(X\) bases by
\begin{align}
\Delta_Z(A)
&:=\sum_{\bm y\in\{0,1\}^n}
\ketbra{\bm y}{\bm y}A\ketbra{\bm y}{\bm y},\nonumber\\
\Delta_X(A)
&:=H^{\otimes n}\Delta_Z
\!\left(H^{\otimes n}AH^{\otimes n}\right)H^{\otimes n}.
\label{eq:axis-dephasing}
\end{align}
Equivalently, \(\Delta_X\) is complete dephasing in the \(X\)-eigenbasis
\(\{\ket{x_{\bm y}}\}_{\bm y}\).  The two maps retain, respectively, the
identity together with the \(Z\)-only and \(X\)-only Pauli components.  Hence
\begin{align}
A_{\mathsf{Ax}}
&=\Delta_Z(A)+\Delta_X(A)-2\frac{\Tr A}{d}\id,\nonumber\\
A_{\mathsf{Mix}}
&=A-\Delta_Z(A)-\Delta_X(A)+\frac{\Tr A}{d}\id.
\label{eq:appendix-axis-mixed-projections}
\end{align}
With \(\kappa_d=1-r_d\), direct dephasing gives
\begin{align}
\Delta_Z(\Pi_{\bm a,\bm b})
&=\frac{\id/d+\kappa_d\ketbra{\bm b}{\bm b}}{\mathcal N_n},\nonumber\\
\Delta_X(\Pi_{\bm a,\bm b})
&=\frac{\id/d+\kappa_d\ketbra{x_{\bm a}}{x_{\bm a}}}{\mathcal N_n}.
\label{eq:dephased-orbit}
\end{align}
Equations~\eqref{eq:appendix-axis-mixed-projections} and
\eqref{eq:dephased-orbit} therefore determine
\((\Pi_{\bm a,\bm b})_{\mathsf{Ax}}\) and
\((\Pi_{\bm a,\bm b})_{\mathsf{Mix}}\) explicitly.  Finally, substituting
them into Eq.~\eqref{eq:compact-snapshot} gives
\begin{align}
\widehat\rho_{\bm a,\bm b}={}&\frac{\mathcal N_n}{r_d\kappa_d}
(\ketbra{x_{\bm a}}{x_{\bm a}}+\ketbra{\bm b}{\bm b})\nonumber\\
&+\frac{\mathcal N_n}{r_d^2}\left(
z_n^*\omega_{\bm a,\bm b}\ketbra{x_{\bm a}}{\bm b}+
z_n\omega_{\bm a,\bm b}\ketbra{\bm b}{x_{\bm a}}\right)
-\frac{r_d}{\kappa_d}\id .
\label{eq:explicit-snapshot}
\end{align}
Setting \(\bm a=\bm b=\bm0\) recovers the reference dual in
Eq.~\eqref{eq:explicit-fixed-dual}.  Together with the \(n=1,2\) calculations
in Appendix~\ref{app:all-qubit-maps}, this proves
Theorem~\ref{thm:structured-reconstruction}.

\section{Gram spectrum and observable-estimation proofs}
\label{app:observable-proofs}

\subsection{Projector-Gram connection}
\label{app:gram-connection}

The expectation-value calculation above diagonalizes the measurement channel.
To use
the projector-Gram spectrum in the performance comparison, we now show that
the same Pauli sign patterns also diagonalize the projector-Gram matrix.
Fix one Pauli operator \(P_v\), and let the displacement---equivalently,
outcome---label \(u\) run over all \(d^2\) values.  The conjugation rule
\[
D_uP_vD_u^\dagger=s_v(u)P_v
\]
assigns one sign to each \(u\): \(s_v(u)=+1\) when \(D_u\) commutes with
\(P_v\), and \(s_v(u)=-1\) when they anticommute.  Collecting these signs
produces the \(d^2\)-component column vector
\[
\mathbf s_v:=(s_v(u))_u\in\{+1,-1\}^{d^2}.
\]
As \(v\) runs over all \(d^2\) Pauli labels, the vectors
\(\mathbf s_v\) form the orthogonal character basis in
Eq.~\eqref{eq:pauli-character-orthogonality}; each has norm \(d\), so
\(\mathbf s_v/d\) is normalized.

The matrix \(G^\Pi\) records the pairwise overlaps of the orbit projectors.
Using Eq.~\eqref{eq:orbit-pauli-expansion} gives
\begin{align}
G^\Pi_{u,u'}
&=\Tr(\Pi_u\Pi_{u'})
=\frac1d\sum_w s_w(u)s_w(u')|\chi_w|^2,\nonumber\\
\left(G^\Pi\mathbf s_v\right)_u
&=\sum_{u'}G^\Pi_{u,u'}s_v(u')
=d|\chi_v|^2s_v(u).
\label{eq:appendix-projector-gram-spectrum}
\end{align}
The second line is ordinary matrix-vector multiplication.  After inserting
the first line, character orthogonality gives
\(\sum_{u'}s_w(u')s_v(u')=d^2\delta_{w,v}\): every term with \(w\neq v\)
cancels, leaving only \(w=v\).  Therefore, for each Pauli label \(v\), the
projector-Gram eigenpair is
\[
G^\Pi\mathbf s_v=\lambda_v\mathbf s_v,\qquad
\mathbf s_v=(s_v(u))_u,\qquad
\lambda_v=d|\chi_v|^2.
\]
Here \(G^\Pi\) is the \(d^2\times d^2\) matrix being diagonalized,
\(\mathbf s_v\) is its \(d^2\)-component eigenvector, and \(\lambda_v\) is
the corresponding scalar eigenvalue.  The Pauli label \(v\) indexes these
eigenpairs.  This is the bridge between the fiducial expectation values that
control reconstruction and the projector-Gram spectrum used in
Sec.~\ref{sec:spectral-benchmarks}.

\subsection{Phase selection and full finite-size projector-Gram spectrum}
\label{app:full-spectrum}

\paragraph{Phase selection.}
We use the axial and mixed label sets of
Eq.~\eqref{eq:axis-mixed-label-sets} in Appendix~\ref{app:all-qubit-maps}.
To explain the phases in Eq.~\eqref{eq:fid}, allow an arbitrary relative
phase \(\theta\) while keeping the two branch amplitudes equal.  Set
\(\mathcal N_\theta=2+2\cos\theta/\sqrt d\).  Replacing \(z_n\) by \(e^{i\theta}\)
in Eq.~\eqref{eq:sector-characteristic} and using the Gram relation in
Appendix~\ref{app:gram-connection} gives the three nonidentity eigenvalue branches
\begin{align*}
\lambda_{\rm Ax}(\theta)
&=\frac{(\sqrt d+2\cos\theta)^2}{\mathcal N_\theta^2},\\
\lambda_{\rm even}(\theta)
&=\frac{4\cos^2\theta}{\mathcal N_\theta^2},\qquad
\lambda_{\rm odd}(\theta)
=\frac{4\sin^2\theta}{\mathcal N_\theta^2}.
\end{align*}
The last two branches are the mixed directions with even and odd numbers
of \(Y\) factors; both are present for \(d\geq4\).
For \(d\geq8\), put \(x=|\cos\theta|\).  Their smaller eigenvalue obeys
\[
\min\{\lambda_{\rm even},\lambda_{\rm odd}\}
\leq f_d(x):=
\frac{\min\{x^2,1-x^2\}}{(1-x/\sqrt d)^2},
\]
with equality when \(\cos\theta=-x\).  The function \(f_d\) increases up
to \(x=1/\sqrt2\) and decreases thereafter: on the second interval its
derivative has the sign of \(1/\sqrt d-x<0\).
Hence the mixed-sector upper bound is attained at \(\theta=3\pi/4\).
At this phase,
\[
\frac{\lambda_{\rm Ax}}{\lambda_{\rm even}}
=\frac{(\sqrt d-\sqrt2)^2}{2}\geq1.
\]
The axial branch does not lower the floor, proving phase optimality for
\(d\geq8\).

For \(d=4\) and \(\cos\theta=-x\leq0\), the smaller of the axial and
even mixed branches is
\[
\frac{\min\{(1-x)^2,x^2\}}{(1-x/2)^2}\leq\frac49.
\]
It is maximized at \(x=1/2\); there the odd branch is \(4/3\), so
\(\theta=2\pi/3\) attains the optimal floor \(4/9\).
For \(\cos\theta\geq0\), the mixed-sector floor is at most
\(2/(2+1/\sqrt2)^2<4/9\).
Finally, for \(d=2\) the even mixed branch is absent, and
\(\theta=5\pi/12\) makes all three nonidentity eigenvalues \(2/3\).
This reaches the SIC upper bound: the nonidentity eigenvalues of any
pure-state Pauli orbit sum to \(d^2-d\), so their minimum cannot exceed
\((d^2-d)/(d^2-1)=d/(d+1)\).
These optimizations concern the relative phase in the stated
equal-weight family, not unrestricted fiducials.

\paragraph{Spectrum at the selected phases.}
Substituting these phases gives the finite-size spectrum of Theorem~8 in
Ref.~\cite{ZhuWang2026}.  Its nonidentity part is
\begin{equation}
\begin{cases}
\{(2/3)^{(3)}\},&d=2,\\
\{(4/9)^{(9)},(4/3)^{(6)}\},&d=4,\\
\{(8/9)^{(63)}\},&d=8,\\
\{\mu_d^{((d-1)^2)},\nu_d^{(2(d-1))}\},&d>8,
\end{cases}
\end{equation}
where
\begin{equation}
\mu_d=\frac2{(2-\sqrt{2/d})^2},\qquad
\nu_d=\frac{d(1-\sqrt{2/d})^2}{(2-\sqrt{2/d})^2}.
\end{equation}
For $d\geq8$, Eq.~\eqref{eq:sector-characteristic} assigns $\nu_d$ to the
$2(d-1)$ nonidentity X- and Z-axis labels and $\mu_d$ to the $(d-1)^2$ mixed
labels; the two values coincide at $d=8$.  At $d=4$,
Eq.~\eqref{eq:n2-response-sectors} instead gives the even- and odd-$Y$ parity
sectors with multiplicities nine and six.

For \(d\geq8\), the mixed sector attains the smallest eigenvalue, so
\begin{align*}
\lambda_{\min}&=d\beta_d=\frac{2}{\mathcal N_n^2},\\
\eta_n&=\frac{d+1}{d}\lambda_{\min}
=\frac{2(d+1)}{d\mathcal N_n^2}\longrightarrow\frac12.
\end{align*}
Together with \((\eta_1,\eta_2,\eta_3)=(1,5/9,1)\), these expressions give
Eqs.~\eqref{eq:lambda-min-exact} and \eqref{eq:floor}.
The spectrum is not asymptotically flat:
$\nu_d/\mu_d=(\sqrt d-\sqrt2)^2/2$ diverges.  The main-text claim is a
uniform lower-edge statement, not geometric convergence to a SIC.

\subsection{General-observable variance bounds}
\label{app:general-observable-proof}

We prove Proposition~\ref{thm:general-observable-variance}.  At
\(\rho=\id/d\), every rank-one effect has probability \(1/d^2\).  By
self-adjointness of the inverse and the frame identity,
\begin{align}
\mathbb E_{\id/d}\widehat O^2
&=\frac1{d^2}\sum_u
[\Tr(\Pi_u\mathcal M_{\phi_n}^{-1}(O))]^2\nonumber\\
&=\frac1d\Tr[O\mathcal M_{\phi_n}^{-1}(O)].
\label{eq:appendix-general-second-moment}
\end{align}
Since \(\mathcal M^{-1}(\id)=\id\), applying
Eq.~\eqref{eq:appendix-general-second-moment} to $O_0$ gives its maximally
mixed second moment.  Its mean vanishes at $\id/d$, so it is also the variance
there.  In the
normalized Pauli basis, the nonidentity inverse eigenvalues are
\(d/\lambda_v\), so this quantity is at most
\(\Tr(O_0^2)/\lambda_{\min}\).  Equations~\eqref{eq:eta} and
\eqref{eq:floor} give the stated upper bound.

The conditional second moment is linear in the input state, so its Haar
average equals its value at \(\id/d\), whereas
\begin{equation}
\mathbb E_{\psi\sim\mathrm{Haar}}
\langle\psi|O_0|\psi\rangle^2
=\frac{\Tr(O_0^2)}{d(d+1)}.
\end{equation}
Subtracting this squared mean proves
Eq.~\eqref{eq:haar-average-variance}.  Finally, rank-one
positivity gives
\(p_\rho(u)=\Tr(\rho\Pi_u)/d\leq1/d=d\,p_{\id/d}(u)\).  Hence
\begin{align}
\mathbb E_\rho[\widehat O_0^2]
&\leq d\mathbb E_{\id/d}[\widehat O_0^2]\nonumber\\
&\leq2(d+1)\Tr(O_0^2).
\end{align}
Since variance is at most the second moment, taking the supremum over $\rho$
proves Eq.~\eqref{eq:worst-state-general-variance}.  Applying a standard
median-of-means estimate with $B_{\mathrm{var}}(\rho)$ from
Corollary~\ref{cor:simultaneous-estimation} to each target and taking a union
bound proves Eq.~\eqref{eq:sample-complexity}.

\subsection{Exact Pauli second moments}
\label{app:pauli-second-moment-proof}

We derive Eq.~\eqref{eq:exact-pauli-second-moment} and prove
Theorem~\ref{thm:global-pauli-variance}.  Pauli conjugation gives
\(\Tr(P_v\Pi_u)=s_v(u)\chi_v\), while
\(\mathcal M_{\phi_n}^{-1}(P_v)=P_v/|\chi_v|^2\).  Since \(P_v\) is
Hermitian, \(\chi_v\) is real.  The Pauli-diagonal inverse is self-adjoint,
and therefore
\begin{equation}
\widehat P_v(u)
=\Tr[P_v\mathcal M_{\phi_n}^{-1}(\Pi_u)]
=\Tr[\mathcal M_{\phi_n}^{-1}(P_v)\Pi_u]
=\frac{s_v(u)}{\chi_v}.
\label{eq:appendix-pauli-estimator}
\end{equation}
Its square is \(1/|\chi_v|^2=d/\lambda_v\) for every outcome and hence for
every input state.  Subtracting the squared mean proves
Eq.~\eqref{eq:global-pauli-variance}.  The $n=1$ expectation value and
Eq.~\eqref{eq:n2-response-sectors} give the low-dimensional values stated
after the theorem.  For \(n\geq3\), inserting \(\alpha_d\) and \(\beta_d\) gives
Eq.~\eqref{eq:axis-mixed-moments}.  Here $r_d\leq1/2$, so the mixed value
dominates the axial value, with equality at $n=3$; the former is exactly
$(\sqrt{2d}-1)^2$, while the latter is at most $9$ and tends to $4$.
Together with the exceptional cases, these observations complete the proof.

\subsection{Product-SIC baselines}
\label{app:product-sic-baselines}

For one qubit the tetrahedral canonical snapshot is \(3\Pi_r-\id\).  Every
tetrahedral Bloch component is \(\pm1/\sqrt3\), so a nonidentity Pauli
estimator is \(\pm\sqrt3\).  The same snapshot has eigenvalues \(2,-1\) and
therefore squared Hilbert--Schmidt norm \(5\).  Tensorization gives
\(3^{w(P)}\) for a weight-\(w(P)\) Pauli and \(5^n\) for the squared snapshot
norm.  Born averaging and unbiasedness give
Eq.~\eqref{eq:product-pauli-variance}; inserting
\(H_{\rm prod}=5^n\) from Eq.~\eqref{eq:tomography-baselines} into the iid
calculation in Appendix~\ref{app:tomography-proof} gives the product-SIC
tomography error.

\section{Canonical tomography error}
\label{app:tomography-proof}

We prove Corollary~\ref{thm:tomography-mse}.  Unbiasedness is
Eq.~\eqref{eq:fixed-shadow}.  Independence and zero mean of
\(\widehat\rho_{u_s}-\rho\) remove all cross terms, leaving
\begin{equation}
\mathbb E\|\widehat\rho_N-\rho\|_F^2
=\frac1N\left[
\mathbb E\Tr(\widehat\rho_u^2)-\Tr(\rho^2)\right].
\label{eq:appendix-iid-tomography}
\end{equation}
WH covariance makes all snapshots unitarily conjugate and hence equal in
norm.  In the normalized Pauli basis, the squared coefficient of \(\Pi_u\)
in direction \(v\ne0\) is \(\lambda_v/d^2\), while inversion multiplies that
coefficient by \(d/\lambda_v\).  Parseval therefore gives
\begin{equation}
\Tr(\widehat\rho_u^2)
=\frac1d\left(1+\sum_{v\ne0}\frac d{\lambda_v}\right)=H_n.
\end{equation}
For the low-dimensional cases this yields
\begin{align}
H_1&=\frac{1+3\cdot3}{2}=5,\nonumber\\
H_2&=\frac{1+9\cdot9+6\cdot3}{4}=25,\nonumber\\
H_3&=\frac{1+63\cdot9}{8}=71.
\label{eq:low-dimensional-snapshot-norms}
\end{align}
For \(n\geq3\), counting \(2(d-1)\) axis and \((d-1)^2\) mixed directions
gives the second line of Eq.~\eqref{eq:snapshot-hs-norm}.  Bounding every
nonidentity term by \(1/\lambda_{\min}\) and using Eq.~\eqref{eq:floor}
yields Eq.~\eqref{eq:snapshot-hs-bound}; substituting into
Eq.~\eqref{eq:appendix-iid-tomography} proves the corollary.

For completeness, the global benchmarks in
Eq.~\eqref{eq:tomography-baselines} follow from the standard canonical
snapshot
\begin{equation}
R=(d+1)\Pi-\id,\qquad \Pi^2=\Pi,\quad \Tr\Pi=1.
\end{equation}
For a global SIC, \(\Pi\) is the outcome projector; the corresponding
linear-tomography MSE is also given explicitly in Eq.~(103) of
Ref.~\cite{Scott2006}.  For uniformly randomized global-Clifford
measurements, \(\Pi=U^\dagger\ketbra{b}{b}U\), and the same snapshot formula
is Eq.~(S16) of Ref.~\cite{Huang2020}.  In both cases,
\begin{align}
R^2&=(d^2-1)\Pi+\id,\nonumber\\
\Tr(R^2)&=d^2+d-1.
\end{align}
Since this norm is independent of the measurement record, unbiasedness
and Eq.~\eqref{eq:appendix-iid-tomography} give
\begin{equation}
\mathbb E\|\widehat\rho_N-\rho\|_F^2
=\frac{d^2+d-1-\Tr(\rho^2)}{N}.
\end{equation}
Thus \(H_{\rm SIC}=H_{\rm Cl}=d^2+d-1=4^n+2^n-1\) for
\(d=2^n\), under the same i.i.d. canonical linear-inversion convention.

For the local baselines, the product-SIC snapshot norm is \(5^n\), as
shown in Appendix~\ref{app:product-sic-baselines}.  Uniformly randomized
local-Pauli acquisition has the same norm because its single-qubit
canonical factors also have the form \(3\Pi-\id\).
We can now compare the coefficients.  Equation~\eqref{eq:snapshot-hs-bound}
gives \(H_n\leq2H_{\rm SIC}-1/d<2H_{\rm SIC}\).
For the local comparison, \(H_3=71<5^3\).  For \(n\geq4\), the same bound
gives \(H_n<2d^2+2d\leq5^n\), since
\(2(4/5)^n+2(2/5)^n\) is less than one at \(n=4\) and decreases with \(n\).
Hence \(H_n<H_{\rm Pauli}=H_{\rm prod}\) for every \(n\geq3\).

\section{Detailed proof of the fixed-circuit realization}
\label{app:bell-proof}

\paragraph{Conventions and readout labels.}
Throughout, $\bm a$ denotes the final $S$-register readout after the
Hadamards, while $\bm b$ denotes the $A$-register readout.  For
$\bm x,\bm a,\bm b\in\F_2^n$, Eq.~\eqref{eq:field-to-pauli} gives
\begin{align}
D_{\bm a,\bm b}&=Z(\bm a)X(\bm b),\nonumber\\
Z(\bm a)\ket{\bm x}&=(-1)^{\bm a\cdot\bm x}\ket{\bm x},
&X(\bm b)\ket{\bm x}&=\ket{\bm x\oplus\bm b}.
\label{eq:appendix-pauli-convention}
\end{align}
We use this representative for the exact amplitude identity below; multiplying
$D_{\bm a,\bm b}$ by a unit phase leaves its orbit projector and probability
unchanged.

\paragraph{Pure-state amplitude through the circuit.}
For a pure system input, the state after conjugate-fiducial preparation is
\begin{equation}
\ket\psi_S\ket{\phi^*}_A
=\sum_{\bm x,\bm y}\psi_{\bm x}\phi_{\bm y}^*
\ket{\bm x}_S\ket{\bm y}_A.
\label{eq:appendix-input-expansion}
\end{equation}
The conjugated resource supplies the coefficients $\phi_{\bm y}^*$ appearing
in the bra $\bra\phi$.

Each $\operatorname{CNOT}_{S_j\to A_j}$ maps
$\ket{x_j,y_j}$ to $\ket{x_j,y_j\oplus x_j}$.  Applying the product of all
$n$ matched CNOTs therefore gives
\begin{equation}
\sum_{\bm x,\bm y}\psi_{\bm x}\phi_{\bm y}^*
\ket{\bm x}_S\ket{\bm y\oplus\bm x}_A.
\label{eq:appendix-after-cnot}
\end{equation}
For the final $A$-register label $\bm b$, only terms satisfying
$\bm y\oplus\bm x=\bm b$ contribute, equivalently
$\bm y=\bm x\oplus\bm b$.  The corresponding unnormalized system component is
\begin{equation}
\sum_{\bm x}\psi_{\bm x}\phi_{\bm x\oplus\bm b}^*\ket{\bm x}_S.
\label{eq:appendix-after-b}
\end{equation}

The Walsh--Hadamard identity is
\begin{equation}
H^{\otimes n}\ket{\bm x}
=d^{-1/2}\sum_{\bm a}(-1)^{\bm a\cdot\bm x}\ket{\bm a}.
\label{eq:appendix-hadamard}
\end{equation}
Applying these Hadamards and selecting the final $S$-register label $\bm a$
gives the full-circuit amplitude
\begin{align}
&\bra{\bm a,\bm b}U_{\rm IC}(\phi)
(\ket\psi_S\otimes\ket{0^n}_A)\nonumber\\
&\quad=\bra{\bm a,\bm b}U_{\rm Bell}
(\ket\psi_S\otimes\ket{\phi^*}_A)\nonumber\\
&\quad=d^{-1/2}\sum_{\bm x}(-1)^{\bm a\cdot\bm x}
\psi_{\bm x}\phi_{\bm x\oplus\bm b}^*.
\label{eq:appendix-component-amplitude}
\end{align}

\paragraph{Identify the WH displacement.}
For the convention above, taking the adjoint reverses the order of the
Hermitian Pauli factors:
\begin{equation}
D_{\bm a,\bm b}^\dagger=X(\bm b)Z(\bm a),\qquad
D_{\bm a,\bm b}^\dagger\ket{\bm x}
=(-1)^{\bm a\cdot\bm x}\ket{\bm x\oplus\bm b}.
\label{eq:appendix-adjoint-action}
\end{equation}
Therefore
\begin{equation}
\bra\phi D_{\bm a,\bm b}^\dagger\ket\psi
=\sum_{\bm x}(-1)^{\bm a\cdot\bm x}
\psi_{\bm x}\phi_{\bm x\oplus\bm b}^*.
\label{eq:appendix-pauli-amplitude}
\end{equation}
Comparing with Eq.~\eqref{eq:appendix-component-amplitude} proves the required
amplitude identity.

\paragraph{Born probability and mixed inputs.}
For the pure state $\rho_\psi=\ketbra\psi\psi$, Born's rule gives
\begin{align}
p(\bm a,\bm b\mid\psi)
&=\frac1d\abs{\bra\phi D_{\bm a,\bm b}^\dagger\ket\psi}^2\nonumber\\
&=\frac1d\Tr\!\left[
\rho_\psi D_{\bm a,\bm b}\ketbra\phi\phi
D_{\bm a,\bm b}^\dagger\right].
\label{eq:appendix-pure-probability}
\end{align}
A mixed state has a decomposition
$\rho=\sum_k r_k\ketbra{\psi_k}{\psi_k}$ with $r_k\geq0$ and
$\sum_k r_k=1$.  Both the Born probability and the trace expression are
affine in $\rho$, so averaging the pure-state identity gives
\begin{equation}
p(\bm a,\bm b\mid\rho)
=\frac1d\Tr\!\left[
\rho D_{\bm a,\bm b}\ketbra\phi\phi
D_{\bm a,\bm b}^\dagger\right].
\label{eq:appendix-final-probability}
\end{equation}
This is Eq.~\eqref{eq:born-uic}.  For $\phi=\phi_n$,
Eq.~\eqref{eq:fixed-povm} identifies the operator inside the Born rule with
$E_{\bm a,\bm b}^{(n)}$, completing the proof of
Proposition~\ref{prop:generic-bell}.

\FloatBarrier

\section{Explicit CNOT compilation of the root and binary splits}
\label{app:path-block-compilation}

\emph{Root block: one CNOT.}
The optimization is a state preparation, not a one-CNOT implementation of the
full first-split unitary.  Let $a+b=n$, with $(a,b)=(1,n-1)$ for the path and
$(a,b)=(\lfloor n/2\rfloor,\lceil n/2\rceil)$ for the balanced tree.  Writing
the state in Eq.~\eqref{eq:root-two-qubit-state} as
$\ket{\xi_{a,b}^{(n)}}=\sum_{j,k=0}^1(X_{a,b})_{jk}\ket{j,k}$ gives the known
coefficient matrix
\begin{equation}
X_{a,b}=\frac1{\sqrt{\mathcal N_n}}
\begin{pmatrix}
e^{i\theta_n}+c_ac_b & c_ar_b\\
r_ac_b & r_ar_b
\end{pmatrix}.
\label{eq:root-coefficient-matrix}
\end{equation}
It obeys $\Tr(X_{a,b}^\dagger X_{a,b})=1$ and
\begin{equation}
\det X_{a,b}=\frac{e^{i\theta_n}r_ar_b}{\mathcal N_n},
\qquad \Delta_{a,b}:=|\det X_{a,b}|=\frac{r_ar_b}{\mathcal N_n}>0.
\label{eq:root-determinant}
\end{equation}
Take a singular-value decomposition
\begin{equation}
X_{a,b}=L_{a,b}\operatorname{diag}(\sigma_0,\sigma_1)R_{a,b}^\dagger,
\qquad \sigma_0\geq\sigma_1>0.
\label{eq:root-svd}
\end{equation}
Normalization and Eq.~\eqref{eq:root-determinant} give
$\sigma_{0,1}^2=(1\pm\sqrt{1-4\Delta_{a,b}^2})/2$.  Define
\begin{equation}
\begin{aligned}
\gamma_{a,b}&:=\arcsin(2\Delta_{a,b})\\
&=\arcsin\!\left(\frac{2r_ar_b}{\mathcal N_n}\right),
\qquad 0<\gamma_{a,b}\leq\frac{\pi}{2}.
\end{aligned}
\label{eq:root-schmidt-angle}
\end{equation}
Then $\sigma_0=\cos(\gamma_{a,b}/2)$ and
$\sigma_1=\sin(\gamma_{a,b}/2)$.  With
$R_y(\vartheta)=e^{-i\vartheta Y/2}$, an explicit root circuit is
\begin{equation}
\ket{\xi_{a,b}^{(n)}}
=(L_{a,b}\otimes R_{a,b}^*)
\operatorname{CNOT}_{1\to2}
[R_y(\gamma_{a,b})\otimes I]\ket{00}.
\label{eq:root-one-cnot-compilation}
\end{equation}
Here $L_{a,b}$ and $R_{a,b}^*$ are single-qubit unitaries obtained from the
SVD; the displayed CNOT is the sole two-qubit gate.
Indeed, the rotation and CNOT first produce
$\sigma_0\ket{00}+\sigma_1\ket{11}$.  Under local gates $A\otimes B$, a
coefficient matrix transforms as $X\mapsto AXB^T$, so the final local gates
produce $L_{a,b}\operatorname{diag}(\sigma_0,\sigma_1)R_{a,b}^\dagger$.
The complex conjugate on $R_{a,b}$ is therefore essential.  Since
$\det X_{a,b}\neq0$, the target is entangled and cannot be prepared using only
one-qubit gates; Eq.~\eqref{eq:root-one-cnot-compilation} is thus an exact and
optimal one-CNOT preparation.  For the path, it directly prepares
Eq.~\eqref{eq:first-sequential-step} from $\ket{00}$, replacing the separate
preparation of $\ket{\eta_n}$ and application of $G_1$.

\emph{All nonroot splits: one two-CNOT template.}
Let $a,b\geq1$ and $m=a+b$.  The two branch inputs in
Eq.~\eqref{eq:binary-memory-splitting} span
$\{\ket{00},\ket{10}\}$.  By linearity, the required action on these
orthonormal inputs is
\begin{align}
 U_{a,b}\ket{00}&=\ket{00},\nonumber\\
 U_{a,b}\ket{10}&=\frac1{r_m}\bigl(
 c_ar_b\ket{01}+r_ac_b\ket{10}+r_ar_b\ket{11}\bigr).
 \label{eq:binary-split-columns}
\end{align}
The second output is normalized since $r_b^2+r_a^2c_b^2=r_m^2$ and
is orthogonal to the first.  Define the two rotation angles
\begin{equation}
 \alpha_a:=\arccos c_a,\qquad
 \delta_{a,b}:=\arctan\!\left(\frac{r_ac_b}{r_b}\right).
 \label{eq:binary-split-angles}
\end{equation}
Thus $\cos\alpha_a=c_a$, $\sin\alpha_a=r_a$,
$\sin\delta_{a,b}=r_ac_b/r_m$, and $\cos\delta_{a,b}=r_b/r_m$.
An explicit unitary completion is
\begin{align}
 U_{a,b}={}&[R_y(\alpha_a)\otimes I]\,
 \operatorname{CNOT}_{2\to1}\nonumber\\
 &\times[R_y(-\alpha_a)\otimes R_y(-\delta_{a,b})]\nonumber\\
 &\times\operatorname{CNOT}_{1\to2}
 [I\otimes R_y(\delta_{a,b})].
 \label{eq:binary-split-factorization}
\end{align}
Figure~\ref{fig:binary-split-cnot} shows the gates in physical time order.

\begin{center}
\begin{minipage}{\columnwidth}
\centering
\resizebox{\columnwidth}{!}{%
\begin{quantikz}[row sep={0.8cm,between origins},column sep=0.15cm]
 \lstick{$1$} & \qw & \ctrl{1}
 & \gate{R_y(-\alpha_a)} & \targ{}
 & \gate{R_y(\alpha_a)} & \qw \\
 \lstick{$2:\ket0$} & \gate{R_y(\delta_{a,b})} & \targ{}
 & \gate{R_y(-\delta_{a,b})} & \ctrl{-1}
 & \qw & \qw
\end{quantikz}%
}
\captionof{figure}{Two-CNOT circuit for the split $m=a+b$.
Wire 1 carries the one-qubit memory; wire 2 is an unused output qubit.
The same circuit implements both the path ($a=1$) and balanced-tree splits;
only the rotation angles change.}
\label{fig:binary-split-cnot}
\end{minipage}
\end{center}

To verify the circuit, the target rotation, first CNOT, and inverse target
rotation leave $\ket{00}$ unchanged and send
$\ket{10}$ to
$\sin\delta_{a,b}\ket{10}+\cos\delta_{a,b}\ket{11}$.
The remaining operation is the second CNOT conjugated by
$R_y(\alpha_a)$ on wire 1.  When wire 2 is one, it maps
$\ket1$ on wire 1 to $c_a\ket0+r_a\ket1$; when wire 2 is zero, it
acts as the identity.  Substitution gives
Eq.~\eqref{eq:binary-split-columns}.  In particular,
\begin{equation*}
 U_{a,b}(c_m\ket{00}+r_m\ket{10})
 =(c_a\ket0+r_a\ket1)\otimes(c_b\ket0+r_b\ket1),
\end{equation*}
which is the required coherent branch map.

For the path, $a=1$ gives $\alpha_1=\pi/4$ and
$\delta_{1,m-1}=\arctan(c_m/r_{m-1})$, recovering $U_m=U_{1,m-1}$.
Equation~\eqref{eq:binary-split-factorization} therefore supplies two CNOTs
and four one-qubit rotations for every nonroot split in either schedule.
The first split still uses the separate one-CNOT state preparation above,
so the preparation count $1+2(n-2)=2n-3$ and balanced CNOT depth
$2\lceil\log_2n\rceil-1$ are unchanged.

Two CNOTs are also necessary for this full coherent action on the reachable
input subspace.  Up to local gates, a one-CNOT circuit receives two orthogonal
first-qubit states with the same second-qubit factor.  Expanding either
control direction shows that the two output coefficient determinants vanish
simultaneously.  Here $U_{a,b}\ket{00}$ is a product state, whereas the
coefficient determinant of $U_{a,b}\ket{10}$ is
$-c_ar_ar_bc_b/r_m^2\neq0$.  One CNOT cannot implement both prescribed
columns.  This is a specialized real-isometry compilation
\cite{Iten2016,Vatan2004,WeiDi2012}.

\section{Hoggar fiducial: preparation lower bound and numerical circuit}
\label{app:hoggar}

\begin{table}[b]
\caption{One numerical Euler-angle realization of the ancilla-free
CNOT-optimal Hoggar fiducial preparation, rounded to six decimals for display.
The numerical calculations use the angles at full precision.}
\label{tab:hoggar-angles}
\begin{ruledtabular}
\scriptsize
\begin{tabular}{cc|c}
$\ell$ & qubit $j$ & $(a_{\ell j},b_{\ell j},c_{\ell j})$\\\hline
1 & 1 & $(-0.846651,\ 1.045827,\ -3.103113)$\\
1 & 2 & $(-1.258829,\ -2.482732,\ -2.932261)$\\
1 & 3 & $( 2.289262,\ -0.126380,\ -0.101227)$\\
2 & 1 & $(-2.266624,\ -2.170051,\  0.240365)$\\
2 & 2 & $( 2.092672,\  2.799899,\ -2.845832)$\\
2 & 3 & $(-0.549941,\ -0.809641,\  2.049816)$\\
3 & 1 & $(-2.987084,\ -2.685355,\ -1.727123)$\\
3 & 2 & $(-3.079891,\ -0.795702,\  2.443830)$\\
3 & 3 & $( 1.842244,\ -0.142857,\  2.655990)$\\
4 & 1 & $( 1.360092,\ -0.627407,\ -0.853500)$\\
4 & 2 & $(-3.058065,\ -0.762020,\  0.350864)$\\
4 & 3 & $(-3.004442,\  0.230627,\  1.255897)$
\end{tabular}
\end{ruledtabular}
\end{table}

At $d=8$, $\mathcal N_3=3/2$ and
\begin{equation}
\ket{\phi_3}=\frac1{\sqrt{12}}(-1+2i,1,1,1,1,1,1,1)^T.
\end{equation}
This is the standard Pauli-covariant Hoggar fiducial \cite{ZhuWang2026}.  In
the ancilla-free three-qubit unitary model with arbitrary one-qubit gates and
arbitrary CNOT connectivity, it requires exactly three CNOTs to prepare from a
product state.  Three suffice for any three-qubit pure state
\cite{Znid2008}.  For the lower bound, the unnumbered section
``Classification with respect to $\ket{000}$'' of Ref.~\cite{Znid2008} gives,
for its exact two-CNOT class up to local unitaries and a qubit permutation,
with $\braket a{a^\perp}=0$, the form
\begin{equation}
\cos\alpha\ket a\ket b\ket c+\sin\alpha\ket{a^\perp}\ket{b'}\ket{c'}.
\end{equation}
The zero-CNOT class is the degenerate product
case.  The one-CNOT class is biseparable; Schmidt-decomposing its entangled pair
and, if necessary, relabeling the qubits gives the same orthogonal-factor form.
Hence every state reachable with at most two CNOTs has a representative of
this form.  Tracing out the orthogonal-factor qubit gives
\begin{equation}
\rho_{23}=\cos^2\alpha\ketbra{bc}{bc}
+\sin^2\alpha\ketbra{b'c'}{b'c'},
\end{equation}
a separable mixture.  Thus at least one two-qubit reduced concurrence is zero.  Direct reduction of the Hoggar vector gives
\begin{equation}
C(\rho_{12})=C(\rho_{13})=C(\rho_{23})=\frac{\sqrt2}{3}>0.
\end{equation}
For an analytic check, tracing out one qubit gives the two subnormalized
conditional vectors
\begin{equation}
\begin{aligned}
v_0&=(a,b,b,b),& v_1&=(b,b,b,b),\\
a&=\frac{-1+2i}{\sqrt{12}},& b&=\frac1{\sqrt{12}}.
\end{aligned}
\end{equation}
Their Wootters matrix
\(\tau_{rs}=\langle v_r|(\sigma_y\otimes\sigma_y)|v_s^*\rangle\) is, up to
entrywise conjugation, \(c\left(\begin{smallmatrix}2&1\\1&0\end{smallmatrix}\right)\)
with \(\abs c=\sqrt2/6\).  Its two singular values differ by
\(2\abs c=\sqrt2/3\).  Permutation symmetry of the target gives the same result
for all three pairs.

Concurrence is invariant under local unitaries, and a qubit permutation only
relabels the three reduced pairs, so the preceding obstruction applies to every
representative in the cited at-most-two-CNOT classes.  The positive values
therefore exclude all zero-, one-, and two-CNOT classes.  This proves
CNOT-optimal preparation of the fiducial, not global optimality of a Naimark
unitary.  A separate question is whether the same three-qubit Hoggar SIC POVM
admits an exact six-qubit, strict-unitary Naimark realization with fewer than
six CNOTs under the same gate model.

An analytic three-CNOT preparation follows from the $n=3$ instance of the
general memory-splitting construction in Appendix~\ref{app:path-block-compilation}:
one CNOT prepares the root state $\ket{\xi_{1,2}^{(3)}}$, and two implement the
final $U_{1,1}$ split.  Figure~\ref{fig:hoggar-circuit} gives a separate numerical
three-CNOT compilation of the same fiducial, rather than an expansion of
that analytic circuit.
\begin{center}
\begin{minipage}{\columnwidth}
\centering
\begin{quantikz}[row sep=0.18cm, column sep=0.13cm]
\lstick{$\ket{0}_1$} & \gate{L_{1,1}} & \ctrl{1} & \gate{L_{2,1}} & \qw
& \gate{L_{3,1}} & \ctrl{1} & \gate{L_{4,1}} & \qw \\
\lstick{$\ket{0}_2$} & \gate{L_{1,2}} & \targ{} & \gate{L_{2,2}} & \ctrl{1}
& \gate{L_{3,2}} & \targ{} & \gate{L_{4,2}} & \qw \\
\lstick{$\ket{0}_3$} & \gate{L_{1,3}} & \qw & \gate{L_{2,3}} & \targ{}
& \gate{L_{3,3}} & \qw & \gate{L_{4,3}} & \qw
\end{quantikz}
\captionof{figure}{Numerical ancilla-free three-CNOT compilation of the Hoggar
fiducial.  The preceding analytic argument proves that the CNOT count is
optimal.  Here
$L_{\ell,j}=R_z(a_{\ell j})R_y(b_{\ell j})R_z(c_{\ell j})$, with
$R_z(\vartheta)=e^{-i\vartheta Z/2}$; Table~\ref{tab:hoggar-angles}
lists the angles rounded to six decimals.}
\label{fig:hoggar-circuit}
\end{minipage}
\end{center}

\FloatBarrier
\section{Alternative coherent and adaptive LCU completions}
\label{app:lcu-alternatives}

The heralded branch synthesis of Lemma~\ref{lem:branch-synthesis} can be made
deterministic by one exact amplitude-amplification step.  We give two
implementations of the same identity: a fully unitary circuit and a
measurement-and-feedforward circuit.  We first prove the logical identity,
then compile its two selective phases in the two models.

\noindent\textit{Logical amplitude amplification.---}
Lemma~\ref{lem:branch-synthesis} gives
\begin{align}
\ket G&:=\ket0_F\ket{\phi_n}_A\ket{\mathbf0}_W,\nonumber\\
\ket v&:=V_n\ket{0^{2n}}
=\sqrt{p_n}\ket G+\sqrt{1-p_n}\ket B,
\label{eq:appendix-good-bad}
\end{align}
where $\ket B$ is the normalized $F=1$ branch and is orthogonal to $\ket G$.
The task is to remove its $\ket B$ amplitude coherently.
Figure~\ref{fig:exact-amplification-circuit}(a) previews the five operations
below in their physical time order.

First phase the good branch.  Because the two branches are distinguished by
$F$, define
\begin{equation}
S_G(\varphi)
:=\id_{FAW}+(e^{i\varphi}-1)
(\ketbra{0}{0}_F\otimes\id_{AW}).
\label{eq:SG-projector-form}
\end{equation}
For $p=p_n$ and $z=e^{i\varphi}$, the state becomes
$S_G(\varphi)\ket v=z\sqrt p\ket G+\sqrt{1-p}\ket B$; its overlap with
$\ket v$ is $1+p(z-1)$.

Next undo the preparation with $V_n^\dagger$, phase only its all-zero input,
and reapply $V_n$.  The required input phase is
\begin{equation}
S_0^{FAW}(\varphi)
:=\id_{FAW}+(e^{i\varphi}-1)
\ketbra{0^{2n}}{0^{2n}}.
\label{eq:S0-projector-form}
\end{equation}
Conjugating the input phase by $V_n$ gives
\begin{equation}
R_v(\varphi):=V_nS_0^{FAW}(\varphi)V_n^\dagger
=\id_{FAW}+(z-1)\ketbra{v}{v}.
\label{eq:conjugated-input-phase}
\end{equation}
Thus the last three physical operations phase the prepared direction
$\ket v$.  Applying them to the state above, the remaining bad-branch
coefficient is
\begin{align}
c_B&:=\bra B R_v(\varphi)S_G(\varphi)\ket v\nonumber\\
&=\sqrt{1-p}\{1+(z-1)[1+p(z-1)]\}\nonumber\\
&=\sqrt{1-p}\,z[1-4p\sin^2(\varphi/2)].
\label{eq:amplified-bad-coefficient}
\end{align}
The bad branch therefore vanishes if
\begin{equation}
\varphi_n:=2\arcsin\frac{1}{2\sqrt{p_n}},
\qquad
4p_n\sin^2(\varphi_n/2)=1.
\label{eq:appendix-lcu-phases}
\end{equation}
This phase is real whenever $p_n\geq1/4$, in particular under our stronger
bound $p_n\geq3/8$.  At $\varphi=\pi$, $S_G$ and $S_0$ are the two usual
Grover sign flips; the matched non-$\pi$ phase in
Eq.~\eqref{eq:appendix-lcu-phases} makes the single step exact
\cite{Hoyer2000}.  Since $c_B=0$ and the circuit is unitary, only a normalized
good component remains.  Hence, up to an irrelevant global phase,
\begin{align}
&\Bigl(V_nS_0^{FAW}(\varphi_n)V_n^\dagger\Bigr)
S_G(\varphi_n)V_n\ket{0^{2n}}\nonumber\\
&\hspace{4em}=e^{i\gamma_n}\ket G.
\label{eq:appendix-exact-lcu}
\end{align}

\noindent\textit{Unitary compilation of the selective phases.---}
The two selective phases have sharply different circuit costs.  The good-branch
phase $S_G$ is local because the branch label is stored in $F$.  Let
$P(\varphi):=\operatorname{diag}(1,e^{i\varphi})$.  This ordinary phase gate
phases $\ket1$, whereas $S_G$ must phase $F=0$.  Since $X$ exchanges $\ket0$
and $\ket1$,
\begin{equation}
\begin{aligned}
XP(\varphi)X
&=\operatorname{diag}(e^{i\varphi},1)
=\id+(e^{i\varphi}-1)\ketbra{0}{0},\\
S_G(\varphi)
&=[X_FP_F(\varphi)X_F]\otimes\id_{AW}.
\end{aligned}
\label{eq:SG-local-form}
\end{equation}
Thus Eq.~\eqref{eq:SG-projector-form} requires only local gates on $F$ and no
entangling gate.

By contrast, $S_0^{FAW}$ must coherently phase one joint basis state,
$\ket{0^{2n}}$.  Every $V_n$-type block begins and ends with
$W=\ket{\mathbf0}_W$, so immediately before $S_0$ it is enough to recognize
the all-zero string on $FA$.

Following the standard generalized multi-controlled-unitary notation
\cite{Barenco1995}, $\Lambda_R[U_T]$ denotes the gate that applies $U$ to a
target $T$ exactly when all $r=|R|$ control qubits are one:
\begin{equation}
\begin{aligned}
\Lambda_R[U_T]
&:={\bigl(\id_R-\ketbra{1^{r}}{1^{r}}_R\bigr)}\otimes\id_T\\
&\quad+\ketbra{1^{r}}{1^{r}}_R\otimes U_T,
\qquad r=|R|.
\end{aligned}
\label{eq:generalized-multi-control}
\end{equation}
For $U=X$, this is the generalized Toffoli gate
\cite{NieZiSun2024,KhattarGidney2025}.  With
$X_{FA}:=X_F\otimes X_A^{\otimes n}$, the required all-zero phase is therefore
\begin{align}
S_0^{FA}(\varphi)
&=X_{FA}\Lambda_A[P_F(\varphi)]X_{FA}\nonumber\\
&=\id_{FA}+(e^{i\varphi}-1)\ketbra{0^{n+1}}{0^{n+1}}.
\label{eq:S0-multicontrolled-form}
\end{align}
From right to left, the first $X_{FA}$ converts the zero controls to one
controls, $\Lambda_A[P_F(\varphi)]$ phases the component for which all $FA$
qubits are one, and the final $X_{FA}$ restores the basis labels.  Unlike the
local $S_G$, this is an $n$-controlled phase, i.e., a general
multicontrolled one-qubit unitary.  Exact general constructions attain
$O(n)$ size and $O(\log n)$ depth with one clean ancilla
\cite{NieZiSun2024}, while recent gate-count-oriented compilations provide
linear-size alternatives for general multicontrolled $U(2)$ gates
\cite{ZindorfBose2025}.  Here the $n-1$ clean qubits of $W$ are already
available, so we use the balanced-AND realization below; it keeps both the
logarithmic depth and an explicit CNOT ledger.  The circuit computes the joint
predicate, applies the phase, and uncomputes the predicate.  Measuring the
predicate would instead collapse the superposition needed for the cancellation
in Eq.~\eqref{eq:appendix-exact-lcu}.

No new tree register is needed: reuse the $n-1$ clean qubits of $W$ as the
internal nodes of a balanced AND tree with leaves $A_1,\ldots,A_n$.  Only the
root $w_\star$ enters the final controlled phase; the other $n-2$ work qubits
store intermediate conjunctions that permit logarithmic-depth computation and
uncomputation.  Let $U_{\wedge}$ denote this reversible AND-tree computation.
Then
\begin{align}
&X_{FA}U_{\wedge}^{\dagger}
\operatorname{CP}_{w_\star,F}(\varphi)
U_{\wedge}X_{FA}
\ket\psi_{FA}\ket{\mathbf0}_W\nonumber\\
&=S_0^{FA}(\varphi)\ket\psi_{FA}\ket{\mathbf0}_W,
\label{eq:S0-and-tree-form}
\end{align}
Here $\operatorname{CP}_{w_\star,F}(\varphi)$ applies $P_F(\varphi)$ when
$w_\star=1$; equivalently it is
$\operatorname{diag}(1,1,1,e^{i\varphi})_{w_\star,F}$.  This gate is symmetric
under exchanging its two qubits; we write the order $(w_\star,F)$ to emphasize
that $w_\star$ stores the AND result and controls the phase $P_F(\varphi)$ on
$F$.
Figure~\ref{fig:exact-amplification-circuit}(b) shows $U_\wedge$ computing
$w_\star=A_1\wedge\cdots\wedge A_n$, the controlled phase acting on
$(w_\star,F)$, and $U_\wedge^\dagger$ clearing $W$.
Thus the phase is applied exactly when all flipped $FA$ qubits are one.

\begin{figure}[t]
\centering
\includegraphics[width=\columnwidth]{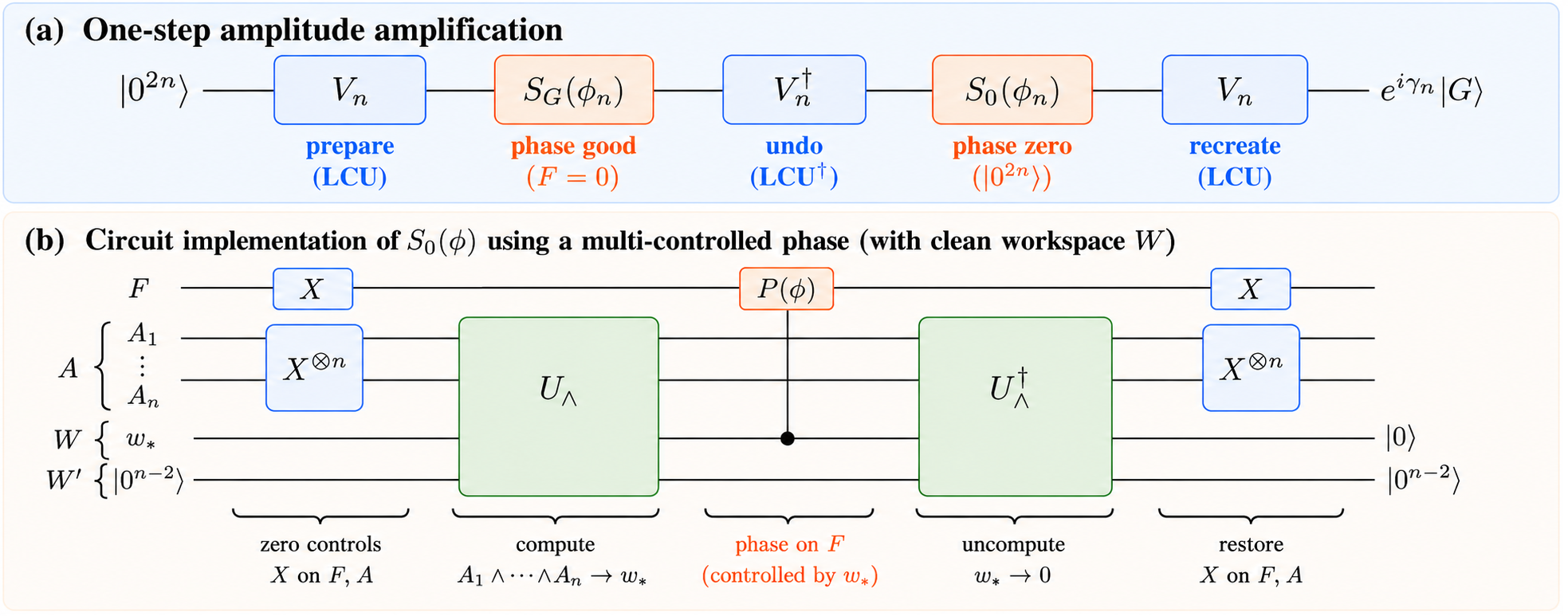}
\caption{Circuit view of exact amplitude amplification and its only nonlocal
selective phase.  Panel (a) shows Eq.~\eqref{eq:appendix-exact-lcu} in temporal
order.  Panel (b) expands $S_0$ on the invariant subspace with clean $W$:
$U_\wedge$ stores the conjunction in $w_\star$, which controls a phase on $F$,
and $U_\wedge^\dagger$ restores all of $W$ to zero.}
\label{fig:exact-amplification-circuit}
\end{figure}

For the resource ledger, let $L=\lceil\log_2n\rceil$.  The three $V_n$-type calls contribute
$3(3n-2)$ CNOTs and depth $3(2L+1)$; $S_G$ contributes no CNOT.  The AND tree
and its inverse contain $2(n-1)$ Toffolis, and the controlled phase contributes
two CNOTs.  Using an exact six-CNOT Toffoli gives
\begin{equation}
\begin{aligned}
C_{\rm prep}&=3(3n-2)+6[2(n-1)]+2=21n-16,\\
D_{\rm prep}&=3(2L+1)+2(6L)+2=18L+5.
\end{aligned}
\end{equation}
The initialized auxiliary register is only $FW$, of size
$1+(n-1)=n$; including $A$, the preparation width is $2n$.  These are
transparent standard-library upper bounds, not optimized counts.

\noindent\textit{Deterministic adaptive compilation.---}
We now prove Proposition~\ref{prop:deterministic-adaptive}.  In the local
alternating quantum--classical computation (LAQCC) model
\cite{Buhrman2024}, we trade initialized auxiliary qubits for depth:
measurement and feedforward compile the same exact five-block identity,
Eq.~\eqref{eq:appendix-exact-lcu}, from $O(\log n)$ coherent depth to $O(1)$
adaptive quantum depth.  This conservative construction uses $O(n\log n)$
initialized auxiliary qubits and $O(n\log n)$ gates.  No logical block is
changed, and no measurement outcome is postselected.

Among the five blocks, only the three $V_n$-type occurrences---$V_n$,
$V_n^\dagger$, and $V_n$---and the phase $S_0$ require nonlocal coherent
control; $S_G$ is local.  Each $V_n$-type block reduces to one exact dynamic
common-control fanout \cite{Baumer2025}, together with parallel one-qubit
gates, and therefore has constant adaptive quantum depth.  After this
replacement, the only remaining nonlocal operation is $S_0$.

As in the unitary compilation above, $W$ is in $\ket{\mathbf0}_W$ whenever
$S_0^{FAW}$ is applied.  The full-register phase therefore reduces on this
invariant subspace to the all-zero phase on $FA$.  We implement it without
measuring whether $FA$ is zero: coherently compute this predicate, apply a
phase, and erase the computation.  Introduce a clean marker qubit $T$ and
define
\begin{equation}
U_{\rm zero}:=X_{FA}\Lambda_{FA}[X_T]X_{FA}.
\label{eq:adaptive-zero-marker-mcx}
\end{equation}
This is an all-zero-controlled generalized Toffoli: for
$x\in\{0,1\}^{n+1}$,
\begin{equation}
U_{\rm zero}\ket{x}_{FA}\ket{t}_T
=\ket{x}_{FA}\ket{t\mathbin\oplus\mathbf1_{x=0}}_T .
\label{eq:adaptive-zero-marker}
\end{equation}
An exact measurement-and-feedback formulation of the multicontrolled-$X$
gate is summarized in Theorem~3 of Ref.~\cite{Zi2025}, following
Ref.~\cite{Takahashi2016}.
Since $\mathbf1_{x=0}=1\mathbin\oplus\operatorname{OR}(x)$, implementing
$U_{\rm zero}$ is equivalent, up to a local $X_T$, to computing a reversible
OR marker.  The OR primitive is only the physical compiler for the same
logical operation $U_{\rm zero}$.  The exact quantum OR construction of
Ref.~\cite{Takahashi2016} and its local adaptive compilation in
Ref.~\cite{Buhrman2024} provide such a measurement-and-feedforward
implementation.  Each marker call has $O(1)$ adaptive quantum depth, a
constant number of measurement rounds, and $O(\log n)$ classical feedforward
depth, using $O(n\log n)$ initialized internal auxiliaries and gates.  This
internal workspace---not the single logical marker $T$---accounts for the
quoted width.  Only the internal auxiliary qubits are measured: $T$, and hence
the value of the all-zero predicate, is kept coherent.  Feedforward applies
the outcome-dependent corrections, so the net operation on $FAT$ is exactly
$U_{\rm zero}$.

The selective phase is now obtained in three steps: compute the marker with
$U_{\rm zero}$, apply
$P_T(\varphi)=\operatorname{diag}(1,e^{i\varphi})_T$, and uncompute with
$U_{\rm zero}^\dagger=U_{\rm zero}$.  Explicitly,
\begin{align}
&U_{\rm zero}^{\dagger}
\bigl(\id_{FA}\otimes P_T(\varphi)\bigr)U_{\rm zero}
\ket{\psi}_{FA}\ket0_T\nonumber\\
&\qquad={}
\left[\id_{FA}+(e^{i\varphi}-1)\ketbra{0^{n+1}}{0^{n+1}}\right]
\ket{\psi}_{FA}\ket0_T .
\label{eq:adaptive-zero-phase}
\end{align}
Thus $T$ is returned to zero, while the all-zero component of $FA$ alone
acquires $e^{i\varphi}$.  This is exactly $S_0$ on the invariant subspace;
the predicate is neither observed nor postselected.

The full circuit therefore contains three constant-depth dynamic $V_n$-type
calls, the local phase $S_G$, and two constant-depth marker calls surrounding
one local phase.  The marker construction dominates the conservative resource
bound.  Because the constant number of calls reuse its internal workspace, the
complete preparation has $O(n\log n)$ peak initialized width and gate count,
$O(1)$ adaptive quantum depth, a constant number of measurement rounds, and
$O(\log n)$ classical feedforward depth.  Because every corrected block is
exact, the cancellation
in Eq.~\eqref{eq:appendix-exact-lcu} still gives $\ket G$ with unit probability:
there is no heralding measurement and no retry tail.  This is the deterministic
adaptive entry in Table~\ref{tab:hierarchy}.  All auxiliary-side processing is
completed before coupling the prepared fiducial to the unknown state.

\FloatBarrier

\end{document}